\newcommand\addvmargin[1]{
  \node[fit=(current bounding box),inner ysep=#1,inner xsep=0]{};
}
\theoremstyle{plain}
\newtheorem{thm}{Theorem} 
\newtheorem{cor}[thm]{Corollary}
\newtheorem{lem}[thm]{Lemma}
\newtheorem{obs}[thm]{Observation}
\newtheorem{prop}[thm]{Proposition}
\theoremstyle{definition}
\newtheorem{defn}[thm]{Definition}
 \newcommand{\Zivny}{{\v{Z}}ivn{\'y}}
 \def\Nesetril{Ne\v set\v ril}
 \def\Diaz{D\'iaz} 
\def\Lovasz{Lov{\'{a}}sz}
\newcommand{\Hom}[1]{\text{\#Hom$(#1)$}}
\newcommand{\Ret}[1]{\text{\#Ret$(#1)$}}
\newcommand{\SHom}[1]{\text{\#SHom$(#1)$}}
\newcommand{\Comp}[1]{\text{\#Comp$(#1)$}}
\newcommand{\LHom}[1]{\text{\#LHom$(#1)$}}
\newcommand{\LSHom}[1]{\text{\#LSHom$(#1)$}}
\newcommand{\LComp}[1]{\text{\#LComp$(#1)$}}
\newcommand{\DHom}[1]{\text{Hom$(#1)$}}
\newcommand{\DRet}[1]{\text{Ret$(#1)$}}
\newcommand{\DSHom}[1]{\text{SHom$(#1)$}}
\newcommand{\DComp}[1]{\text{Comp$(#1)$}}
\newcommand{\DLHom}[1]{\text{LHom$(#1)$}}
\newcommand{\SubSum}{\text{\#SubsetSum}}
\newcommand{\UniHom}{\text{Uniform\#HomToCliques}}
\newcommand{\UniSHom}{\text{Uniform\#SHomToCliques}}
\newcommand{\cHom}[1]{\text{\#Hom\textsuperscript{C}$(#1)$}}
\newcommand{\cComp}[1]{\text{\#Comp\textsuperscript{C}$(#1)$}}
\newcommand{\cLHom}[1]{\text{\#LHom\textsuperscript{C}$(#1)$}}
\newcommand{\cGSHom}[1]{\text{\#GraphSetHom\textsuperscript{C}$(#1)$}}
\newcommand{\prob}[3]{
  \begin{description}
    \item[\bf Name.] #1
    \vspace{-1ex}
    \item[\bf Input.] #2  
        \vspace{-1ex}
    \item[\bf Output.] #3
  \end{description}
}
\newcommand{\boldA}{\mathbf{A}}
\newcommand{\calA}{\mathcal{A}}
\newcommand{\boldb}{\mathbf{b}}
\newcommand{\calB}{\mathcal{B}}
\newcommand{\calC}{\mathcal{C}}
\newcommand{\calH}{\mathcal{H}}
\newcommand{\boldS}{\mathbf{S}}
\newcommand{\calS}{\mathcal{S}}
\newcommand{\boldv}{\mathbf{v}}
\newcommand{\boldx}{\mathbf{x}}
\newcommand{\Z}{\mathbb{Z}}
\newcommand{\FP}{\mathrm{FP}}
\newcommand{\numP}{\#\mathrm{P}}
\newcommand{\bis}{\#\text{BIS}}
\newcommand{\leap}{\le_\mathrm{AP}}
\newcommand{\abs}[1]{\left\vert #1 \right\vert}
\DeclareRobustCommand{\stirling}{\genfrac\{\}{0pt}{}}
\newcommand*\from{\colon}
\newlength{\templength}
\let\originalleft\left
\let\originalright\right
\renewcommand{\left}{\mathopen{}\mathclose\bgroup\originalleft}
\renewcommand{\right}{\aftergroup\egroup\originalright}
\renewcommand{\hom}[3][]{{N^{#1}\bigl(#2 \rightarrow #3\bigr)}}
\newcommand{\sur}[2]{\hom[\text{sur}]{#1}{#2}}
\newcommand{\comp}[2]{\hom[\text{comp}]{#1}{#2}}
\newcommand{\inj}[2]{\hom[\text{inj}]{#1}{#2}}
\newcommand{\ret}[2]{\hom[\text{ret}]{#1}{#2}}
\title{The Complexity of Counting Surjective Homomorphisms and Compactions}
\author{Jacob Focke, Leslie Ann Goldberg and  Stanislav \Zivny \thanks{
To appear in SIDMA. A short version of this paper (without the proofs) appeared in the proceedings of SODA 2018~\cite{FGZ2018}.
The research leading to these results has received funding from 
the European Research Council under the European Union's Seventh Framework Programme (FP7/2007-2013) ERC grant agreement no.\ 334828 and under the European Union's Horizon 2020 research and innovation programme (grant agreement No 714532). Jacob Focke has received funding from the Engineering and Physical Sciences Research Council (grant ref: EP/M508111/1). Stanislav \Zivny\ was supported by a Royal Society University Research Fellowship. The paper 
reflects only the authors' views and not the views of the ERC or the European Commission. The European Union is not liable for any use that may be made of the information contained therein.}}  
\date{9 April 2019} 
\begin{document}
\maketitle

\begin{abstract}

A homomorphism from a graph $G$ to a graph $H$ is a function from the vertices of $G$ to the vertices of $H$
that preserves edges. 
A homomorphism is \emph{surjective} if it uses all of the vertices of~$H$ and
it is a \emph{compaction} if it uses all of the vertices of~$H$ and all of the non-loop edges of~$H$.
Hell and \Nesetril\ gave  a complete characterisation of the complexity of
deciding whether there is a homomorphism from an input graph~$G$ to a fixed graph~$H$.
A complete characterisation is not known for surjective homomorphisms or for compactions, though there
are many interesting results.
Dyer and Greenhill gave a complete characterisation of the complexity of
counting homomorphisms from an input graph~$G$ to a fixed graph~$H$.
In this paper, we give a complete characterisation of the complexity of
counting surjective homomorphisms from an input graph~$G$ to a fixed graph~$H$
and we also give a complete characterisation of the complexity of
counting compactions from an input graph~$G$ to a fixed graph~$H$.
In an addendum we use our characterisations to point out a dichotomy for the complexity of the respective approximate counting problems (in the connected case). 
\end{abstract}

\section{Introduction} 

A homomorphism from a graph~$G$ to a graph~$H$
is a function from $V(G)$ to $V(H)$ that preserves edges. That is, the function maps every edge of~$G$ to an edge of~$H$.
Many structures in graphs, such as proper colourings, independent sets, and generalisations of these,
can be represented as homomorphisms, so the study of graph homomorphisms
has a long history in 
combinatorics~\cite{BorgsChayesLovaszCounting,BW,HellMiller,HNOld,HellNesetrilBook,LovaszBook}.

Much of the work on this problem is algorithmic in nature.  
A very important early work is Hell and \Nesetril's paper~\cite{HellNesetrilOriginal},
which gives a complete characterisation
of the complexity of the following decision problem, parameterised by a fixed graph~$H$:
``Given an input graph~$G$, determine whether there is a homomorphism from~$G$ to~$H$.''
Hell and \Nesetril\ showed that this problem can be solved in polynomial time if~$H$ has a loop or is loop-free and bipartite. They showed that it is NP-complete otherwise.
An important generalisation of the homomorphism decision problem is the
list-homomorphism decision problem.
Here, in addition to the graph~$G$, the input specifies, for each vertex $v$ of~$G$,
a list $S_v$ of permissible vertices of~$H$.
The problem is to determine whether there is a homomorphism from~$G$ to~$H$
that maps each vertex $v$ of~$G$ to a vertex in~$S_v$.
Feder, Hell and Huang~\cite{FHH} gave a complete characterisation of the
complexity of this problem.  This problem can be solved in polynomial time if~$H$ is a so-called bi-arc graph, 
and it is NP-complete otherwise.

More recent work has restricted attention to homomorphisms with certain properties.
A function from $V(G)$ to $V(H)$ is \emph{surjective}
if every element of $V(H)$ is  the image of at least one element of $V(G)$.
So a homomorphism from~$G$ to~$H$ is surjective if every vertex of~$H$ is ``used'' by the homomorphism.
There is still no complete characterisation 
of the complexity of determining whether there is a surjective homomorphism from an input graph~$G$ to
a graph~$H$, despite an impressive collection of results 
\cite{BodirskySurvey,GolovachNewHardness,GolovachFinding,GolovachTrees,MartinPaulusmaSHomC4}.
A homomorphism from $V(G)$ to $V(H)$ is a \emph{compaction} if
it uses every vertex of~$H$ and also every non-loop edge of~$H$ (so it is surjective both on $V(H)$ and on 
the non-loop edges in~$E(H)$).
Compactions have been studied under the name
``homomorphic image''~\cite{HellMiller,HellNesetrilBook}
and even under the name ``surjective homomorphism''~\cite{hombasis,LovaszBook}.
Once again, despite much work \cite{BodirskySurvey,VikasSICOMPCompl,Vikas2004,Vikas2004b,Vikas2005,Vikas2013}, there is still no characterisation
of the complexity of determining whether there is a compaction from an input graph~$G$ to
a graph~$H$.  
 
Dyer and Greenhill~\cite{DG} initiated the algorithmic study of \emph{counting} homomorphisms.
They gave a complete characterisation of the graph homomorphism counting problem, parameterised by a fixed graph~$H$:
``Given an input graph~$G$, determine how many homomorphisms there are from~$G$ to~$H$.''
Dyer and Greenhill showed that this problem can be solved in polynomial time if every component of~$H$
is a clique with all loops present or a biclique (complete bipartite graph) with no loops present. Otherwise,
the counting problem is $\numP$-complete. \Diaz, Serna and Thilikos \cite{Diaz2004}
and Hell and \Nesetril~\cite{Hell2004b} have shown that the same dichotomy characterisation holds for 
the problem of counting 
list homomorphisms.

The main contribution of this paper is to give complete dichotomy characterisations for the problems of
counting compactions and surjective homomorphisms.
Our main theorem, Theorem~\ref{thm:CompDichotomy}, shows that the characterisation for compactions is
different from the characterisation for counting homomorphisms.
If every component of $H$ is (i) a star with no loops present, (ii) a single vertex with a loop, or (iii) a single edge with two loops
then counting compactions to~$H$ is solvable in polynomial time. Otherwise, it is $\numP$-complete.
We also obtain the same dichotomy for the problem of counting list compactions.
Thus, even though the decision problem is still open for compactions, 
our theorem gives a complete classification of
the complexity of the corresponding 
counting problem.

There is evidence that computational problems involving  surjective homomorphisms
are more difficult than those involving (unrestricted) homomorphisms. 
For example, suppose that $H$ consists of a 3-vertex clique with no loops together with a single looped
vertex. As~\cite{BodirskySurvey} noted,
the problem of deciding whether there is a homomorphism from a loop-free input graph~$G$ to~$H$
is trivial (the answer is yes, since all vertices of~$G$ may be mapped to the loop) but 
the problem of determining whether there is a surjective homomorphism from a loop-free input graph~$G$ to~$H$
is NP-complete. (To see this, recall 
the NP-hard problem of determining whether a connected loop-free graph~$G'$ 
that is not bipartite is
$3$-colourable. Given such a graph $G'$, we may determine whether
it is 3-colourable by letting $G$ consist of the disjoint union of $G'$ and a loop-free  clique of size~$4$,
and then checking whether there is a surjective homomorphism from $G$ to~$H$.)
There is also evidence that \emph{counting} problems involving surjective homomorphisms
are more difficult than those involving unrestricted homomorphisms.
In Section~\ref{sec:UniSHom} 
we consider a \emph{uniform} homomorphism-counting problem 
where all connected components of $G$ are  cliques without loops and
all connected components of $H$ are  cliques with loops,
but both $G$ and $H$ are part of the input.
It turns out (Theorem~\ref{thm:Uniform}) that in this uniform case, counting homomorphisms is in $\FP$
but counting surjective homomorphisms is $\numP$-complete.
Despite this evidence,  we show (Theorem~\ref{thm:SHom}) that the
problem of counting surjective homomorphisms to a fixed graph~$H$ has the same complexity characterisation 
as the problem of counting all homomorphisms to~$H$: The problem is solvable in polynomial time if
every component of~$H$ is a  clique with loops or a  biclique without loops. Otherwise, 
it is $\numP$-complete. 
Once again, our dichotomy characterisation extends to the problem
of counting surjective list homomorphisms.
Even though the decision problem is still open for surjective homomorphisms, our theorem gives a complete
complexity classification of the corresponding counting problem.

In Section~\ref{sec:retract} we will introduce
one more related counting problem --- the problem of counting \emph{retractions}.
Informally, if $G$ is a graph containing an induced copy of~$H$ then a retraction from~$G$ to~$H$
is a homomorphism from~$G$ to~$H$ that maps the induced copy to itself.
Retractions are well-studied in combinatorics, often from an algorithmic perspective~\cite{BodirskySurvey,FHRetractions,FHH,FHJKN,Vikas2004,Vikas2005}.
A complexity classification is not known for the decision problem
(determining whether there is a retraction from an input to~$H$).
Nevertheless, it is easy to give a complexity characterisation for the corresponding counting problem (Corollary~\ref{cor:retdichotomy}).
This characterisation, together with our main results,
implies that a long-standing conjecture of Winkler about the complexity of the decision problems for compactions and retractions
is false in the counting setting. See Section~\ref{sec:retract} for details.

Finally, in an addendum to this work, we address the relaxed   versions 
of the counting problems where the goal is to \emph{approximately} count surjective homomorphisms, compactions and retractions. We use our theorems to give a complexity dichotomy in the connected case for all three of these problems.

\subsection{Notation and Theorem Statements}\label{sec:Thms}

In this paper graphs are undirected
and may contain loops.
A homomorphism from a graph~$G$ to 
a  graph~$H$ is a function $h \from V(G) \to V(H)$ 
such that, for all $\{u,v\} \in E(G)$, the image $\{h(u),h(v)\}$ is in $E(H)$.
We use $\hom{G}{H}$ to denote the number of homomorphisms from~$G$ to~$H$.
A homomorphism~$h$ is said to ``use'' a vertex $v\in V(H)$ if there
is a vertex $u\in V(G)$ such that $h(u)=v$.
It is \emph{surjective} if it uses every vertex of~$H$.
We use $\sur{G}{H}$ to denote the number of surjective homomorphisms from~$G$ to~$H$.
A homomorphism~$h$ is said to use an edge $\{v_1,v_2\} \in E(H)$ if
there is an edge $\{u_1,u_2\}\in E(G)$ such that $h(u_1)=v_1$ and $h(u_2)=v_2$.
It is a \emph{compaction} if it uses every vertex of~$H$ and every non-loop edge of~$H$.
We use $\comp{G}{H}$ to denote the number of compactions from~$G$ to~$H$.
$H$ is said to be  \emph{reflexive} if every vertex  has a loop.
It is said to be \emph{irreflexive} if no vertex   has a loop.
We study the following computational problems\footnote{The reason that the input graph $G$ is restricted to be irreflexive in these problems, 
but that $H$ is not restricted,
is that this is the convention in the literature.
Since our results will be complexity classifications, parameterised by~$H$, we strengthen the results by avoiding restrictions on~$H$.
Different conventions are possible regarding~$G$,  but hardness results are typically the most difficult part of the complexity classifications in this area, so restricting $G$ leads to technically-stronger results.}, which are parameterised by a graph~$H$.\bigskip

\noindent\begin{tabular}{lll} 
	\textbf{Name:} $\Hom{H}$. & \textbf{Name:} $\Comp{H}$. & \textbf{Name:}  $\SHom{H}.$\tabularnewline[-0.05cm]
	\textbf{Input:}  Irreflexive graph $G$. & \textbf{Input:} Irreflexive graph $G$. & \textbf{Input:}  Irreflexive graph $G$. \tabularnewline[-0.05cm]
	\textbf{Output:} $ \hom{G}{H}$. & \textbf{Output:}  $\comp{G}{H}$.  & \textbf{Output:}  $\sur{G}{H}$. \tabularnewline[-0.05cm]
\end{tabular}\bigskip
 
A \emph{list homomorphism} generalises a homomorphism in the same way that a list colouring of a graph generalises
a (proper) colouring.  
Suppose that $G$ is an irreflexive graph and that $H$ is a graph.
Consider a collection of sets
$\boldS=\{S_v\subseteq V(H)\,:\, v\in V(G)\}$
A \emph{list homomorphism} from
$(G,\boldS)$ to $H$
is a homomorphism~$h$ from~$G$ to~$H$
such that, for every vertex $v$ of $G$, $h(v)\in S_v$. 
The set $S_v$ is referred to as a ``list'', specifying the allowable targets of vertex~$v$.
We use $\hom{(G,\boldS)}{H}$ to denote the number of list homomorphisms from~$(G,\boldS)$ to~$H$,
$\sur{(G,\boldS)}{H}$ to denote the number of surjective list homomorphisms from~$(G,\boldS)$ to~$H$
and $\comp{(G,\boldS)}{H}$ to denote the number of list homomorphisms from~$(G,\boldS)$ to~$H$  that
are compactions. We study the following additional computational problems, again parameterised by a graph~$H$.
\bigskip

\noindent\begin{tabular}{lll} 
\textbf{Name:} $\LHom{H}$. & \textbf{Name:} $\LComp{H}$. & \textbf{Name:}  $\LSHom{H}$.
\tabularnewline[-0.05cm]
\textbf{Input:}  Irreflexive graph $G$ & \textbf{Input:} Irreflexive graph $G$ & \textbf{Input:}  Irreflexive graph $G$
\tabularnewline[-0.05cm]
and a collection of lists & and a collection of lists & and a collection of lists
\tabularnewline[-0.05cm]
	$\boldS=\{S_v\subseteq V(H)\,:\, v\in V(G)\}$. &
		$\boldS=\{S_v\subseteq V(H)\,:\, v\in V(G)\}$. &
			$\boldS=\{S_v\subseteq V(H)\,:\, v\in V(G)\}$.
\tabularnewline[-0.05cm]
\textbf{Output:} $ \hom{(G,\boldS)}{H}$. & \textbf{Output:}  $\comp{(G,\boldS)}{H}$.  & \textbf{Output:}  $\sur{(G,\boldS)}{H}$. \tabularnewline[-0.05cm]
\end{tabular}\bigskip

In order to state our theorems, we define some classes of graphs.
A graph~$H$ is a \emph{clique} if, for every pair $(u,v)$ of distinct vertices, 
$E(H)$ contains the edge $\{u,v\}$. 
(Like other graphs, cliques may contain loops but not all loops need to be present.)
$H$ is a \emph{biclique} if it is bipartite (disregarding any loops) and there is a partition of $V(H)$ into two disjoint sets~$U$ and~$V$
such that, for every $u\in U$ and $v\in V$, $E(H)$ contains the edge $\{u,v\}$.
A biclique is a \emph{star} if $|U|=1$ or $|V|=1$ (or both).
Note that a star may  have only one vertex since, for example, we could have $|U|=1$ and $|V|=0$.
We sometimes use the notation $K_{a,b}$ to denote an irreflexive biclique whose vertices can be partitioned into~$U$ and~$V$
with $|U|=a$ and $|V|=b$.
The size of a graph is the number of vertices that it has.
We can now state the theorem of Dyer and Greenhill~\cite{DG}, as extended to list homomorphisms
by \Diaz, Serna and Thilikos \cite{Diaz2004}
and Hell and \Nesetril~\cite{Hell2004b}.

\begin{thm}[Dyer, Greenhill]\label{thm:DGorig}
Let $H$ be a graph. If every connected component of $H$ is  a reflexive  clique or an irreflexive  biclique, 
then $\Hom{H}$ and $\LHom{H}$ are in $\FP$. 
Otherwise, $\Hom{H}$ and $\LHom{H}$ are $\numP$-complete.
\end{thm}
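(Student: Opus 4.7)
The plan is to prove the tractable and hard directions separately, following the strategy of the original proof by Dyer and Greenhill~\cite{DG} for $\Hom{H}$, together with the extensions of Hell and \Nesetril~\cite{Hell2004b} and \Diaz, Serna and Thilikos~\cite{Diaz2004} for $\LHom{H}$.

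\emph{Tractable direction.} Since each connected component of the irreflexive input $G$ has a connected image under any homomorphism, every component of $G$ maps into a single component of $H$. Thus $\hom{G}{H}$ factorises as a sum, over all assignments of components of $G$ to components of $H$, of a product of component-wise counts, and it suffices to handle the case in which $H$ is a single reflexive clique or a single irreflexive biclique. If $H$ is a reflexive clique on $n$ vertices then every function $V(G) \to V(H)$ is a homomorphism---non-loop edges of $G$ map to non-loop edges of $H$ when endpoints differ and to loops when they coincide---so the count equals $n^{|V(G)|}$, and the list variant is $\prod_{v \in V(G)} \abs{S_v \cap V(H)}$. If $H = K_{a,b}$ then a homomorphism corresponds to a proper $2$-colouring of $G$ together with an image within the chosen side of $H$; each bipartite component of $G$ admits exactly two $2$-colourings (non-bipartite components contribute $0$), and the list variant is handled identically by enumerating at most two side assignments per component. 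All of these counts are polynomial-time computable.

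\emph{Hard direction.} If some connected component $H'$ of $H$ is neither a reflexive clique nor an irreflexive biclique, the goal is to establish that $\Hom{H'}$ (and its list variant) is $\numP$-hard and then transfer the hardness to $H$. For $\LHom{H}$ the transfer is immediate: set every list to $V(H')$. For $\Hom{H}$ one follows Dyer and Greenhill's polynomial-interpolation argument, forming disjoint unions of the input with carefully chosen auxiliary graphs so that the resulting homomorphism counts form an invertible Vandermonde-like system whose unknowns are the per-component contributions. It then remains to prove hardness when $H$ itself is connected and is neither a reflexive clique nor an irreflexive biclique.

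The core difficulty is this last step. Dyer and Greenhill pass to a twin-free quotient of $H$ (identifying vertices with the same neighbourhood, taking loops into account), which is interreducible with $H$ via multinomial factors, and then encode counting of independent sets in a bipartite graph---a $\numP$-hard problem---through a case analysis on whether the quotient is bipartite and on the pattern of loops, producing in every remaining case a gadget construction showing that homomorphism counts of a suitable family of inputs encode a hard quantity. For the list version this case analysis collapses dramatically, since lists confine attention to a well-chosen pair of vertices of $H$ and reduction gadgets can be built directly; this is the route taken by the authors of~\cite{Diaz2004,Hell2004b}.
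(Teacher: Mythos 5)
You should first note that the paper itself contains no proof of Theorem~\ref{thm:DGorig}: it is imported wholesale from \cite{DG} (for $\Hom{H}$) and from \cite{Diaz2004,Hell2004b} (for $\LHom{H}$), so there is no in-paper argument to compare against. Your tractable direction is correct and essentially complete: counts multiply over components of $G$, each connected component of $G$ maps into a single component of $H$, and the per-component formulas for a reflexive clique ($n^{\abs{V(G_i)}}$, resp.\ $\prod_v \abs{S_v}$) and for $K_{a,b}$ (at most two side-assignments per bipartite component of $G$) are exactly right and polynomial-time computable.

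The hardness direction has two gaps. The smaller one is mechanical: you cannot isolate the contribution of a single component $H'$ of a disconnected $H$ by taking \emph{disjoint unions} of the input with auxiliary graphs, because $\hom{G\oplus F}{H}=\hom{G}{H}\cdot\hom{F}{H}$; every such query is a known multiple of the quantity you already have, the unknowns $\hom{G}{H_j}$ enter only through their sum, and the resulting linear system is degenerate. The working constructions instead \emph{attach} connected gadgets at a distinguished vertex of a connected input, so that the oracle answers become power sums in gadget-dependent quantities and a Vandermonde (or Lov\'asz-style distinguishing) argument applies --- compare Lemma~\ref{lem:Lovasz2} and Theorem~\ref{thm:CompInterpol1} of this paper, or Dyer and Greenhill's pinning argument. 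The larger gap is that the technical core --- $\numP$-hardness of $\Hom{H}$ for a \emph{connected} $H$ that is neither a reflexive clique nor an irreflexive biclique --- is described rather than proved: ``a case analysis \ldots producing in every remaining case a gadget construction'' is precisely the content of \cite{DG}, and no gadget or reduction is actually exhibited. As a summary of the literature your outline is accurate; as a self-contained proof it omits the step where all the work happens.
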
 

We can also state the main results of this paper.

\newcommand{\ThmComp}{
Let $H$ be a graph. If every connected component of $H$ is an irreflexive star
or a reflexive clique of size at most~$2$  then 
$\Comp{H}$ and $\LComp{H}$ are in $\FP$. 
Otherwise, $\Comp{H}$ and $\LComp{H}$ are  $\numP$-complete.
}\begin{thm}\label{thm:CompDichotomy}
\ThmComp
\end{thm}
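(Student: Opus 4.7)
The plan is to handle the tractable and intractable directions separately.

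For the tractable direction, suppose every component of $H$ is an irreflexive star or a reflexive clique of size at most~$2$. By inclusion--exclusion over the vertices and non-loop edges ``missed'' by a homomorphism,
$$\comp{G}{H} \;=\; \sum_{A,\,B} (-1)^{|A|+|B|}\, \hom{G}{H_{A,B}},$$
where $A$ ranges over subsets of $V(H)$, $B$ ranges over subsets of the non-loop edges of $H$, and $H_{A,B}$ is obtained from $H$ by deleting the vertices in $A$ (with incident edges) and additionally deleting the edges in $B$; the analogous identity holds for $\LComp{H}$ once each list is intersected with $V(H_{A,B})$. Since $H$ is fixed, the sum has constantly many terms. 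The structural hypothesis is closed under vertex-deletion and non-loop-edge-deletion, so each $H_{A,B}$ is a disjoint union of sub-stars, isolated vertices, and reflexive cliques of size at most~$2$, all of which fall under Theorem~\ref{thm:DGorig}. Hence each $\hom{G}{H_{A,B}}$ and its list analogue can be computed in polynomial time.

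For the intractable direction, suppose some component of $H$ is neither an irreflexive star nor a reflexive clique of size at most~$2$. I would split into two regimes based on the Dyer--Greenhill dichotomy. In the first regime, some component of $H$ lies outside the Dyer--Greenhill tractable class, so $\Hom{H}$ and $\LHom{H}$ are $\numP$-complete by Theorem~\ref{thm:DGorig}. I would reduce $\LHom{H}$ to $\LComp{H}$ via the decomposition
$$\hom{(G,\boldS)}{H} \;=\; \sum_{V',\,E'} \comp{(G,\boldS \cap V')}{H[V',E']},$$
over choices of image vertex set $V' \subseteq V(H)$ and non-loop image edge set $E'$, combined with an induction on $|V(H)|+|E(H)|$ that simulates compactions to each subgraph $H[V',E']$ using oracle calls to $\LComp{H}$ together with Dyer--Greenhill on the tractable subgraphs.

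In the second regime, every component of $H$ is Dyer--Greenhill tractable but some component is a reflexive clique of size $\ge 3$ or an irreflexive biclique $K_{a,b}$ with $\min(a,b) \ge 2$. This is the main obstacle, because $\Hom{H}$ is in $\FP$ by Theorem~\ref{thm:DGorig}, so hardness cannot be borrowed from homomorphism counting and must be extracted from the strictly more restrictive compaction constraint. For a reflexive clique of size $n \ge 3$, a compaction is a vertex-surjection whose image induces all $\binom{n}{2}$ non-loop pairs; I would reduce from a canonical $\numP$-hard problem (such as counting independent sets in bipartite graphs or a suitable counting constraint satisfaction problem) using gadget constructions that encode forbidden patterns. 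For $K_{a,b}$ with $a,b \ge 2$, both sides of the bipartition must be fully covered, which I expect to yield hardness via an analogous partition-encoding reduction. Finally, to transfer hardness from $\LComp{H}$ to $\Comp{H}$, I would attach list-simulation gadgets on each vertex of $G$ that enforce the prescribed lists while contributing only a fixed multiplicative factor to the compaction count, leveraging the richer structure of $H$ available in the hard regime.
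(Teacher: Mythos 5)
Your tractability argument is correct and is a legitimate alternative to the paper's: instead of the paper's induction on $m(H)$ (which repeatedly subtracts $\comp{(G,\boldS_{H'})}{H'}$ for smaller loop-hereditary subgraphs from the Dyer--Greenhill-computable $\hom{(G,\boldS)}{H}$), you invert the relation in one shot by inclusion--exclusion over missed vertices and missed non-loop edges. Since the class of disjoint unions of irreflexive stars and reflexive cliques of size at most $2$ is closed under these deletions and stays inside the Dyer--Greenhill tractable class, every term $\hom{G}{H_{A,B}}$ is polynomial-time computable and the sum has constantly many terms. That direction is fine.

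The hardness direction, however, has genuine gaps. In your first regime, the relation $\hom{(G,\boldS)}{H}=\sum_{H'}\comp{(G,\boldS_{H'})}{H'}$ does not by itself yield a reduction: the oracle you have is only for the fixed target $H$, and if you shrink the lists to force the image into a proper subgraph $H'$ then the compaction count \emph{with respect to $H$} is simply $0$ -- it does not become a compaction count with respect to $H'$. The proper subgraphs $H[V',E']$ can themselves be Dyer--Greenhill-hard, so "Dyer--Greenhill on the tractable subgraphs" does not close the induction. The paper's actual mechanism (Lemmas~\ref{lem:HomToComp1} and~\ref{lem:HomToComp2}, Theorem~\ref{thm:HomToComp}) is to pad the \emph{input} $G$ with $t$ isolated vertices/edges, observe that $\comp{G_t}{H}$ is a linear combination of the unknowns with a triangular coefficient matrix, and interpolate over $t$; some such padding-plus-interpolation idea is indispensable and is missing from your sketch. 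The second regime is the heart of the theorem and is where your proposal essentially stops: when every component of $H$ is a reflexive clique or irreflexive biclique, $\Hom{H}$ is in $\FP$, and "reduce from a canonical $\numP$-hard problem using gadget constructions that encode forbidden patterns" is a statement of intent, not a proof. The paper's route is to write $\comp{G}{H}=\sum_{J\in\calS_H}\lambda_H(J)\hom{G}{J}$ (Theorem~\ref{thm:CompDecomp}), show via Lemma~\ref{lem:H-} that the graph $H^-$ obtained by deleting one non-loop edge has nonzero coefficient, note that $\cHom{H^-}$ is $\numP$-hard by Theorem~\ref{thm:DG} because $H^-$ is connected but neither a reflexive clique nor an irreflexive biclique, and then extract $\hom{G}{H^-}$ from the signed sum by a Vandermonde interpolation whose separating gadgets come from the Lov\'asz-style Lemmas~\ref{lem:Lovasz1} and~\ref{lem:Lovasz2}; none of these ingredients appears in your plan. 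You also do not address disconnected $H$ in this regime (the paper needs Lemmas~\ref{lem:cCompCToComp1} and~\ref{lem:cCompCToComp2} for that), and your final step runs the list reduction in the hard direction ($\LComp{H}\le\Comp{H}$ via "list-simulation gadgets"), which is both unproven and unnecessary: the paper establishes hardness of $\Comp{H}$ directly and then uses the trivial reduction $\Comp{H}\le\LComp{H}$.
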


\newcommand{\ThmSHom}{
Let $H$ be a graph. If every connected component of $H$ is  a reflexive  clique or an irreflexive  biclique,  
then $\SHom{H}$ and $\LSHom{H}$ are in $\FP$. 
Otherwise, $\SHom{H}$ and $\LSHom{H}$ are $\numP$-complete.
}
\begin{thm}\label{thm:SHom}
\ThmSHom
\end{thm}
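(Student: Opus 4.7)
The plan is to use M\"obius inversion over subsets of $V(H)$, giving the complementary identities
\[
\sur{G}{H} = \sum_{T \subseteq V(H)} (-1)^{|V(H)| - |T|} \hom{G}{H[T]}, \qquad \hom{G}{H} = \sum_{T \subseteq V(H)} \sur{G}{H[T]},
\]
each of which is a sum of $O(1)$ terms since $H$ is fixed; both have list-analogues in which every list is intersected with $T$. For the tractable direction, if every component of $H$ is a reflexive clique or an irreflexive biclique, the same is true of $H[T]$ for every $T \subseteq V(H)$, since induced subgraphs of reflexive cliques (respectively, irreflexive bicliques) are reflexive cliques (respectively, irreflexive bicliques, possibly degenerate). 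Hence Theorem~\ref{thm:DGorig} gives $\hom{G}{H[T]}$ in polynomial time for each $T$, and the first identity then yields $\sur{G}{H}$ in polynomial time; the same reasoning handles $\LSHom{H}$.

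For the $\numP$-hardness of $\LSHom{H}$, I would reduce from $\LHom{H}$ (hard by Theorem~\ref{thm:DGorig}): given an input $(G, \boldS)$, enlarge $G$ by adding, for each $v \in V(H)$, a new isolated irreflexive vertex $x_v$ with singleton list $\{v\}$. Every list homomorphism of the enlarged instance must map $x_v \mapsto v$ and is therefore automatically surjective, giving $\sur{(G', \boldS')}{H} = \hom{(G, \boldS)}{H}$.

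For the $\numP$-hardness of $\SHom{H}$, lists are unavailable, so the plan is to reduce $\Hom{H}$ (hard by Theorem~\ref{thm:DGorig}) by querying the oracle on $G_k := G \sqcup I_k$ for $k = 0, 1, \ldots, n$, where $n = |V(H)|$ and $I_k$ consists of $k$ fresh isolated irreflexive vertices. Splitting a homomorphism of $G_k$ into its restrictions to $G$ and to $I_k$, the contribution of a $G$-homomorphism with image $T$ to $\sur{G_k}{H}$ equals the number of maps $I_k \to V(H)$ whose image covers $V(H)\setminus T$, which depends only on $r := |V(H) \setminus T|$. This yields the linear system
\[
\sur{G_k}{H} = \sum_{r=0}^n m_k(r)\, A_r, \qquad m_k(r) = \sum_{j=0}^r (-1)^j \binom{r}{j}(n-j)^k, \qquad A_r := \sum_{T \,:\, |V(H)\setminus T| = r} \sur{G}{H[T]}.
\]
The main obstacle is to show that the $(n+1)\times(n+1)$ coefficient matrix $(m_k(r))_{k,r}$ is invertible; the plan is to factor it as a Vandermonde matrix in the distinct values $n, n-1, \ldots, 0$ times an upper-triangular matrix with diagonal entries $(-1)^j$, both of which are invertible. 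Solving for the $A_r$ and summing gives $\hom{G}{H} = \sum_{r=0}^n A_r$, completing the reduction.
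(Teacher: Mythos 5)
Your proof is correct, and it diverges from the paper's in the tractability direction while essentially matching it in the hardness direction. For tractability, the paper does not use inclusion--exclusion at all: it proves $\LSHom{H}\le\LHom{H}$ by enumerating ``configurations'' --- for each choice of which vertices of $G$ are the first (in a fixed ordering) to hit each vertex of $H$, and each assignment of $H$-vertices to them, it builds a pinned $\LHom{H}$ instance and sums over the at most $q!\binom{n}{q}$ configurations. Your M\"obius-inversion identity $\sur{G}{H}=\sum_{T\subseteq V(H)}(-1)^{|V(H)|-|T|}\hom{G}{H[T]}$ is cleaner and gives a sum of only $2^{|V(H)|}=O(1)$ terms; the closure of the tractable class under induced subgraphs is exactly as you say (induced subgraphs of reflexive cliques are reflexive cliques, and induced subgraphs of irreflexive bicliques decompose into irreflexive bicliques, possibly degenerate), so Theorem~\ref{thm:DGorig} finishes it, lists included. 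For hardness the paper also adds $t$ isolated vertices and interpolates (Theorem~\ref{thm:HomToSHom}), but it parameterises the unknowns by the number of $H$-vertices used by the restriction to $G$ and gets a coefficient matrix that is directly lower triangular with diagonal $t!$; your parameterisation by the number $r$ of missed vertices forces the slightly heavier Vandermonde-times-triangular factorisation of $\bigl(m_k(r)\bigr)$, which is valid but does more work than necessary. Your direct pinning reduction $\LHom{H}\le\LSHom{H}$ via forced isolated vertices is also correct and is simpler than strictly needed, since the paper just uses the trivial reduction $\SHom{H}\le\LSHom{H}$ once $\SHom{H}$ is known hard. The only cosmetic omission is that you do not remark on membership in $\numP$, which is immediate.
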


The tractability results in Theorem~\ref{thm:CompDichotomy} follow from the fact that the number of compactions from a  graph $G$ to a graph $H$ can be expressed as a linear combination of the number of homomorphisms from $G$ to certain subgraphs of $H$, see Section~\ref{sec:easyComp}.
A proof sketch of the intractability result in Theorem~\ref{thm:CompDichotomy} is given at the beginning of Section~\ref{sec:hardComp}.
Theorem~\ref{thm:SHom} is simpler, see Section~\ref{sec:SHom}.

\subsection{Reductions and Retractions}\label{sec:retract}

In the context of two computational problems $\mathrm{P}_1$ and $\mathrm{P}_2$, we write $\mathrm{P}_1\le \mathrm{P}_2$ if there exists a polynomial-time Turing reduction from $\mathrm{P}_1$ to $\mathrm{P}_2$. If there exist such reductions in both directions, we write $\mathrm{P}_1\equiv\mathrm{P}_2$. 
Theorems~\ref{thm:DGorig}, \ref{thm:CompDichotomy} and \ref{thm:SHom}
imply the following observation.

\begin{obs}
\label{obs:DG}
Let $H$ be a graph. Then
$$\Hom{H} \equiv \LHom{H} \equiv \SHom{H} \equiv \LSHom{H} \leq \Comp{H} \equiv \LComp{H}.$$
\end{obs}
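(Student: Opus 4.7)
The plan is to derive Observation~\ref{obs:DG} directly from the three dichotomy theorems~\ref{thm:DGorig}, \ref{thm:CompDichotomy}, and~\ref{thm:SHom}. First I would note that all six counting problems appearing in the observation lie in $\numP$, so for any two of them $P_1$ and $P_2$, a polynomial-time Turing reduction $P_1 \leq P_2$ is obtained in one of two trivial ways: either $P_1$ is in $\FP$, in which case the reduction just computes the answer directly and ignores the oracle, or both $P_1$ and $P_2$ are $\numP$-complete, in which case the reduction follows from the definition of $\numP$-completeness via Turing reductions. Thus establishing the observation reduces entirely to comparing the tractable cases of the dichotomy theorems.

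The four-way equivalence $\Hom{H} \equiv \LHom{H} \equiv \SHom{H} \equiv \LSHom{H}$ is then immediate: Theorems~\ref{thm:DGorig} and~\ref{thm:SHom} state exactly the same tractability criterion for all four problems, namely that every connected component of $H$ is a reflexive clique or an irreflexive biclique, so the four problems coincide in complexity on every graph $H$. Identically, Theorem~\ref{thm:CompDichotomy} gives the same tractability condition for $\Comp{H}$ and $\LComp{H}$, yielding $\Comp{H} \equiv \LComp{H}$.

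The only step requiring a substantive check is $\LSHom{H} \leq \Comp{H}$, and it is the one I expect to be the main (if modest) obstacle. Here I would verify that the tractability class of $\Comp{H}$ provided by Theorem~\ref{thm:CompDichotomy} is contained in the tractability class of $\LSHom{H}$ provided by Theorem~\ref{thm:SHom}: every irreflexive star is an irreflexive biclique, and every reflexive clique of size at most~$2$ is a reflexive clique, so if every component of $H$ is an irreflexive star or a reflexive clique of size at most~$2$, then every component is also a reflexive clique or an irreflexive biclique. Contrapositively, whenever $\LSHom{H}$ is $\numP$-complete the same holds for $\Comp{H}$, and the reduction follows from $\numP$-completeness; in the remaining case $\LSHom{H} \in \FP$ and the reduction is trivial.
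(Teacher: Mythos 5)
Your proposal is correct and follows essentially the same route as the paper: the paper derives the observation purely from the complexity classifications in Theorems~\ref{thm:DGorig}, \ref{thm:CompDichotomy} and~\ref{thm:SHom}, using exactly the three facts you invoke (problems in $\FP$ are inter-reducible, $\numP$-complete problems are inter-reducible, and $\FP$ problems reduce to $\numP$-complete ones), together with the containment of the tractability condition for $\Comp{H}$ in that for the other problems. This is the same argument the authors spell out explicitly in the proof of Corollary~\ref{cor:reductions}.
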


In order to see how Observation~\ref{obs:DG} contrasts with 
the situation concerning decision problems, it is useful to define decision 
versions of the computational problems that we study. Thus, $\DHom{H}$ is the problem of determining whether
$\hom{G}{H}=0$, given an input~$G$ of~$\Hom{H}$. The decision problems $\DComp{H}$, $\DSHom{H}$ and
$\DLHom{H}$   are defined similarly.

It is also useful to define the notion of a \emph{retraction}.
Suppose that $H$ is a graph 
with $V(H)= \{v_1,\ldots,v_c\}$
and that $G$ is an irreflexive graph.
We say that a tuple $(u_1,\ldots,u_c)$ of $c$ distinct vertices of $G$
induces a copy of~$H$ if, for every 
$1 \leq a < b \leq c$, 
$\{u_a,u_b\} \in E(G) \Longleftrightarrow \{v_a,v_b\} \in E(H)$.
A \emph{retraction}  from $(G;u_1,\ldots,u_c)$ to $H$
is a homomorphism~$h$ from~$G$ to~$H$
such that, for all $i\in[c]$, $h(u_i)=v_i$.
We use $\ret{(G;u_1,\ldots,u_c)}{H}$ to denote the number of retractions from
$(G;u_1,\ldots,u_c)$ to~$H$. We briefly consider 
the retraction counting and decision problems, which are parameterised by
a graph~$H$ with $V(H)=\{v_1,\ldots,v_c\}$.\footnote{Once again, some works would allow $G$ to have loops, and
would insist that loops are preserved in the induced copy of~$H$. We prefer to stick with the convention that $G$ is
irreflexive, but this does not make a difference to the complexity classifications that we describe.}
\bigskip

\noindent\begin{tabular}{ll} 
\textbf{Name:} $\Ret{H}$. & \textbf{Name:} $\DRet{H}$.  
\tabularnewline[-0.05cm]
\textbf{Input:}  Irreflexive graph $G$ and a tuple& \textbf{Input:} Irreflexive graph $G$  and a tuple
\tabularnewline[-0.05cm]
$(u_1,\ldots,u_c)$ of distinct vertices of $G$ & $(u_1,\ldots,u_c)$ of distinct vertices of $G$ 
\tabularnewline[-0.05cm]
that induces a copy of $H$. & that induces a copy of $H$.
\tabularnewline[-0.05cm]
\textbf{Output:} $ \ret{(G;u_1,\ldots,u_c)}{H}$. & \textbf{Output:}  Does $\ret{(G;u_1,\ldots,u_c)}{H}=0$?    \tabularnewline[-0.05cm]
\end{tabular}\bigskip

The following observation appears as Proposition~1 of \cite{BodirskySurvey}.
The proposition is stated for more general structures than graphs, but it applies equally to our setting. 
\begin{prop}[Bodirsky et al.]
Let $H$ be a graph. Then
$$\DHom{H} \leq \DSHom{H} \leq \DComp{H} \leq \DRet{H} \leq \DLHom{H}.$$
\end{prop}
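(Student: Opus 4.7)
The plan is to establish the chain of four polynomial-time Turing reductions by treating each inequality separately, with each reduction parameterised by the fixed graph $H$. The first and the last reductions admit short direct arguments, while the two middle reductions require attaching carefully-chosen constant-size gadgets to the input $G$.

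For $\DHom{H} \leq \DSHom{H}$, given input $G$ I would take $G' = G \sqcup I_c$, the disjoint union of $G$ with an edgeless graph $I_c$ on $c = |V(H)|$ vertices. Because $I_c$ carries no edge constraints, any homomorphism $G \to H$ extends to a surjective homomorphism $G' \to H$ by sending $I_c$ bijectively to $V(H)$, and conversely any surjective homomorphism $G' \to H$ restricts to a homomorphism on $G$. Hence $\hom{G}{H} > 0$ iff $\sur{G'}{H} > 0$. For $\DRet{H} \leq \DLHom{H}$, given $(G; u_1, \ldots, u_c)$ I would set the lists by $S_{u_i} = \{v_i\}$ and $S_v = V(H)$ for every other vertex $v$; the list homomorphisms under $\boldS$ then coincide exactly with the retractions, so $\ret{(G; u_1, \ldots, u_c)}{H} = \hom{(G, \boldS)}{H}$.

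For $\DSHom{H} \leq \DComp{H}$, the key observation is that any surjective homomorphism $G \to H$ uses some subset $F$ of the non-loop edges of $H$, and since there are only $O(1)$ such subsets I can enumerate over all $F$. For each $F$ I would augment $G$ with constant-size gadgets constructed from suitable subgraphs of $H$ that cover exactly the non-loop edges of $H$ not in $F$, and then query the $\DComp{H}$-oracle on the resulting graph; a surjective homomorphism of $G$ exists iff some such query returns yes. For $\DComp{H} \leq \DRet{H}$, the idea is to attach an induced copy of $H$ to $G$, possibly linked to $G$ by additional gadget structure, so that a retraction fixing the attached copy covers all vertices and non-loop edges of $H$ via the identity on that copy, and conversely a compaction of $G$ extends via the identity on the attached copy to a retraction; the linking is needed because a naive disjoint copy of $H$ would allow a retraction whenever $G$ has any homomorphism to $H$, not only when it has a compaction.

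The principal obstacles are in the middle two reductions. For $\DSHom{H} \leq \DComp{H}$, the difficulty is in designing, for each subset $F$ of non-loop edges, a gadget that covers exactly the required edges without introducing spurious constraints on the image of $G$. For $\DComp{H} \leq \DRet{H}$, the difficulty is twofold: faithfully capturing the compaction property of $G$ by the retraction condition on the augmented graph, and handling loops in $H$, since $G$ is required to be irreflexive and so a literal loopy copy of $H$ cannot be directly attached. Both issues can be resolved by a case analysis exploiting that $H$ has constant size.
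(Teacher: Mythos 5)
The paper does not actually prove this proposition: it quotes it as Proposition~1 of the Bodirsky--K\'ara--Martin survey and gives no argument, so there is nothing in the paper to compare your proof against line by line. Judged on its own, your first and last reductions are correct and complete: adjoining $|V(H)|$ isolated vertices gives $\DHom{H}\le\DSHom{H}$, and the singleton-list encoding of retractions is exactly the construction the paper itself uses in Observation~\ref{obs:ret}.

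The two middle reductions are where the substance of the proposition lies, and there you have a genuine gap; moreover the one construction you do sketch is incorrect. For $\DSHom{H}\le\DComp{H}$ you propose, for each $F\subseteq E^0(H)$, to attach a disjoint gadget $D_F$ covering the edges of $E^0(H)\setminus F$ and to answer yes iff some oracle query on $G\oplus D_F$ answers yes. The forward direction is fine, but the backward direction fails: any homomorphic image of $D_F$ that covers the edges of $E^0(H)\setminus F$ necessarily also covers all vertices incident to those edges, so the restriction of a compaction of $G\oplus D_F$ to $G$ need not be vertex-surjective. Concretely, let $H=K_{1,1}$ on vertices $u,v$ and let $G$ be a single isolated vertex: then $\sur{G}{H}=0$, yet for $F=\emptyset$ the gadget contains an edge and $G\oplus D_\emptyset$ compacts onto $H$, so your procedure answers yes. (No disjoint gadget can avoid this; controlling where a disjoint piece maps is exactly what you cannot do without lists.) For $\DComp{H}\le\DRet{H}$ you correctly diagnose why a disjoint induced copy of $H$ fails, but ``linked by additional gadget structure, resolvable by case analysis since $H$ has constant size'' is not a construction -- this link is precisely the nontrivial content of the proposition (the easy half of Winkler's conjecture), and a workable route is quite different from gadget-attachment: enumerate constant-size witness sets in $V(G)$ for the vertices and non-loop edges of $H$ together with their intended images, and encode each resulting pinning constraint as a retraction instance by gluing the pinned vertices onto a fresh induced copy of $H$. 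As written, two of the four links in the chain are not established.
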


We have already mentioned the fact (pointed out by Bodirsky et al.) that if $H$ is 
an irreflexive
$3$-vertex clique  together with a single looped vertex,
then 
$\DHom{H}$ is in P, but  $\DSHom{H}$ is NP-complete.  
There are no known graphs~$H$ separating $\DSHom{H}$, $\DComp{H}$ and $\DRet{H}$.
Moreover, Bodirsky et al.\ mention a conjecture~\cite[Conjecture 2]{BodirskySurvey},  
attributed to Peter Winkler, that, for all graphs~$H$, $\DComp{H}$ and $\DRet{H}$ 
are polynomially Turing equivalent.

The following observation, together with our theorems,
implies Corollary~\ref{cor:reductions} (below), 
which shows that the generalisation of Winkler's conjecture to the counting
setting is false unless $\FP=\numP$, since   $\Comp{H}$ and $\Ret{H}$
are not polynomially Turing equivalent for all $H$.

\begin{obs}
\label{obs:ret}
Let $H$ be a graph. Then
$\Ret{H}\leq \LHom{H}$ and $\Hom{H} \leq \Ret{H}$
\end{obs}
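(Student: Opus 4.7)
The plan is to establish the two reductions directly via simple gadget constructions, handling the lists and loops with care.

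For $\Ret{H}\leq\LHom{H}$, I would encode the retraction's fixed-image constraints using singleton lists. Given an instance $(G;u_1,\dots,u_c)$ of $\Ret{H}$, I would build the lists $\boldS$ by setting $S_{u_i}=\{v_i\}$ for each $i\in[c]$ and $S_w=V(H)$ for every other vertex $w$ of $G$. A list homomorphism from $(G,\boldS)$ to $H$ is then a homomorphism $h\from G\to H$ satisfying $h(u_i)=v_i$ for all $i$, which is exactly the definition of a retraction. Hence $\ret{(G;u_1,\dots,u_c)}{H}=\hom{(G,\boldS)}{H}$ and a single oracle call suffices.

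For $\Hom{H}\leq\Ret{H}$, my approach is to glue an extra induced copy of $H$ onto $G$ so that an $H$-oracle computing retractions essentially fixes the copy and counts arbitrary homomorphisms on the rest. Given an instance $G$ of $\Hom{H}$ with $V(H)=\{v_1,\dots,v_c\}$, I would form $G'$ as the disjoint union of $G$ with a new graph on fresh vertices $u_1,\dots,u_c$ whose edge set consists of exactly the non-loop edges $\{u_a,u_b\}$ such that $\{v_a,v_b\}\in E(H)$ with $a\neq b$. Then $G'$ is irreflexive, and since the induced-copy condition in the definition of $\Ret{H}$ only constrains pairs with $a<b$, the tuple $(u_1,\dots,u_c)$ indeed induces a copy of $H$ in $G'$, regardless of which vertices of $H$ carry loops. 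A retraction from $(G';u_1,\dots,u_c)$ to $H$ restricts to the identity on $\{u_1,\dots,u_c\}$ (which is automatically a valid homomorphism on that subgraph) and restricts to an arbitrary homomorphism from $G$ to $H$ on the other component. Because the two components of $G'$ share no vertices, these two restrictions are independent, giving $\ret{(G';u_1,\dots,u_c)}{H}=\hom{G}{H}$.

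The only subtlety worth pausing on is the loop issue in the second reduction: we must keep $G'$ irreflexive even when $H$ has loops. The induced-copy definition in the excerpt is stated with strict inequality $1\le a<b\le c$ on the indices, so loops of $H$ impose no constraints on the gadget, and the construction goes through uniformly. Both reductions are clearly polynomial-time, which completes the argument.
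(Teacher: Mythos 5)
Your proposal is correct and matches the paper's own proof essentially exactly: singleton lists to pin the induced copy for $\Ret{H}\leq\LHom{H}$, and disjointly adjoining the non-loop edges of~$H$ on fresh vertices (the paper uses $V(H)$ itself, assumed disjoint from $V(G)$) for $\Hom{H}\leq\Ret{H}$. Your explicit remark that the strict inequality $a<b$ in the induced-copy condition makes loops of~$H$ harmless is a correct reading of the same point the paper relies on implicitly.
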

\begin{proof}
Let $V(H)=\{v_1,\ldots,v_c\}$.
We first reduce $\Ret{H}$ to $\LHom{H}$.
Consider an input to $\Ret{H}$ consisting of 
$G$ and $(u_1,\ldots,u_c)$.
For each $a\in[c]$, let $S_{u_a}$ be the set containing 
the single vertex $v_a$.
For each $v \in V(G)\setminus \{u_1,\ldots,u_c\}$, let $S_v = V(H)$.
Let $\boldS=\{S_v \,:\, v\in V(G)\}$.
Then $\ret{(G;u_1,\ldots,u_c)}{H} = \hom{(G,\boldS)}{H}$.

We next reduce $\Hom{H}$ to $\Ret{H}$.
Let $E^0$ be the set of all non-loop edges of~$H$.
Consider an input~$G$ to~$\Hom{H}$. Suppose without loss of generality that $V(G)$ is disjoint from
$V(H)=\{v_1,\ldots,v_c\}$. 
Let $G'$ be the graph with vertex set $V(G) \cup V(H)$
and edge set $E(G) \cup E^0$. Then $(v_1,\ldots,v_c)$ induces a copy of~$H$ in~$G'$ and
$\hom{G}{H} = \ret{(G';v_1,\ldots,v_c)}{H}$.
\end{proof}

Observation~\ref{obs:ret} immediately implies the following dichotomy characterisation for the problem
of counting retractions.

\begin{cor}\label{cor:retdichotomy}
Let $H$ be a graph. If every connected component of $H$ is  a reflexive  clique or an irreflexive  biclique,  
then $\Ret{H}$  is in $\FP$. 
Otherwise, $\Ret{H}$  is $\numP$-complete.
\end{cor}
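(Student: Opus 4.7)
The plan is to derive Corollary~\ref{cor:retdichotomy} directly from Observation~\ref{obs:ret} combined with the Dyer--Greenhill dichotomy (Theorem~\ref{thm:DGorig}), so essentially no new work is needed beyond invoking these results.

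For the tractable side, suppose every connected component of $H$ is a reflexive clique or an irreflexive biclique. Then Theorem~\ref{thm:DGorig} gives that $\LHom{H}$ is in $\FP$. By the first reduction of Observation~\ref{obs:ret}, $\Ret{H} \leq \LHom{H}$, and composing with the polynomial-time algorithm for $\LHom{H}$ yields a polynomial-time algorithm for $\Ret{H}$.

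For the hard side, suppose that some connected component of $H$ is neither a reflexive clique nor an irreflexive biclique. Then Theorem~\ref{thm:DGorig} gives that $\Hom{H}$ is $\numP$-complete, which in particular means that $\Hom{H}$ is $\numP$-hard. The second reduction of Observation~\ref{obs:ret} shows $\Hom{H} \leq \Ret{H}$, so $\Ret{H}$ is $\numP$-hard as well. Membership of $\Ret{H}$ in $\numP$ is immediate: a retraction from $(G;u_1,\ldots,u_c)$ to $H$ is a function $V(G) \to V(H)$, which is a polynomial-size witness whose validity (edge preservation and the constraint $h(u_i)=v_i$) can be checked in polynomial time. Hence $\Ret{H}$ is $\numP$-complete, completing the dichotomy.

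The only thing to check is that nothing subtle hides in Observation~\ref{obs:ret}, but this is already proved in the excerpt: the reduction to $\LHom{H}$ sets $S_{u_a}=\{v_a\}$ and $S_v=V(H)$ otherwise, while the reduction from $\Hom{H}$ augments $G$ with a disjoint induced copy of $H$ on fresh vertices. Both are polynomial-time, parsimonious reductions, so they preserve both $\FP$ membership and $\numP$-completeness. No step poses any obstacle; the corollary is essentially a one-line consequence of the two preceding results.
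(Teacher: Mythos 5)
Your proposal is correct and follows exactly the paper's argument: the paper's proof of Corollary~\ref{cor:retdichotomy} is precisely the one-line observation that it follows from Observation~\ref{obs:ret} together with Theorem~\ref{thm:DGorig}, and your write-up just makes the two directions (and the routine $\numP$-membership) explicit. Nothing further is needed.
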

\begin{proof}
The corollary follows immediately from Observation~\ref{obs:ret} and Theorem~\ref{thm:DGorig}.
\end{proof}

\begin{cor}\label{cor:reductions}
Let $H$ be a graph. Then
$$\Hom{H} \equiv \LHom{H} \equiv \SHom{H} \equiv \LSHom{H} 
\equiv \Ret{H}
\leq \Comp{H} \equiv \LComp{H}.$$
Furthermore, there is a graph~$H$ for which $\Comp{H}$ and $\LComp{H}$ are $\numP$-complete, but
$\Hom{H}$, $\LHom{H}$, $\SHom{H}$, $\LSHom{H}$ and $\Ret{H}$ are in $\FP$.
\end{cor}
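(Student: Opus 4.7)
The plan is to derive the equivalences and the strict separation by combining the classification results (Theorem~\ref{thm:DGorig}, Theorem~\ref{thm:SHom}, Corollary~\ref{cor:retdichotomy}, and Theorem~\ref{thm:CompDichotomy}) with the explicit reductions already recorded in Observations~\ref{obs:DG} and~\ref{obs:ret}. The key conceptual point is that $\Hom{H}$, $\LHom{H}$, $\SHom{H}$, $\LSHom{H}$, and $\Ret{H}$ all obey the \emph{same} $\FP$ versus $\numP$-complete dichotomy, governed by whether every component of $H$ is a reflexive clique or an irreflexive biclique. Any two counting problems that share such a per-$H$ dichotomy are polynomial-time Turing equivalent for every fixed~$H$: if $H$ falls in the tractable class, both problems are in $\FP$ and any reduction is immediate; if $H$ falls in the hard class, both are $\numP$-complete and hence Turing equivalent to one another.

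To assemble the full chain, I would first use Observation~\ref{obs:ret} to sandwich $\Ret{H}$ between $\Hom{H}$ and $\LHom{H}$, which (together with Theorem~\ref{thm:DGorig}) shows that $\Ret{H}$ obeys the Dyer--Greenhill dichotomy and hence joins the leftmost equivalence class. The reduction $\LSHom{H} \leq \Comp{H}$ is already contained in Observation~\ref{obs:DG}. For the final equivalence $\Comp{H} \equiv \LComp{H}$, the direction $\Comp{H} \leq \LComp{H}$ is trivial (set $S_v = V(H)$ for every vertex $v$ of~$G$), and the reverse direction follows because Theorem~\ref{thm:CompDichotomy} assigns the two problems the same complexity on every input graph~$H$, so they are equivalent whenever they are both hard and trivially equivalent whenever they are both in $\FP$.

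For the ``furthermore'' clause I would exhibit $H = K_{2,2}$, the irreflexive complete bipartite graph with parts of size~$2$. Since $K_{2,2}$ is an irreflexive biclique, Theorems~\ref{thm:DGorig} and~\ref{thm:SHom} and Corollary~\ref{cor:retdichotomy} place each of $\Hom{H}$, $\LHom{H}$, $\SHom{H}$, $\LSHom{H}$, and $\Ret{H}$ in $\FP$. On the other hand, $K_{2,2}$ is neither an irreflexive star (both sides have size~$2$) nor a reflexive clique of size at most~$2$ (it is irreflexive), so Theorem~\ref{thm:CompDichotomy} makes both $\Comp{H}$ and $\LComp{H}$ into $\numP$-complete problems. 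There is no substantive obstacle here: the argument is essentially bookkeeping on top of the three dichotomies proved earlier, and the real content of the statement is the conceptual message that the $\LSHom{H} \leq \Comp{H}$ step of Observation~\ref{obs:DG} cannot, in general, be reversed, thereby disproving the counting analogue of Winkler's conjecture (unless $\FP = \numP$).
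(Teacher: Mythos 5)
Your proposal is correct and follows essentially the same route as the paper: both arguments reduce to the observation that problems governed by the same per-$H$ dichotomy are trivially inter-reducible (all $\FP$ problems are equivalent, all $\numP$-complete problems are equivalent, and $\FP$ problems reduce to $\numP$-complete ones), with the classifications supplied by Theorems~\ref{thm:DGorig}, \ref{thm:CompDichotomy}, \ref{thm:SHom} and Corollary~\ref{cor:retdichotomy}. Your choice of $K_{2,2}$ as the separating graph is a valid instance of the paper's general prescription (any reflexive clique of size at least~$3$ or any irreflexive biclique that is not a star).
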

\begin{proof} Theorems~\ref{thm:DGorig}, \ref{thm:CompDichotomy}, \ref{thm:SHom} and 
Corollary~\ref{cor:retdichotomy} give complexity classifications for all of the problems.
The reductions in the corollary follow from three easy observations.
\begin{itemize}
\item  All problems in $\FP$ are trivially inter-reducible.
\item All  $\numP$-complete problems are inter-reducible. 
\item  All problems in $\FP$ are reducible to all $\numP$-complete problems.
\end{itemize}
The separating graph~$H$ can be taken to be any reflexive clique of size at least~$3$ or any irreflexive biclique
that is not a star.
\end{proof}

\subsection{Related Work}

This section was added after the announcement of our results
(\url{https://arxiv.org/abs/1706.08786v1}), in order to 
draw attention  to some interesting subsequent work~\cite{HolgerDell, HubieChen}.

Both our tractability results and our hardness results  
rely on the fact  (see Theorem~\ref{thm:CompDecomp})
that the number of compactions from~$G$ to~$H$ can be expressed as a linear combination
of the number of homomorphisms from~$G$ to  certain subgraphs $J$ of $H$.
A similar statement applies to surjective homomorphisms.

As we note in the paper, these kinds of linear combinations have been  noticed 
in related contexts before, for example
in \cite[Lemma 4.2]{Borgs} and in \cite{LovaszBook}.
We use  the linear combination of Theorem~\ref{thm:CompDecomp},  together with interpolation, to prove hardness.
Although it is standard to restrict the input graph~$G$ to be irreflexive
(and this restriction makes the results stronger) the fact that $G$ is required
to be irreflexive causes severe difficulties.
 
In fact, Dell's note about our paper \cite{HolgerDell} 
shows that,  if you weaken the theorem statements by allowing the input $G$ to
have loops, then a simpler interpolation
based on a very
 recent paper  by Curticapean, Dell and Marx
\cite{hombasis} 
can be used to make the proofs very elegant! 
The exact same idea, written  more generally, was also discovered by Chen~\cite{HubieChen}.

\section{Preliminaries} \label{sec:Prelim}

It will often be technically convenient to restrict the problems that we study by requiring the
input graph~$G$ to be connected. In each case, we do this by adding a superscript ``$C$''
to the name of the problem.
For example, the problem $\cHom{H}$ is defined as follows.
\prob{\cHom{H}.}{A \emph{connected} irreflexive graph $G$.}{$\hom{G}{H}$.}
It is well known and easy to see
(See, e.g., \cite[(5.28)]{LovaszBook}) that if $G$ is an irreflexive graph with 
components $G_1,\ldots,G_t$ 
then $\hom{G}{H} = \prod_{i\in[t]} \hom{G_i}{H}$. Similarly, given $\boldS=\{S_v\subseteq V(H) : v\in V(G)\}$ let $\boldS_i=\{S_v : v\in V(G_i)\}$. Then $\hom{(G,\boldS)}{H}= \prod_{i\in[t]} \hom{(G_i,\boldS_i)}{H}$.
Thus, Dyer and Greenhill's theorem (Theorem~\ref{thm:DGorig})
can be re-stated in the following convenient form.

\begin{thm}[Dyer, Greenhill]\label{thm:DG}
Let $H$ be a graph. 
If every connected component of $H$ is   a reflexive  clique or an irreflexive  biclique, 
then $\cHom{H}$, $\Hom{H}$, $\cLHom{H}$ and $\LHom{H}$  are all in  $\FP$. 
Otherwise, 
$\cHom{H}$, $\Hom{H}$, $\cLHom{H}$ and $\LHom{H}$ are  all $\numP$-complete.
\end{thm}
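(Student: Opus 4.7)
The plan is to derive Theorem~\ref{thm:DG} as a direct corollary of Theorem~\ref{thm:DGorig}, using only the multiplicativity of homomorphism counts over connected components (the identity $\hom{G}{H} = \prod_{i\in[t]} \hom{G_i}{H}$, and its list analogue, that are stated in the paragraph immediately preceding the theorem). So the main task is just to argue $\cHom{H} \equiv \Hom{H}$ and $\cLHom{H} \equiv \LHom{H}$ in a uniform way, regardless of~$H$.

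For the easy direction, note that $\cHom{H} \leq \Hom{H}$ and $\cLHom{H} \leq \LHom{H}$ trivially, since any input to the connected version is also a valid input to the unrestricted version (the problems agree when $G$ happens to be connected). Hence whenever $\Hom{H}$ or $\LHom{H}$ lies in $\FP$, so does its connected counterpart. Combined with Theorem~\ref{thm:DGorig}, this immediately gives tractability of all four problems in the ``easy'' case (where every component of~$H$ is a reflexive clique or an irreflexive biclique).

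For the reverse reductions, I would give a Turing reduction $\Hom{H} \leq \cHom{H}$ as follows: on an input graph $G$, compute the connected components $G_1,\ldots,G_t$ of~$G$ in polynomial time (each~$G_i$ is connected and irreflexive since $G$ is), query the $\cHom{H}$ oracle on each~$G_i$, and return $\prod_{i\in[t]} \hom{G_i}{H}$; correctness follows from the multiplicativity identity. The same reduction, applied to $(G,\boldS)$ by decomposing as $(G_i,\boldS_i)$ with $\boldS_i=\{S_v : v\in V(G_i)\}$, gives $\LHom{H} \leq \cLHom{H}$ via $\hom{(G,\boldS)}{H} = \prod_{i\in[t]} \hom{(G_i,\boldS_i)}{H}$. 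Hence $\cHom{H} \equiv \Hom{H}$ and $\cLHom{H} \equiv \LHom{H}$, and so the $\numP$-completeness half of Theorem~\ref{thm:DGorig} transfers to the connected-input versions.

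There is essentially no obstacle here; the whole content is that multiplicativity lets us freely move between the connected and unrestricted versions of the counting problem, and Theorem~\ref{thm:DGorig} already supplies the dichotomy for the unrestricted versions. The only (very minor) thing to check is that the list analogue of multiplicativity factors correctly over components when the lists are partitioned accordingly, which is immediate from the definition of a list homomorphism.
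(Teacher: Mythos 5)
Your proposal is correct and is essentially identical to the paper's justification: the paper likewise derives Theorem~\ref{thm:DG} from Theorem~\ref{thm:DGorig} purely via the multiplicativity identities $\hom{G}{H}=\prod_{i\in[t]}\hom{G_i}{H}$ and $\hom{(G,\boldS)}{H}=\prod_{i\in[t]}\hom{(G_i,\boldS_i)}{H}$ stated just before the theorem, with the trivial restriction giving the other direction.
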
 
 
Finally, we introduce some frequently used notation. 
For every positive integer~$n$, we define $[n]= \{1,\dots,n\}$. 

A subgraph $H'$ of $H$ is said to be \emph{loop-hereditary} with respect to~$H$
if for every $v\in V(H')$ that is contained in a loop in~$E(H)$, $v$ is also contained in a loop in~$E(H')$.
 
We indicate that two graphs~$G_1$ and~$G_2$ are isomorphic by 
writing  $G_1\cong G_2$.

Given sets $S_1$ and $S_2$, we write $S_1 \oplus S_2$ for the disjoint union
of $S_1$ and $S_2$.
Given graphs $G_1$ and $G_2$, we write $G_1 \oplus G_2$ for the graph
$(V(G_1)\oplus V(G_2), E(G_1) \oplus E(G_2))$.
If $V$ is a set of vertices 
then we write 
$G_1 \oplus V$ as shorthand for the graph
$G_1 \oplus (V,\emptyset)$.
Similarly, if $M$ is a matching (a set of disjoint edges)
with vertex set $V$,
then we write
$G_1 \oplus M$ as shorthand for the graph
$G_1 \oplus (V,M)$.

\section{Counting Compactions}\label{sec:Comp}
  
The section is divided into a short subsection on tractable cases and the main subsection on hardness results which also contains 
the proof of the final dichotomy result, Theorem~\ref{thm:CompDichotomy}.

\subsection{Tractability Results}\label{sec:easyComp}

The tractability result in Lemma~\ref{lem:LCompFP} 
follows from the fact (see Theorem~\ref{thm:CompDecomp})
that the number of compactions from~$G$ to~$H$ can be expressed as a linear combination
of the number of homomorphisms from~$G$ to  certain subgraphs $J$ of $H$.
While we need the full details of   our particular linear expansion to  derive our hardness results,
the following simpler version 
suffices for tractability.

\begin{lem}\label{lem:LCompFP}
Let $H$ be a graph such that every connected component   is an irreflexive star
or a reflexive clique of size at most~$2$. Then \Comp{H} and \LComp{H} are in $\FP$. 
\end{lem}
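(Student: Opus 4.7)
The plan is to prove tractability via inclusion-exclusion: I would express $\comp{G}{H}$ as an alternating sum of ordinary homomorphism counts $\hom{G}{J}$, one for each subgraph $J$ obtained from $H$ by removing some vertices and some non-loop edges, and then show that every such $J$ satisfies the tractability hypothesis of Theorem~\ref{thm:DG}.

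Concretely, let $E^0$ denote the set of non-loop edges of~$H$; for $U\subseteq V(H)$ and $F\subseteq E^0$, let $H_{U,F}$ be the subgraph of $H$ obtained by deleting every vertex in $U$ (together with all incident edges, loops included) and then deleting every remaining non-loop edge in~$F$. A homomorphism $h\colon G\to H$ is a compaction exactly when it uses every vertex of $H$ and every edge in $E^0$, so applying inclusion-exclusion to the ``bad'' events ``$v$ not used'' ($v\in V(H)$) and ``$e$ not used'' ($e\in E^0$) yields
\[
\comp{G}{H} \;=\; \sum_{U\subseteq V(H)}\sum_{F\subseteq E^0} (-1)^{|U|+|F|}\,\hom{G}{H_{U,F}}.
\]
Since $H$ is fixed, this is a sum of constantly many terms.

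The key step is to check that every $H_{U,F}$ satisfies the hypothesis of Theorem~\ref{thm:DG}, i.e.\ that every component of $H_{U,F}$ is a reflexive clique or an irreflexive biclique. By assumption, each component of~$H$ is (i) an irreflexive star $K_{1,m}$, (ii) a reflexive $K_1$, or (iii) a reflexive $K_2$. A short case analysis shows that every subgraph obtained from such a component by deleting vertices and non-loop edges is a disjoint union of irreflexive stars (including the trivial case $K_{1,0}$, an isolated unlooped vertex) and reflexive cliques of size at most two. For example, deleting the centre of an irreflexive star produces isolated unlooped vertices, and deleting the sole non-loop edge of a reflexive $K_2$ produces two isolated looped vertices. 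Hence each $H_{U,F}$ meets the hypothesis of Theorem~\ref{thm:DG}, so every $\hom{G}{H_{U,F}}$ is computable in polynomial time, and the constantly many terms sum to $\comp{G}{H}$ in polynomial time.

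The extension to $\LComp{H}$ is immediate: restricting a list homomorphism to avoid the vertices of~$U$ simply means shrinking each list $S_v$ to $S_v\setminus U$. Hence, for an instance $(G,\boldS)$ of $\LComp{H}$,
\[
\comp{(G,\boldS)}{H} \;=\; \sum_{U,F} (-1)^{|U|+|F|}\,\hom{(G,\boldS_U)}{H_{U,F}},
\]
where $\boldS_U=\{S_v\setminus U : v\in V(G)\}$, and each term is computable in polynomial time by the list version of Theorem~\ref{thm:DG}. I do not anticipate any real obstacle; the only step needing care is the per-component case analysis showing closure of the three allowed component types under vertex/non-loop-edge deletion, which is routine.
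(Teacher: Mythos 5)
Your proof is correct, and it rests on the same underlying idea as the paper's: $\comp{(G,\boldS)}{H}$ is an integer linear combination of list-homomorphism counts to subgraphs of $H$, each of which still has every component a reflexive clique or an irreflexive biclique, so each term is polynomial-time computable by Theorem~\ref{thm:DGorig}. The difference is only in how the linear combination is obtained. The paper uses the identity $\hom{(G,\boldS)}{H}=\sum_{H'}\comp{(G,\boldS_{H'})}{H'}$, summed over loop-hereditary subgraphs $H'$ of $H$, and solves for the top term by induction on $|V(H)|$ plus the number of non-loop edges; you instead write the coefficients in closed form by inclusion--exclusion over the ``bad events'' that a given vertex or non-loop edge is unused. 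Your formula is essentially the M\"obius-inverted form of the paper's recursion: it avoids the induction and the notion of loop-hereditary subgraph altogether, at the small cost of having to observe that the homomorphisms avoiding $U$ and $F$ are exactly the homomorphisms into $H_{U,F}$ (which is immediate from the definitions of ``using'' a vertex or edge). The check that carries the weight is the same in both arguments: the class of graphs whose components are irreflexive stars or reflexive cliques of size at most two is closed under deleting vertices and non-loop edges, and is contained in the Dyer--Greenhill tractable class. Your treatment of the list version (intersecting each list with $V(H)\setminus U$) and of the degenerate cases (empty $H_{U,F}$, empty $G$) is also sound.
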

\begin{proof}  First we deal with the case that $H$ is the empty graph.
Suppose that $H$ is  the empty graph and let
$(G,\boldS)$ be an instance of \LComp{H}. 
If  $G$ is empty then $\comp{(G,\boldS)}{H}=1$.
Otherwise,  $\comp{(G,\boldS)}{H}=0$.
Thus, if $H$ is empty, then \LComp{H} is in $\FP$. 
Obviously, this also implies that \Comp{H} is in $\FP$.

Let $\calH$ be the set of all non-empty graphs in which every connected component is
an irreflexive star or a reflexive clique of size at most~$2$.
We will show  that for every $H\in \calH$, 
\LComp{H} is in $\FP$. To do this, we need
the following notation.
Given a graph $H$, let
$m(H)$ denote the sum of $|V(H)|$ and the number of non-loop edges of~$H$.
We will use induction on $m(H)$.

The base case is $m(H)=1$. In this case, $H$ has only one vertex~$w$.
If $G$ is non-empty and has $w\in S_v$ for every vertex $v\in V(G)$
then $\comp{(G,\boldS)}{H}=1$.
Otherwise, 
$\comp{(G,\boldS)}{H}=0$. So \LComp{H} is in $\FP$.

For the inductive step, consider some $H\in \calH$ with $m(H)>1$.  
Let $(G,\boldS)$ be an instance of \LComp{H}.
If $G$ is empty then $\comp{(G,\boldS)}{H}=0$, so suppose that $G$ is non-empty.
For every subgraph $H'$ of $H$ let
$\boldS_{H'}$ denote the set of lists
$\boldS_{H'} = \{S_v \cap V(H') : v \in V(G)\}$.  
It  is easy to see that 
$\hom{(G,\boldS)}{H} =  
\sum_{H'  } \comp{(G,\boldS_{H'})}{H'},$ where the sum is over all loop-hereditary subgraphs $H'$ of $H$.
This observation is well known and is implicit, e.g, in the proof of a lemma of Borgs, Chayes, Kahn and \Lovasz~\cite[Lemma 4.2]{Borgs} (in a context without lists or loops).

A subgraph $H'$ of~$H$ is said to be a \emph{proper} subgraph of~$H$ if either 
$V(H')$ is a strict subset of $V(H)$ or $E(H')$ is a strict subset of $E(H)$ (or both). 
For every graph~$H$, let $Sub^<(H)$ denote the set of 
non-empty proper subgraphs of~$H$ that are loop-hereditary with respect to~$H$. 
Note that if $H\in \calH$ and $H' \in Sub^<(H)$ then $H'\in \calH$ and $m(H')<m(H)$.
We can refine the summation as follows.

\[\hom{(G,\boldS)}{H} = \comp{(G, \boldS)}{H} + 
\sum_{H' \in Sub^<(H)} \comp{(G,\boldS_{H'})}{H'}.\]
Since $H\in \calH$, every component of~$H$ is 
a reflexive clique or an irreflexive biclique,
so Theorem~\ref{thm:DGorig} shows that the 
quantity
$\hom{(G,\boldS)}{H}$ on the left-hand side
can be computed in polynomial time.
By induction, every term of the form $\comp{(G,\boldS_{H'})}{H'}$
can also be computed in polynomial time.
Subtracting this from the left-hand side, we obtain
$\comp{(G, \boldS)}{H}$, as desired. 

Thus, we have proved that $\LComp{H}$ is in $\FP$.
The problem $\Comp{H}$ is a restriction of $\LComp{H}$, so it is also in $\FP$.

\end{proof}

\subsection{Hardness Results}\label{sec:hardComp} 

This is the key section of this work. 
In this section, we consider a graph $H$ that has a connected component
that is not an irreflexive star or a reflexive clique of size at most~$2$.
The objective is to show that \Comp{H} and \LComp{H} are $\numP$-hard (this is the hardness 
content of Theorem~\ref{thm:CompDichotomy}).

We start with a brief proof sketch.
The  easy case is when $H$ contains a component that is not a reflexive clique or an irreflexive biclique.
In this case, Dyer and Greenhill's Theorem~\ref{thm:DGorig} shows that $\Hom{H}$ is $\numP$-hard.
We obtain the desired hardness by giving (in Theorem~\ref{thm:HomToComp}) a polynomial-time
Turing reduction from $\Hom{H}$ to $\Comp{H}$.
The result is finished off with a trivial reduction from $\Comp{H}$ to $\LComp{H}$.
The proof of Theorem~\ref{thm:HomToComp} is not difficult ---  
given an input $G$ to $\Hom{H}$, we add isolated vertices and edges to $G$
and recover the desired quantity $\hom{G}{H}$ using an oracle for \Comp{H} 
and polynomial interpolation.
There are small technical issues related to size-$1$ components in~$H$, and these are dealt with in Lemma~\ref{lem:HomToComp1}.

The more interesting case is when every component of $H$ is a reflexive clique or an irreflexive biclique, but
some component is either a reflexive clique of size at least~$3$ or an irreflexive biclique that is not a star.
The first milestone is Lemma~\ref{thm:simpleCompDichotomy}, which shows $\numP$-hardness
in the special case where $H$ is connected.
We prove Lemma~\ref{thm:simpleCompDichotomy} in a slightly stronger setting where the input graph
$G$ is connected. This allows us, in the remainder of the section, to generalise the connected case to the case in which $H$
is not connected.

The main difficulty, then, is Lemma~\ref{thm:simpleCompDichotomy}. The goal is to show that $\Comp{H}$ is $\numP$-hard when 
$H$ is a reflexive clique of size at least~$3$ or an irreflexive biclique that is not a star. 
Our main method for solving this
problem is   a technique (Theorem~\ref{thm:CompDecomp})
that lets us compute the number of compactions from a connected graph $G$ to a connected graph $H$ 
using a weighted sum of homomorphism counts,
say $\hom{G}{J_1},\ldots,\hom{G}{J_k}$. 
An important feature is that some of the weights might be negative.

Our basic approach will be to find a constituent $J_i$ such that $\cHom{J_i}$ is $\numP$-hard
and to reduce $\cHom{J_i}$ to the problem of computing the weighted sum.
Of course, if computing $\hom{G}{J_1}$ is $\numP$-hard and 
computing $\hom{G}{J_2}$ is $\numP$-hard, it does not follow that computing a weighted sum of these is $\numP$-hard.

In order to solve this problem, in Lemmas~\ref{lem:Lovasz1} 
and~\ref{lem:Lovasz2} we use an argument similar to that of Lov\'asz~\cite[Theorem 3.6]{Lovasz1967} 
to prove the existence of input instances that help us to distinguish between the problems 
$\cHom{J_1}, \ldots, \cHom{J_k}$.
Theorem~\ref{thm:CompInterpol1} 
then provides the desired reduction from 
a  chosen $\cHom{J_i}$ to the problem of computing the weighted sum.
Theorem~\ref{thm:CompInterpol1} is proved by a more complicated interpolation construction, in which 
we use the instances from Lemma~\ref{lem:Lovasz2} to modify the input.

Having sketched the proof at a high level, we are now ready to begin.
We start by working
 towards the proof of Theorem~\ref{thm:HomToComp}. The first step is to show that deleting 
size-$1$ components from $H$ does not add any complexity to \Comp{H}.

\begin{lem}\label{lem:HomToComp1}
Let $H$ be a graph that has exactly $q$  size-$1$ components.
Let $H'$ be the graph  constructed from $H$ by removing all  size-$1$ components. Then
$\Comp{H'} \le \Comp{H}$.
\end{lem}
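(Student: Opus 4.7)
The plan is to reduce $\Comp{H'}$ to $\Comp{H}$ by appending isolated vertices to the input and performing polynomial interpolation. Let $w_1,\ldots,w_q$ be the size-$1$ components of $H$, so that $V(H)=V(H')\cup\{w_1,\ldots,w_q\}$, and write $E_0$ for the set of non-loop edges of $H$ (equivalently, of $H'$, since the $w_i$'s carry no non-loop edges). Given an input $G$ to $\Comp{H'}$ and any $k\ge 0$, I would let $G_k$ be $G$ together with $k$ fresh isolated vertices $v_1^*,\ldots,v_k^*$, and query the oracle to compute $\comp{G_k}{H}$ for $k=0,1,\ldots,q$.

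The key observation is: \emph{for every homomorphism $\psi\from G\to H$ with $E_\psi\supseteq E_0$, one has $V(H')\subseteq U_\psi$.} Indeed, every vertex $v\in V(H')$ lies in a connected component of $H$ with at least two vertices, so $v$ is incident to at least one non-loop edge of~$H$; covering that edge forces $v\in U_\psi$. Consequently $V(H)\setminus U_\psi\subseteq\{w_1,\ldots,w_q\}$, and in particular $|V(H)\setminus U_\psi|\in\{0,1,\ldots,q\}$. This observation is the only real content; once it is in place, the rest is bookkeeping.

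Next I would decompose each compaction $\phi\from G_k\to H$ as $(\psi,f)$ with $\psi=\phi|_G$ and $f$ the restriction to $\{v_1^*,\ldots,v_k^*\}$. Since isolated vertices carry no edges, $E_\phi=E_\psi$, and $\phi$ is a compaction iff $\psi$ covers $E_0$ and $U_\psi\cup f(\{v_1^*,\ldots,v_k^*\})=V(H)$. The number of functions $[k]\to V(H)$ whose image contains a specified set of size $m$ depends only on $m$, and (by inclusion--exclusion) equals
\[\tilde{N}(m,k)\;=\;\sum_{j=0}^{m}(-1)^j\binom{m}{j}(n-j)^k,\qquad n=|V(H)|.\]
Grouping homomorphisms $\psi$ by $r=|V(H)\setminus U_\psi|$ gives
\[\comp{G_k}{H}\;=\;\sum_{r=0}^{q}H_r\,\tilde{N}(r,k),\]
where $H_r=\#\{\psi\from G\to H : E_\psi\supseteq E_0,\;|U_\psi|=n-r\}$. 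The key observation is exactly what allows the sum to stop at $r=q$. In particular, $H_q$ counts homomorphisms with $U_\psi=V(H')$, i.e.\ maps that land in $H'$, are surjective onto $V(H')$, and cover every non-loop edge of~$H'$; these are precisely the compactions from $G$ to~$H'$, so $H_q=\comp{G}{H'}$.

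To finish, I would solve the linear system with coefficient matrix $M_{k,r}=\tilde{N}(r,k)$ for $k,r\in\{0,1,\ldots,q\}$. Since one cannot cover $r$ distinct elements using fewer than $r$ values, $\tilde{N}(r,k)=0$ whenever $k<r$, and a direct calculation gives $\tilde{N}(r,r)=r!$. Hence $M$ is lower-triangular with non-zero diagonal, so it is invertible and the $H_r$ are uniquely determined from $q+1=O(1)$ oracle queries (since $H$ is fixed), all of which run on graphs $G_k$ of size polynomial in $|V(G)|$. Extracting $H_q$ yields $\comp{G}{H'}$. The only minor obstacle is the degenerate case $V(H')=\emptyset$, where $H$ consists solely of size-$1$ components; but then $\Comp{H'}$ is trivially in $\FP$, so nothing needs to be proved.
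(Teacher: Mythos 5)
Your proposal is correct and follows essentially the same route as the paper's proof: append $t$ isolated vertices for $t=0,\dots,q$, observe that any homomorphism covering all non-loop edges must already use every vertex of $H'$ so that only the $q$ isolated vertices of $H$ can be missed, group homomorphisms from $G$ by the number of missed vertices, and invert the resulting lower-triangular system with diagonal entries $r!$. Your $H_r$ and $\tilde{N}(r,k)$ are exactly the paper's $S^{q-r}(G')$ and $N^r(V_t)$, and your explicit handling of the degenerate case $V(H')=\emptyset$ is a small additional tidiness over the paper's presentation.
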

\begin{proof}
Let $W=\{w_1, \dots, w_q\}$ 
be the vertices of $H$ that are contained in size-$1$ components. We can assume $q\ge 1$, otherwise $H'=H$. Let $G'$ be an input to \Comp{H'} and note that $G'$ might contain isolated vertices. For any non-negative integer~$t$, let $V_t$ be a set of $t$ isolated vertices,
distinct from the vertices of~$G'$,
and let $G_t = G' \oplus V_t$. For all $i\in \{0,\dots, t\}$, we define $S^i(G')$ to be the number of homomorphisms $\sigma$ from $G'$ to $H$ with the following properties:
\begin{enumerate}
\item $\sigma$ uses all non-loop edges of $H'$.
\item $\abs{\sigma(V(G'))\cap \{w_1, \dots, w_q\}} = i$,
\end{enumerate}
where $\sigma(V(G'))$ is the  image of $V(G')$ under the map~$\sigma$. 
We define $N^i(V_t)$ as the number of homomorphisms $\tau$ from $V_t$ to $H$ such that $\{w_1, \dots, w_i\} \subseteq \tau(V(V_t))$. 
Intuitively, $N^i(V_t)$ is the number of homomorphisms from $V_t$ to $H$ that use at least a set of $i$ arbitrary but fixed vertices of $H$, as the particular choice of vertices $\{w_1, \dots, w_i\}$ is not important when counting homomorphisms from a set of isolated vertices.
For any compaction $\gamma \from V(G_t) \to V(H)$, the restriction
$\gamma\vert_{V(G)}$ has to use all non-loop edges in $H'$. As $H'$ does not have size-$1$ components, this implies that all vertices other than $w_1, \dots, w_q$ are used by $\gamma\vert_{V(G)}$. Say, additionally, that $\gamma$ uses $q-i$ vertices from $W$, for some $i\in \{0,\dots,q\}$. Then, $\gamma\vert_{V_t}$ has to use the remaining $i$ vertices.
Thus, for each fixed $t\ge 0$, we obtain a linear equation:
\[
\underbrace{\comp{G_t}{H}}_{b_t} = \sum_{i=0}^q \underbrace{S^{q-i}(G')}_{x_i}\underbrace{N^i(V_t)}_{a_{t,i}}.
\]

By choosing $q+1$ different values for the parameter $t$ we obtain a system of linear equations. Here, we choose $t=0,\dots ,q$. Then the system is of the form $\boldb= \boldA\boldx$ for 
\begin{equation*}
\boldb=
\begin{pmatrix}
b_0\\
\vdots\\
b_q
\end{pmatrix}
\qquad
\boldA=
\begin{pmatrix}
a_{0,0} &\dots & a_{0,q}\\
\vdots & \ddots & \vdots\\
a_{q,0} &\dots & a_{q,q}\\
\end{pmatrix}
\qquad
\text{and}
\qquad
\boldx= \begin{pmatrix}
x_0\\
\vdots\\
x_q
\end{pmatrix}.
\end{equation*}
Note, that the vector $\boldb$ can be computed using $q+1$ \Comp{H} oracle calls. Further,
\[x_{q} = S^{0}(G') = \comp{G'}{H'}.\]
Thus, determining $\boldx$ is sufficient for computing the sought-for $\comp{G'}{H'}$. It remains to show that the matrix $\boldA$ is of full rank and is therefore invertible.

If $t<i$, we observe that $a_{t,i} = 0$ as we cannot use at least $i$ vertices of $H$ when we have fewer than $i$ vertices in the domain. For the diagonal elements with $t\in \{0,\dots,q\}$ we have that $a_{t,t}= N^t(V_t)=t!$ (note that $0!=1$). Hence, 
\begin{align*}
\boldA =
\begin{pmatrix}
0! & 0 &\cdots &0\\
\ast & 1! & \ddots &\vdots\\
\vdots & \ddots & \ddots & 0\\
\ast & \cdots &\ast & q!
\end{pmatrix}
\end{align*}
is a triangular matrix with non-zero diagonal entries, which completes the proof.
\end{proof}

\begin{lem}\label{lem:HomToComp2}
Let $H$ be a graph  without any size-$1$ components. Then
$\Hom{H} \le \Comp{H}$.
\end{lem}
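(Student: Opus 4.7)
The plan is to build the reduction by polynomial interpolation, taking as the gadget a matching of varying size. For each integer $t\ge 1$, let $M_t$ denote the graph consisting of $t$ disjoint (non-loop) edges on $2t$ fresh vertices, and let $G_t = G \oplus M_t$. Write $n = |V(H)|$, write $E^0 = E^0(H)$ for the set of non-loop edges of $H$, $m = |E^0|$, and let $p$ be the number of looped vertices of $H$. For every $A\subseteq V(H)$ and $B\subseteq E^0$, let $H_{A,B}$ be the loop-hereditary subgraph of $H$ on vertex set $A$ whose non-loop edges are those of $B$ lying inside $A$ (with all loops of $H$ at vertices of $A$ retained). The number of homomorphisms from a single irreflexive edge into $H_{A,B}$ equals
\[
r(A,B) \;=\; 2\,\bigl|\{e\in B : e\subseteq A\}\bigr| \;+\; \bigl|\{v\in A : v \text{ has a loop in } H\}\bigr|,
\]
since the edge can either map to a non-loop edge (with two orientations) or to a loop.

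Applying inclusion--exclusion over the uncovered vertex set $W\subseteq V(H)$ and uncovered non-loop edge set $E'\subseteq E^0$ demanded by the compaction condition, and then reindexing via $A = V(H)\setminus W$ and $B = E^0\setminus E'$, one derives the identity
\[
\comp{G_t}{H} \;=\; \sum_{A\subseteq V(H)}\sum_{B\subseteq E^0} (-1)^{(n-|A|)+(m-|B|)}\,\hom{G}{H_{A,B}}\cdot r(A,B)^{t}.
\]
Grouping the summands by the distinct positive values $\lambda_1<\cdots<\lambda_K$ taken by $r(A,B)$ (and noting that contributions with $r(A,B)=0$ vanish for $t\ge 1$), this identity takes the form $\comp{G_t}{H}=\sum_{j=1}^{K}c_{\lambda_j}\lambda_j^{t}$, where $K\le 2m+p$ is a constant depending only on $H$, and $c_{\lambda_j}=\sum_{(A,B):r(A,B)=\lambda_j}(-1)^{(n-|A|)+(m-|B|)}\hom{G}{H_{A,B}}$.

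The crucial observation is that the maximum value of $r$, namely $\lambda^\star=2m+p$, is achieved \emph{uniquely} at $(A,B)=(V(H),E^0)$: equality in $r(A,B)\le 2m+p$ forces $B=E^0$ and forces $A$ to contain every endpoint of a non-loop edge as well as every looped vertex; the hypothesis that $H$ has no size-$1$ components guarantees that every vertex of $H$ is incident to some non-loop edge, so this forces $A=V(H)$. Hence $c_{\lambda^\star}=\hom{G}{H_{V(H),E^0}}=\hom{G}{H}$, and recovering the single coefficient $c_{\lambda^\star}$ suffices.

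Querying the oracle for $\Comp{H}$ on $G_1,G_2,\ldots,G_K$ yields a linear system in the unknowns $(c_{\lambda_1},\ldots,c_{\lambda_K})$ whose coefficient matrix $(\lambda_j^{t})_{t,j\in[K]}$ is a positive-diagonal scaling of a Vandermonde matrix on the distinct positive values $\lambda_1<\cdots<\lambda_K$, hence invertible over $\mathbb{Q}$. Solving this constant-size system (its dimension depends only on $H$) extracts $c_{\lambda^\star}=\hom{G}{H}$; the reduction makes $K=O(1)$ oracle queries, each on an input of polynomial size, and so runs in polynomial time. The main obstacle in this proposal is the uniqueness-of-maximiser step, which is exactly where the no-size-$1$-components hypothesis is used; once that is in place, the remainder is routine inclusion--exclusion followed by Vandermonde interpolation.
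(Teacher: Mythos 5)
Your proof is correct, and it uses the same gadget as the paper ($G_t=G\oplus(\text{$t$ disjoint fresh edges})$, followed by interpolation over $t$), but the interpolation itself is set up quite differently. The paper partitions the compactions $G_t\to H$ according to which non-loop edges of $H$ the restriction to $G$ uses \emph{exactly}, obtaining $\comp{G_t}{H}=\sum_{k}S^{r-k}(G)\,N^k(I_t)$ with unknowns $S^{j}(G)$ (homomorphisms from $G$ using exactly $j$ non-loop edges) and coefficients $N^k(I_t)$ counting matchings-to-$H$ maps covering $k$ fixed edges; the resulting matrix is triangular with diagonal $2^t t!$, and $\hom{G}{H}$ is recovered as the \emph{sum} of all unknowns. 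You instead apply inclusion--exclusion over the uncovered vertices and edges to write $\comp{G_t}{H}$ as an exponential polynomial $\sum_j c_{\lambda_j}\lambda_j^t$ in the per-edge homomorphism counts $r(A,B)$ of the loop-hereditary subgraphs, invert a Vandermonde system, and recover $\hom{G}{H}$ as the single coefficient attached to the unique maximal value $\lambda^\star=2m+p$. Both arguments invoke the no-size-$1$-components hypothesis through the same underlying fact (every vertex of $H$ is incident to a non-loop edge): the paper uses it to conclude that covering all non-loop edges already forces vertex-surjectivity, while you use it to show the maximiser of $r(A,B)$ is unique. Your route is slightly heavier (it needs the subgraph-expansion identity, which the paper only develops later, in Theorem~\ref{thm:CompDecomp}, and for connected $H$), but it has the virtue of isolating $\hom{G}{H}$ directly as a leading coefficient rather than reassembling it from a vector of exact-usage counts; the paper's route is more elementary and self-contained at this point in the development. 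The only degenerate case your argument elides is $H=\emptyset$ (where $\lambda^\star=0$ and no positive $\lambda_j$ exist), but there $\Hom{H}$ is trivial, so nothing is lost.
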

\begin{proof}
The proof is by interpolation and is somewhat similar to the proof of Lemma~\ref{lem:HomToComp1}. Let $G$ be an input to \Hom{H}. We design a graph $G_t= G\oplus I_t$ as an input to the problem \Comp{H} by adding a set $I_t$ of $t$ disjoint new edges to the graph $G$.

We introduce some notation. Let $E^0(H)$ be the set of non-loop edges of $H$ and let $r= \abs{E^0(H)}$. Let $S^k(G)$ be the number of homomorphisms $\sigma$ from $G$ to $H$ that use exactly $k$ of the non-loop edges of $H$ (additionally, $\sigma$ might use any number of loops). Let $\{e_1, \dots, e_k\}$ be a set of $k$ arbitrary but fixed non-loop edges from $H$. We define $N^k(I_t)$ as the number of homomorphisms $\tau$ from $I_t$ to $H$ such that $\{e_1, \dots, e_k\}$ are amongst the edges used by $\tau$. Note that the particular choice of edges $\{e_1, \dots, e_k\}$ is not important when counting homomorphisms from an independent set of edges to $H$---$N^k(I_t)$ only depends on the numbers $k$ and $t$.

We observe that, for each compaction $\gamma\from V(G_t) \to V(H)$, the
restriction $\gamma\vert_{V(G)}$ uses some set $F\subseteq E^0(H)$ of non-loop
edges and does not use any other non-loop edges of $H$. Suppose that $F$ has
cardinality $\abs{F}=r-k$ for some $k\in \{0,\dots, r\}$. Then
$\gamma\vert_{V(I_t)}$ uses at least the remaining $k$ fixed non-loop edges of
$H$. As $H$ does not have any  size-$1$ components, this ensures at the same time that $\gamma$ is surjective.

Therefore, we obtain the following linear equation for a fixed $t\ge 0$:
\[
\underbrace{\comp{G_t}{H}}_{b_t} =\sum_{k=0}^r \underbrace{S^{r-k}(G)}_{x_k}\underbrace{N^k(I_t)}_{a_{t,k}}.
\]

As in the proof of Lemma~\ref{lem:HomToComp1}, we choose $t=0,\dots ,r$ to obtain a system of linear equations with
\begin{equation*}
\boldb=
\begin{pmatrix}
b_0\\
\vdots\\
b_r
\end{pmatrix}
\qquad
\boldA=
\begin{pmatrix}
a_{0,0} &\dots & a_{0,r}\\
\vdots & \ddots & \vdots\\
a_{r,0} &\dots & a_{r,r}\\
\end{pmatrix}
\qquad
\text{and}
\qquad
\boldx= \begin{pmatrix}
x_0\\
\vdots\\
x_r
\end{pmatrix}.
\end{equation*}
We can compute $\boldb$ using a \Comp{H} oracle. Further, 
\[\sum_{k=0}^r x_k = \sum_{k=0}^r S^{r-k}(G)= \sum_{k=0}^r S^k(G) = \hom{G}{H}.\]
Thus, determining $\boldx$ is sufficient for computing the sought-for number of homomorphisms $\hom{G}{H}$. 

Finally, we show that $\boldA$ is invertible. If $t<k$, we observe that $a_{t,k} = N^k(I_t) = 0$, as clearly it is impossible to use more than $t$ edges of $H$ when there are only $t$ edges in $I_t$. Further, for the diagonal elements it holds that for $t\in [r]$ we have $a_{t,t}= N^t(I_t)=2^tt!$ as there are $t!$ possibilities for assigning the edges in $I_t$ to the fixed set of $t$ edges of $H$ and there are $2^t$ vertex mappings for each such assignment of edges, also $N^0(I_0)=1$. Hence, 
\begin{align*}
\boldA =
\begin{pmatrix}
1 & 0 &\cdots &0\\
\ast & 2^11! & \ddots &\vdots\\
\vdots & \ddots & \ddots & 0\\
\ast & \cdots &\ast & 2^rr!
\end{pmatrix}
\end{align*}
is a triangular matrix with non-zero diagonal entries and is therefore invertible.
\end{proof}

\begin{thm}\label{thm:HomToComp}
Let $H$ be a graph. Then
$\Hom{H} \le \Comp{H}$.
\end{thm}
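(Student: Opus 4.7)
The plan is to combine the two preceding lemmas by interposing a preliminary reduction from $\Hom{H}$ to $\Hom{H'}$, where $H'$ denotes the graph obtained from $H$ by removing all size-$1$ components. Let $q_1$ denote the number of size-$1$ components of $H$ whose unique vertex carries a loop. Lemma~\ref{lem:HomToComp2} applied to $H'$ (which by construction has no size-$1$ components) gives $\Hom{H'} \le \Comp{H'}$, and Lemma~\ref{lem:HomToComp1} gives $\Comp{H'} \le \Comp{H}$. So it suffices to reduce $\Hom{H}$ to $\Hom{H'}$, which I would do by a direct component factorisation.

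Given an input $G$ of $\Hom{H}$, decompose $G$ into its $k$ isolated vertices and its non-singleton connected components $C_1,\ldots,C_s$. Since homomorphism counts factor across the connected components of~$G$, $\hom{G}{H}=|V(H)|^k\prod_{i=1}^s \hom{C_i}{H}$. Each $C_i$ contains at least one edge and therefore cannot be mapped into an unlooped singleton component of $H$; mapping $C_i$ into any particular looped singleton component is possible in exactly one way (send every vertex of $C_i$ to that looped vertex); and mapping $C_i$ into a component of size at least~$2$ is the same as mapping it into the corresponding component of $H'$. Consequently $\hom{C_i}{H}=q_1+\hom{C_i}{H'}$. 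Obtaining each $\hom{C_i}{H'}$ with one call to the $\Hom{H'}$ oracle and then assembling the product establishes $\Hom{H} \le \Hom{H'}$.

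Chaining the three reductions then yields $\Hom{H} \le \Hom{H'} \le \Comp{H'} \le \Comp{H}$. The only mild subtlety to address is the degenerate case in which every component of $H$ has size~$1$, so that $H'$ is empty and Lemma~\ref{lem:HomToComp2} is vacuous; in that case $\hom{C_i}{H'}=0$ for every non-singleton $C_i$, the identity collapses to $\hom{G}{H}=|V(H)|^k q_1^s$, and this quantity is directly computable in polynomial time without invoking any oracle. The proof is therefore a short chaining argument, and the main thing to get right is the component-wise bookkeeping that absorbs the contribution of the looped singletons of $H$ into the additive $q_1$ term.
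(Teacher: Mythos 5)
Your proof is correct and follows the same overall skeleton as the paper's: form $H'$ by deleting the size-$1$ components of $H$, then chain $\Hom{H} \le \Hom{H'} \le \Comp{H'} \le \Comp{H}$ using Lemma~\ref{lem:HomToComp2} applied to $H'$ and Lemma~\ref{lem:HomToComp1}. The one place you diverge is the first link of the chain. The paper disposes of $\Hom{H}\le\Hom{H'}$ (in fact $\Hom{H}\equiv\Hom{H'}$) by appealing to Theorem~\ref{thm:DGorig}: deleting size-$1$ components does not change which side of the Dyer--Greenhill dichotomy $H$ falls on, so either both problems are in $\FP$ or both are $\numP$-complete, and in either case they are trivially Turing-equivalent. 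You instead give a direct, oracle-based reduction via the component factorisation $\hom{G}{H}=\abs{V(H)}^k\prod_{i}\hom{C_i}{H}$ and the identity $\hom{C_i}{H}=q_1+\hom{C_i}{H'}$ for each non-singleton connected component $C_i$, which is valid since a connected graph with an edge maps into an unlooped singleton component of $H$ in zero ways, into a looped singleton in exactly one way, and into a larger component exactly as it maps into the corresponding component of $H'$. Your treatment of the degenerate case where $H'$ is empty is also handled correctly. The trade-off is that your argument is more elementary and self-contained --- it does not invoke the full strength of the dichotomy theorem, and it would survive in settings where no such classification is available --- whereas the paper's appeal to Theorem~\ref{thm:DGorig} is shorter and is harmless here since that theorem is already a standing hypothesis of the whole development.
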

\begin{proof}
Let   $H'$ be the graph constructed from $H$ by removing all  size-$1$ components. By Lemma~\ref{lem:HomToComp1} we obtain
$\Comp{H'}\le \Comp{H}$.
Then Lemma~\ref{lem:HomToComp2} can be applied to the graph $H'$ and thus we obtain
$\Hom{H'}\le \Comp{H'}\le \Comp{H}$.
Finally, it follows from Theorem~\ref{thm:DGorig} that $\Hom{H'}\equiv \Hom{H}$, which gives
$\Hom{H} \equiv\Hom{H'}\le \Comp{H'}\le \Comp{H}$.
\end{proof}

Theorem~\ref{thm:HomToComp} shows that hardness results 
from Theorem~\ref{thm:DGorig} will
carry over from \Hom{H} to \Comp{H}. 
We also know some 
cases where \Comp{H} is tractable from Lemma~\ref{lem:LCompFP}. 
The complexity of \Comp{H} is still unresolved if
every component of~$H$ is a reflexive clique or an irreflexive biclique,
but some reflexive clique has size greater than~$2$, or some irreflexive biclique is not a star. 
This is the case described at length at the beginning of the section.  
Recall that the first step  is to specify  a technique (Theorem~\ref{thm:CompDecomp})
that lets us compute the number of compactions from a connected graph $G$ to a connected graph $H$ 
using a weighted sum of homomorphism counts,
say $\hom{G}{J_1},\ldots,\hom{G}{J_k}$. 
Towards this end,
 we introduce some definitions which we will use repeatedly in the remainder of this section.

\begin{defn}\label{defn:wGS}
A \emph{weighted graph set} is a tuple $(\calH,\lambda)$, where $\calH$ is a set of \emph{non-empty, pairwise non-isomorphic, connected} graphs and $\lambda$ is a function $\lambda\from \calH\to\Z$.
\end{defn}

\begin{defn}\label{defn:calS_H}
Let $H$ be a connected graph. By $Sub(H)$ we denote the set of non-empty, loop-hereditary, connected subgraphs of $H$. Let $\calS_H$ be a set which contains exactly one representative of each isomorphism class of the graphs in $Sub(H)$. Finally, for $H' \in \calS_H$, we define $\mu_H(H')$ to be the number of graphs in $Sub(H)$ that are isomorphic to $H'$.
\end{defn}

Note that for a connected graph $H$, we have $\mu_H(H)=1$. 

\begin{defn}\label{defn:lambda_H}
For each non-empty connected graph $H$, we define a weight function $\lambda_H$ which assigns an integer weight to each non-empty connected graph $J$. 
\begin{itemize}
\item 
If $J$ is not isomorphic to any graph in $\calS_H$,  then $\lambda_H(J)=0$. 
\item If   $J\cong H$,  then
$\lambda_H(J)=1$.
\item 
Finally, if $J$ is isomorphic to some graph in $\calS_H$ but $J\ncong H$, we define~$\lambda_H(J)$ inductively 
as follows.
\[\lambda_H(J)=-\sum_{\substack{H'\in \calS_H \\ \text{s.t. }H'\ncong H}} \mu_H(H')\lambda_{H'}(J).\]
\end{itemize}
Note that $\lambda_H$ is well-defined as all graphs $H'\in \calS_H$ with $H'\ncong H$ are smaller than $H$ either in the sense of having fewer vertices or in the sense of having the same number of vertices but fewer edges.
\end{defn}

The following theorem is the key to our approach for 
computing the number of compactions from a connected graph $G$ to a connected graph $H$ using a
weighted sum of homomorphism counts. In  the Appendix, we give an illustrative example where we verify the theorem for the case $H=K_{2,3}$ and 
we give the intuition behind the definitions. 
Here we go on to give the formal statement and proof.

\begin{thm}\label{thm:CompDecomp}
Let $H$ be a non-empty connected graph. Then for every non-empty, irreflexive and connected graph $G$ we have
$
\comp{G}{H} = \sum_{J\in \calS_H} \lambda_H(J)\hom{G}{J}$.
\end{thm}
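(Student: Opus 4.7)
The strategy is to first establish a forward identity expressing $\hom{G}{H}$ as a sum of compaction counts to the connected loop-hereditary subgraphs of $H$, and then to invert that identity by induction on the size of $H$. The recursive definition of $\lambda_H$ in Definition~\ref{defn:lambda_H} is engineered precisely so that the inversion telescopes.

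The forward identity is
\[
\hom{G}{H} \;=\; \sum_{H' \in \calS_H} \mu_H(H')\, \comp{G}{H'}.
\]
To prove it, I would attach to each homomorphism $\sigma\from G \to H$ a unique ``image'' $H_\sigma \in Sub(H)$: the subgraph of $H$ whose vertex set is $\sigma(V(G))$, whose non-loop edges are exactly those used by $\sigma$, and whose loop edges are all loops of $H$ incident to a vertex of $\sigma(V(G))$. Since $G$ is non-empty and connected, so is $H_\sigma$, and $H_\sigma$ is loop-hereditary in $H$ by construction. The construction also makes $\sigma$ a compaction to $H_\sigma$: every vertex of $H_\sigma$ is hit, every non-loop edge of $H_\sigma$ is explicitly one used by $\sigma$, and any edge of $G$ on which $\sigma$ collapses to a single vertex $v$ forces a loop at $v$ in $H$, which then lies in $H_\sigma$ by loop-heredity. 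Conversely, for $H' \in Sub(H)$, any compaction $\sigma'\from G \to H'$ is a homomorphism to $H$ with $H_{\sigma'} = H'$, since the compaction condition pins down the vertices and non-loop edges of $H'$ while loop-heredity pins down its loops. Partitioning homomorphisms by $H_\sigma$ and grouping by isomorphism class then yields the displayed identity.

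Given the forward identity, I would finish by induction on $n(H) := |V(H)| + |E(H)|$. The base case $n(H) = 1$ (a single vertex, looped or not) is immediate: $\calS_H = \{H\}$, $\lambda_H(H) = 1$, and every homomorphism from a connected irreflexive $G$ to such an $H$ is trivially a compaction. For the inductive step, use $\mu_H(H) = 1$ to peel off the $H' \cong H$ term, giving
\[
\comp{G}{H} \;=\; \hom{G}{H} \;-\; \sum_{H' \in \calS_H,\, H' \ncong H} \mu_H(H')\, \comp{G}{H'}.
\]
Each remaining $H'$ satisfies $n(H') < n(H)$, so by induction $\comp{G}{H'} = \sum_{J \in \calS_{H'}} \lambda_{H'}(J)\, \hom{G}{J}$. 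Because loop-heredity is transitive, every $J \in \calS_{H'}$ is isomorphic to some element of $\calS_H$; substituting and swapping the order of summation thus permits one to re-index the outer sum over $J \in \calS_H$. Matching the resulting coefficients of $\hom{G}{J}$ against Definition~\ref{defn:lambda_H} finishes the argument: the coefficient of $\hom{G}{H}$ is $1 = \lambda_H(H)$, and for each $J \in \calS_H$ with $J \ncong H$ the coefficient is exactly the recursive value $-\sum_{H' \in \calS_H,\, H' \ncong H} \mu_H(H')\, \lambda_{H'}(J)$.

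The main obstacle is Step 1: correctly defining $H_\sigma$ so that loops forced by loop-heredity are included, rather than only the loops actually used by $\sigma$, and verifying that $\sigma \mapsto (H_\sigma, \sigma)$ is a bijection onto pairs consisting of some $H' \in Sub(H)$ together with a compaction $\sigma'\from G \to H'$. Once the forward identity is in hand, Step~2 is a purely formal bookkeeping induction, since Definition~\ref{defn:lambda_H} is built to make the cancellations occur.
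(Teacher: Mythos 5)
Your proposal is correct and follows essentially the same route as the paper: the paper's proof likewise rests on the identity $\hom{G}{H}=\sum_{H'\in\calS_H}\mu_H(H')\comp{G}{H'}$ (which it asserts via the observation that every homomorphism from a connected $G$ is a compaction onto a unique non-empty, loop-hereditary, connected subgraph) and then inverts it by induction over an ordering in which proper subgraphs precede their supergraphs, exactly as in your Step~2. Your explicit construction of $H_\sigma$ merely fills in the detail the paper leaves implicit.
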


\begin{proof}
Let $H_1, H_2, \dots$ be the set of non-empty connected graphs sorted by some fixed ordering that ensures that if $H_i$ is isomorphic to a subgraph of $H_j$, then $i\le j$. 
We verify the statement of the theorem by induction over the graph index with respect to this ordering. Let $G$ be non-empty, irreflexive and connected.

For the base case, $H_1$ is $K_1$, which is the graph with one vertex and no edges.
In this case, $\calS_{H_1}=\{K_1\}$ and $\lambda_{K_1}(K_1)=1$. Also 
\[\comp{G}{K_1} = \hom{G}{K_1}.\]
So the theorem holds in this case. 

Now assume that the statement holds for all graphs up to index $i$ and consider the graph $H_{i+1}$. For ease of notation we set $H=H_{i+1}$. We use the fact that every homomorphism from a connected graph $G$ to $H_{i+1}$ is a compaction onto some non-empty, loop-hereditary and connected subgraph of $H_{i+1}$ and vice versa. Thus, it holds that
\begin{align*}
\hom{G}{H} &= \sum_{H'\in \calS_{H}} \mu_H(H')\cdot\comp{G}{H'}\\
&= \comp{G}{H}+ \sum_{\substack{H'\in \calS_H \\ \text{s.t. }H'\ncong H}} \mu_H(H')\cdot \comp{G}{H'}.
\end{align*}
Thus, we can rearrange and use the induction hypothesis to obtain
\begin{align*}
\comp{G}{H} &= \hom{G}{H} - \sum_{\substack{H'\in \calS_H \\ \text{s.t. }H'\ncong H}} \mu_H(H')\cdot \comp{G}{H'}\\
&= \hom{G}{H} - \sum_{\substack{H'\in \calS_H \\ \text{s.t. }H'\ncong H}} \mu_H(H') \cdot \sum_{J\in \calS_{H'}} \lambda_{H'}(J)\hom{G}{J}.\\
\intertext{Then we change the order of summation and use that $\lambda_{H'}(J)=0$ if $J$ is not isomorphic to any graph in $\calS_{H'}$ to collect all coefficients that belong to a particular term $\hom{G}{J}$. We obtain}
\comp{G}{H}&=\hom{G}{H} - \sum_{\substack{J\in \calS_H \\ \text{s.t. }J\ncong H}} \Bigl(\sum_{\substack{H'\in \calS_H \\ \text{s.t. }H'\ncong H}}\mu_H(H')\lambda_{H'}(J)\Bigr)\hom{G}{J}\\
&= \sum_{J\in \calS_H} \lambda_H(J)\hom{G}{J}.
\end{align*}
\end{proof}

We remark that Theorem~\ref{thm:CompDecomp} can be generalised to graphs $H$ and $G$ with multiple connected components by looking at all subgraphs of $H$, rather than just at the connected ones. However, within this work, the version for connected graphs suffices.

Let $(\calH,\lambda)$ be a weighted graph set. The following parameterised problem is not natural in its own right, 
but it helps us to analyse the complexity of \cComp{H}:
\prob{\cGSHom{(\calH,\lambda)}.}{An irreflexive, connected graph $G$.}{$Z_{\calH,\lambda}(G)= \begin{cases}
0 &\quad \text{if $G$ is empty}\\
\sum_{J\in \calH} \lambda(J)\hom{G}{J}&\quad \text{otherwise.}
\end{cases}$}

\begin{cor}
\label{cor:Comp=GSHom}
Let $H$ be a non-empty connected graph. Then
$$\cComp{H} \equiv \cGSHom{(\calS_H,\lambda_H)}.$$
\end{cor}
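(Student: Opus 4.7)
The plan is to derive both reductions directly from the identity in Theorem~\ref{thm:CompDecomp}, which asserts that for any non-empty, irreflexive, connected graph $G$ we have $\comp{G}{H} = \sum_{J \in \calS_H} \lambda_H(J) \hom{G}{J}$. The only wrinkle is that the definition of $\cGSHom{(\calH,\lambda)}$ treats the empty graph as a special case (outputting~$0$), so I need to check that this convention is consistent with $\cComp{H}$ for an empty input.

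For the reduction $\cComp{H} \leq \cGSHom{(\calS_H,\lambda_H)}$, I take an input $G$ to $\cComp{H}$. If $G$ is empty then $\comp{G}{H} = 0$, because $H$ is non-empty and so no map from the empty vertex set can use every vertex of $H$; the reduction simply outputs~$0$. Otherwise $G$ is non-empty, irreflexive, and connected, so I call the oracle for $\cGSHom{(\calS_H,\lambda_H)}$ on~$G$ to obtain $Z_{\calS_H,\lambda_H}(G) = \sum_{J\in\calS_H} \lambda_H(J)\hom{G}{J}$, which by Theorem~\ref{thm:CompDecomp} equals $\comp{G}{H}$; I return this value.

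For the reverse reduction $\cGSHom{(\calS_H,\lambda_H)} \leq \cComp{H}$, I take an input $G$ to $\cGSHom{(\calS_H,\lambda_H)}$. If $G$ is empty then by definition $Z_{\calS_H,\lambda_H}(G) = 0$, so I output~$0$ without invoking the oracle. Otherwise I invoke the $\cComp{H}$ oracle on $G$ and output $\comp{G}{H}$, which by Theorem~\ref{thm:CompDecomp} equals $\sum_{J\in\calS_H} \lambda_H(J)\hom{G}{J} = Z_{\calS_H,\lambda_H}(G)$.

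There is no substantial obstacle here: both reductions are single oracle calls composed with a trivial case distinction on whether $G$ is empty, and correctness in the non-empty case is immediate from Theorem~\ref{thm:CompDecomp}. The only thing worth double-checking is the empty-graph edge case, which works out because $H$ is assumed non-empty and therefore admits no compaction from an empty graph, matching the convention built into the definition of $Z_{\calH,\lambda}$.
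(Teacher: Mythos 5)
Your proposal is correct and matches the paper's approach: the paper simply states that the corollary follows directly from Theorem~\ref{thm:CompDecomp}, and your two single-oracle-call reductions (with the empty-graph case handled by the convention built into $Z_{\calH,\lambda}$ and the fact that a non-empty $H$ admits no compaction from an empty $G$) are exactly the details being left implicit.
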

\begin{proof}
The corollary follows directly from Theorem~\ref{thm:CompDecomp}.
\end{proof}

Corollary~\ref{cor:Comp=GSHom} gives us the desired connection between 
weighted graph sets and compactions.  We will use this later in the proof of Lemma~\ref{thm:simpleCompDichotomy}
to establish the $\numP$-hardness of $\cComp{H}$ when $H$ is either a reflexive clique of size at least~$3$
or an irreflexive biclique that is not a star.

Our next goal is to prove Theorem~\ref{thm:CompInterpol1},
which states that,  for certain weighted graph sets $(\calH,\lambda)$, 
determining $Z_{\calH,\lambda}(G)$ is at least as hard as computing 
$\hom{G}{J}$ for some graph $J$ from the set $\calH$ with $\lambda(J)\neq 0$. 
To this end, we first introduce two lemmas that   help us to distinguish between 
different graphs $J$ in the interpolation 
that we will later use to prove Theorem~\ref{thm:CompInterpol1}.

For the following lemmas, we introduce some new notation. For a graph $G$ with distinguished vertex $v\in V(G)$ and a graph $H$ with distinguished vertex $w\in V(H)$, the quantity
$\hom{(G,v)}{(H,w)}$
denotes the number of homomorphisms $h$ from $G$ to $H$ with $h(v)=w$. Analogously, $\inj{(G,v)}{(H,w)}$ denotes the number of injective homomorphisms $h$ from $G$ to $H$ with $h(v)=w$. 
If there exists an isomorphism from $G$ to $H$ that maps $v$ onto $w$, we write $(G,v) \cong (H,w)$, otherwise we write $(G,v) \ncong (H,w)$. In the following lemma, we show that for two such target entities $(H_1 ,w_1)$ and $(H_2, w_2)$ that are non-isomorphic, there exists an input which separates them. To this end, we use an argument very similar to that presented in~\cite[Lemma 3.6]{Gobel2016} and in the textbook by Hell and Ne\v{s}et\v{r}il~\cite[Theorem 2.11]{HellNesetrilBook}, which goes back to the works of Lov\'asz~\cite[Theorem 3.6]{Lovasz1967}. 

\begin{lem}\label{lem:Lovasz1}
Let $H_1$ and $H_2$ be connected graphs with distinguished vertices $w_1\in V(H_1)$ and $w_2\in V(H_2)$ such that $(H_1 ,w_1) \ncong (H_2, w_2)$.  Suppose that one of the following cases holds:
\begin{itemize}
\settowidth{\templength}{Case}
\setlength{\itemindent}{\templength}
\item[Case 1.] $H_1$ and $H_2$ are reflexive graphs.
\item[Case 2.] $H_1$ and $H_2$ are irreflexive bipartite graphs, each of which contains at least one edge.
\end{itemize}
Then 
\begin{enumerate}
\renewcommand\labelenumi{\roman{enumi})}
\item There exists a connected irreflexive graph $G$ with distinguished vertex $v\in V(G)$ for which
$
\hom{(G,v)}{(H_1,w_1)} \neq \hom{(G,v)}{(H_2,w_2)}$.
\item In Case 2 we can assume that $G$ contains at least one edge and is bipartite.
\end{enumerate}
\end{lem}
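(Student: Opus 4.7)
The plan is to prove both cases uniformly, by contradiction, via a pointed version of Lov\'asz's classical recovery argument. Suppose, towards a contradiction, that no separating pair exists: for every pointed connected irreflexive $(G,v)$ (additionally bipartite with at least one edge in Case~2) we have $\hom{(G,v)}{(H_1,w_1)} = \hom{(G,v)}{(H_2,w_2)}$. The goal will be to derive $(H_1,w_1)\cong(H_2,w_2)$, contradicting the hypothesis of the lemma.

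The starting point is the standard identity obtained by grouping homomorphisms according to the partition of $V(G)$ induced by their fibres:
\[\hom{(G,v)}{(H,w)} \;=\; \sum_{P\in\Pi(V(G))} \inj{(G/P,[v])}{(H,w)},\]
where $G/P$ is the (possibly looped) quotient graph and $[v]$ is the class containing $v$. Möbius inversion over the partition lattice therefore expresses $\inj{(G,v)}{(H,w)}$ as an integer linear combination of the hom counts $\hom{(G/P,[v])}{(H,w)}$. Once the assumed hom equality is known to hold for each quotient $G/P$ appearing in the sum, linearity yields
\[\inj{(G,v)}{(H_1,w_1)} \;=\; \inj{(G,v)}{(H_2,w_2)}\]
for every pointed connected irreflexive $(G,v)$ allowed by the statement.

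The main technical obstacle is to verify the hom equality for these quotients, since $G/P$ need not lie in the class targeted by our assumption. In Case~1 both $H_i$ are reflexive, so loop constraints at the source are automatically satisfied; the count $\hom{(G',v')}{(H_i,w_i)}$ is therefore unchanged when loops are added to or deleted from $G'$, which extends the assumed equality to every pointed connected graph, quotients included. In Case~2 every quotient $G/P$ falls into one of three mutually exclusive sub-cases: it contains a loop (hom count $0$ to either irreflexive $H_i$), it is irreflexive and contains an odd cycle (hom count $0$ to either bipartite $H_i$), or it is connected, irreflexive, bipartite, and has at least one edge (covered directly by the assumption). In every sub-case the two hom counts coincide, so the Möbius inversion succeeds.

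Finally, apply the injective-hom equality to a carefully chosen $(G,v)$. In Case~2 take $(G,v)=(H_1,w_1)$; in Case~1 take $(G,v)=(H_1^{\circ},w_1)$, where $H_1^{\circ}$ is obtained from $H_1$ by removing all loops. In either case $G$ is a pointed connected irreflexive graph with the same vertex set as $H_1$, and the identity map witnesses $\inj{(G,v)}{(H_1,w_1)}\ge 1$; hence there is an injective homomorphism $f\colon(G,v)\to(H_2,w_2)$. Running the symmetric argument with $H_1,H_2$ swapped yields an injective homomorphism in the opposite direction. A standard count on vertices and on non-loop edges then forces both maps to be vertex bijections that additionally reflect non-loop edges; since $H_1,H_2$ are either both reflexive or both irreflexive, $f$ is in fact an isomorphism sending $w_1$ to $w_2$. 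This gives $(H_1,w_1)\cong(H_2,w_2)$, the desired contradiction. Part~(ii) follows at once, because the $(G,v)=(H_1,w_1)$ used in Case~2 is bipartite and contains an edge by hypothesis.
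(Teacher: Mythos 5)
Your proposal is correct and follows essentially the same route as the paper's proof: a Lov\'asz-style contradiction argument using the quotient-graph identity $\hom{(G,v)}{(H,w)}=\sum_{\theta}\inj{(G|_\theta,v_\theta)}{(H,w)}$, the same case analysis showing that only admissible quotients contribute (loops kill Case~2, source-loops are harmless in Case~1), and the same endgame of mutual injective homomorphisms forcing $(H_1,w_1)\cong(H_2,w_2)$. The only differences are presentational: you package the recovery of injective counts as M\"obius inversion over the partition lattice where the paper runs an explicit induction on $|V(G)|$, and you handle Case~1 by noting source-loops are irrelevant up front where the paper works with reflexive sources and strips loops at the end.
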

\begin{proof}

In order to shorten the proof, we define some notation that depends on which case holds. In Case 1, we say that a tuple $(G,v)$ consisting of a graph $G$ with distinguished vertex $v$ is \emph{relevant} if $G$ is connected and reflexive. In Case 2, we say that it is relevant if $G$ is connected, irreflexive and bipartite and contains at least one edge. We start with a claim that applies in either case.

\medskip
\noindent {\bf Claim: There exists a relevant $(G,v)$ such that
$$\hom{(G,v)}{(H_1,w_1)} \neq \hom{(G,v)}{(H_2,w_2)}.$$}
\medskip

\noindent {\bf Proof of the claim:}\quad
To prove the claim, assume for a contradiction that for all relevant $(G,v)$ we have
\begin{equation}\label{equ:Lovasz1}
\hom{(G,v)}{(H_1,w_1)} = \hom{(G,v)}{(H_2,w_2)}.
\end{equation}
The contradiction will follow from the following subclaim:

\medskip
\noindent{\bf Subclaim: For every relevant $(G,v)$,}
$\inj{(G,v)}{(H_1,w_1)} = \inj{(G,v)}{(H_2,w_2)}${\bf.}

\medskip 

\noindent {\bf Proof of the subclaim:}\quad
The proof of the subclaim is by induction on the number of vertices of $G$. For the base case of the induction we treat the two cases separately. 

In Case 1, the base case of the induction is $\abs{V(G)}=1$.
The relevant $(G,v)$ is the graph consisting of the single (looped) vertex~$v$.
For every reflexive graph $H$ and vertex $w\in V(H)$ we have that $\hom{(G,v)}{(H,w)} = \inj{(G,v)}{(H,w)}$. 
Therefore, (\ref{equ:Lovasz1}) implies that the subclaim is true for this $(G,v)$.
 
In Case 2, the base case of the induction is $\abs{V(G)}=2$.
(There are no relevant $(G,v)$ with $\abs{V(G)}<2$ since $G$ has to contain an edge.) 
Consider a relevant $(H,w)$. Since $H$ is irreflexive
and the two vertices of $G$ are connected by an edge (so cannot be mapped by a homomorphism to
the same vertex of $H$) we have  $\hom{(G,v)}{(H,w)} = \inj{(G,v)}{(H,w)}$.
Once again, (\ref{equ:Lovasz1}) implies that the subclaim is true for this $(G,v)$.

\begin{figure}[t]\centering
\def\scaleFactor{.9}
\begin{minipage}{.3\linewidth}\centering
\begin{tikzpicture}[scale=\scaleFactor, baseline=\scaleFactor*(2cm-.2\baselineskip)]

			\coordinate (v1)  at (0, 4);
			\node at (v1) [left = 1mm of v1] {$v_1$};
			\coordinate (v4)  at (0, 0);
			\node at (v4) [right = 1mm of v4] {$v_4$};
			\coordinate (v3)  at (1.5, 2);
			\node at (v3) [right = 1mm of v3] {$v_3$};
			\coordinate (v2)  at (3, 4);
			\node at (v2) [right = 1mm of v2] {$v_2$};
			\coordinate (v5)  at (3, 0);
			\node at (v5) [left = 1mm of v5] {$v_5$};
			
			\fill (v1) circle[radius=3pt];
			\fill (v2) circle[radius=3pt];
			\fill (v3) circle[radius=3pt];
			\fill (v4) circle[radius=3pt];
			\fill (v5) circle[radius=3pt];
	
			\draw (v1) -- (v2) -- (v3) -- cycle;
			\draw (v3) -- (v4);
			\draw (v3) -- (v5);
			\addvmargin{7mm}
\end{tikzpicture}\\
$G$
\end{minipage}
\begin{minipage}{.3\linewidth}\centering
\begin{tikzpicture}[scale=\scaleFactor, baseline=\scaleFactor*(2cm-.2\baselineskip)]

			\coordinate (v1)  at (0, 4);
			\node at (v1) [left = 1mm of v1] {$v_1$};
			\coordinate (v4)  at (0, 0);
			\node at (v4) [right = 1mm of v4] {$v_4$};
			\coordinate (v3)  at (1.5, 2);
			\node at (v3) [right = 1mm of v3] {$v_3$};
			\coordinate (v2)  at (3, 4);
			\node at (v2) [right = 1mm of v2] {$v_2$};
			\coordinate (v5)  at (3, 0);
			\node at (v5) [left = 1mm of v5] {$v_5$};
			
			\fill (v1) circle[radius=3pt];
			\fill (v2) circle[radius=3pt];
			\fill (v3) circle[radius=3pt];
			\fill (v4) circle[radius=3pt];
			\fill (v5) circle[radius=3pt];
	
			\draw (v1) -- (v2) -- (v3) -- cycle;
			\draw (v3) -- (v4);
			\draw (v3) -- (v5);
			
			{
			\coordinate (ecenter)  at (.75, 3);
			\begin{scope}[rotate around={-51:(ecenter)}]
    			\draw (ecenter) ellipse (2cm and 1cm);
			\end{scope}
			}
			
			{
			\coordinate (ecenter)  at (1.5, 0);
    		\draw (ecenter) ellipse (2cm and 1cm);
			}
			
			\draw (v2) circle[radius=.75cm];
			\addvmargin{2mm}
\end{tikzpicture}\\
$\theta$
\end{minipage}
\begin{minipage}{.3\linewidth}\centering
\begin{tikzpicture}[scale=\scaleFactor, baseline=\scaleFactor*(2cm-.2\baselineskip), every loop/.style={min distance=15mm,looseness=10}]

			\coordinate (v1)  at (1.5, 4);
			\node at (v1) [left = 1mm of v1] {$\{v_2\}$};
			\coordinate (v3)  at (1.5, 2);
			\node at (v3) [left = 1mm of v3] {$\{v_1,v_3\}$};
			\coordinate (v5)  at (1.5, 0);
			\node at (v5) [left = 1mm of v5] {$\{v_4,v_5\}$};
			
			\fill (v1) circle[radius=3pt];
			\fill (v3) circle[radius=3pt];
			\fill (v5) circle[radius=3pt];
	
			\path[-] (v3) edge  [in= 30,out= -30,loop] node {} ();
			
			\draw (v1) -- (v3) -- (v5);
			\addvmargin{7mm}
\end{tikzpicture}\\
$G|_\theta$
\end{minipage}
\caption{Graph $G$ and the corresponding quotient graph $G\vert_\theta$ for $\theta=\{\{v_2\}, \{v_1,v_3\},\{v_4,v_5\}\}$.}
\label{fig:Quotient}
\end{figure}
 
For the inductive step, suppose that the subclaim holds for all relevant $(G,v)$ 
in which $G$ has up to $k-1$ vertices. Consider a relevant $(G,v)$ with $\abs{V(G)}=k$. Let $\Theta$ be the set of partitions of $V(G)$ --- that is, each $\theta\in \Theta$ is a set $\{U_1,\dots, U_j\}$ for some integer $j$ such that the elements of $\theta$ are non-empty and pairwise disjoint subsets of $V(G)$ with $\bigcup_{i=1}^j U_i = V(G)$. For $\theta\in \Theta$ with $\theta  =  \{U_1,\dots, U_j\}$, by $G|_\theta$ we denote the corresponding \emph{quotient graph}, i.e.\ let $G|_\theta$ be the graph with vertices $\{U_1,\dots, U_j\}$ that has an edge $\{U_i,U_{i'}\}$ if and only if there exist $v\in U_i$ and $u\in U_{i'}$ with $\{v,u\}\in E(G)$. Therefore, $G|_\theta$ might have loops but no multi-edges, see Figure~\ref{fig:Quotient}. Let $v_\theta$ denote the vertex of $G|_\theta$ which corresponds to the equivalence class of $\theta$ that contains the distinguished vertex $v$. Finally, let $\tau$ denote the partition of $V(G)$ into singletons. Then for every relevant $(H,w)$ it holds that
\begin{align}
\hom{(G,v)}{(H,w)} &= \sum_{\theta\in \Theta} \inj{(G|_\theta,v_\theta)}{(H,w)}\nonumber\\
&=\inj{(G|_\tau,v_\tau)}{(H,w)} +\sum_{\theta\in \Theta\setminus\{\tau\}} \inj{(G|_\theta,v_\theta)}{(H,w)}\nonumber\\
&=\inj{(G,v)}{(H,w)} +\sum_{\theta\in \Theta\setminus\{\tau\}} \inj{(G|_\theta,v_\theta)}{(H,w)},\label{equ:Lovasz4}
\end{align}
where the third equality follows as $G|_\tau = G$.

Now we show that only relevant tuples $(G|_\theta,v_\theta)$ actually contribute to the sum in~(\ref{equ:Lovasz4}). First, note that since $G$ is connected, so is $G|_\theta$.

In Case 1, every quotient graph $G|_\theta$ is reflexive. Therefore, for every $\theta\in \Theta\setminus\{\tau\}$, the tuple $(G|_\theta,v_\theta)$ is relevant.

In Case 2, $H$ is an irreflexive bipartite graph with at least one edge. Therefore, we have $\inj{(G|_\theta,v_\theta)}{(H,w)}>0$ only if $G|_\theta$ is an irreflexive bipartite graph and also, $\theta$ is a proper vertex-colouring of $G$, i.e.\ every part of $\theta$ is an independent set. For such a partition $\theta$, $G\vert_\theta$ has at least one edge if $G$ does. We have now shown that only relevant tuples $(G|_\theta,v_\theta)$ contribute to the sum in~(\ref{equ:Lovasz4}).

Therefore, let $\Gamma$ be the set of all partitions $\theta$ of $V(G)$ such that $(G|_\theta,v_\theta)$ is relevant. Then, we can rephrase~(\ref{equ:Lovasz4}) as follows.
\begin{equation}\label{equ:Lovasz3}
\hom{(G,v)}{(H,w)} =\inj{(G,v)}{(H,w)} +\sum_{\theta\in \Gamma\setminus\{\tau\}} \inj{(G|_\theta,v_\theta)}{(H,w)}.
\end{equation}
To prove the subclaim, we can set $(H,w)$ in~(\ref{equ:Lovasz3}) to be $(H_1,w_1)$. Similarly, we can set it to be $(H_2,w_2)$. Then, we can use the induction hypothesis, the subclaim, on all tuples $(G|_\theta,v_\theta)$ in the sum as all these tuples are relevant and the partitions $\theta \in \Gamma\setminus\{\tau\}$ have strictly fewer than $k$ parts. Applying~(\ref{equ:Lovasz1}), we obtain
\[
\inj{(G,v)}{(H_1,w_1)} = \inj{(G,v)}{(H_2,w_2)},
\]
which completes the induction and the proof of the subclaim. 
{\bf (End of the proof of the subclaim.)}
\medskip
 
 We show next how to use the subclaim  to derive a contradiction. In particular, in the subclaim we can set $(G,v)$ to be either $(H_1,w_1)$ or $(H_2,w_2)$. This implies $(H_1 ,w_1) \cong (H_2, w_2)$, which gives the desired contradiction. Thus, we have shown contrary to~(\ref{equ:Lovasz1}) that there exists a relevant $(G,v)$ with
\[
\hom{(G,v)}{(H_1,w_1)} \neq \hom{(G,v)}{(H_2,w_2)}
\]
and therefore we have proved the claim.
{\bf (End of the proof of the claim.)}
\medskip 

In Case 2, the claim is identical to the statement of the lemma. However, in Case 1 a relevant tuple $(G,v)$ contains a reflexive graph $G$, whereas for the statement of the lemma, $G$ has to be irreflexive. This is easily fixed as we can set $G^0$ to be the graph constructed from $G$ by removing all loops. Using the fact that $H_1$ and $H_2$ are reflexive, we obtain for $i=1$ and $i=2$ that
\begin{equation*}
\hom{(G^0,v)}{(H_i,w_i)} = \hom{(G,v)}{(H_i,w_i)}.
\end{equation*}
Hence, the choice $(G^0,v)$ has all the desired properties.
\end{proof}

In the following lemma, we generalise the pairwise property from Lemma~\ref{lem:Lovasz1}. The result and the proof are adapted versions of~\cite[Lemma 6]{Galanis2016}. For ease of notation let $\binom{[k]}{2}$ denote the set of all pairs $\{i,j\}$ with $i,j\in [k]$ and $i\neq j$.

\begin{lem}\label{lem:Lovasz2}
Let $H_1, \dots, H_k$ be connected graphs with distinguished vertices $w_1, \dots, w_k$ where $w_i\in V(H_i)$ for all $i\in [k]$ and, for every pair $\{i,j\} \in \binom{[k]}{2}$, we have $(H_i ,w_i)\ \ncong (H_j, w_j)$.  Suppose that one of the following cases holds:
\begin{itemize}
\settowidth{\templength}{Case}
\setlength{\itemindent}{\templength}
\item[Case 1.] $\forall i\in [k]$, $H_i$ is a reflexive graph.
\item[Case 2.] $\forall i\in [k]$, $H_i$ is an irreflexive bipartite graph that contains at least one edge.
\end{itemize}
Then 
\begin{enumerate}
\renewcommand\labelenumi{\roman{enumi})}
\item There exists a connected irreflexive graph $G$ with a distinguished vertex $v\in V(G)$ such that, for every $\{i,j\} \in \binom{[k]}{2}$, it holds that
$
\hom{(G,v)}{(H_i,w_i)} \neq \hom{(G,v)}{(H_j,w_j)}$.
\item In Case 2 we can assume that $G$ contains at least one edge and is bipartite.
\end{enumerate}
\end{lem}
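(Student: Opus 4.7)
The plan is to amalgamate the pairwise separations supplied by Lemma~\ref{lem:Lovasz1} into a single rooted graph by gluing many copies of the pairwise-distinguishing graphs together at a common root. For each pair $\{i,j\}\in\binom{[k]}{2}$, invoke Lemma~\ref{lem:Lovasz1} to obtain an irreflexive connected graph $G_{i,j}$ with distinguished vertex $v_{i,j}$ such that $\hom{(G_{i,j},v_{i,j})}{(H_i,w_i)}\neq \hom{(G_{i,j},v_{i,j})}{(H_j,w_j)}$; in Case~2, additionally require $G_{i,j}$ to be bipartite and to contain an edge. For every tuple of positive integers $a=(a_{i,j})$, let $G(a)$ be obtained by taking $a_{i,j}$ vertex-disjoint copies of $G_{i,j}$ for each pair and then identifying the distinguished vertices of all copies to a single vertex $v$. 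Writing $c_{i,j}^{(\ell)}:=\hom{(G_{i,j},v_{i,j})}{(H_\ell,w_\ell)}$, multiplicativity of rooted homomorphism counts yields
$$\hom{(G(a),v)}{(H_\ell,w_\ell)}=\prod_{\{i,j\}\in\binom{[k]}{2}}\bigl(c_{i,j}^{(\ell)}\bigr)^{a_{i,j}}.$$

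The next step is to verify that $c_{i,j}^{(\ell)}\geq 1$ for all $i,j,\ell$, so that logarithms are available. In Case~1, every vertex of $G_{i,j}$ can be sent to the looped vertex $w_\ell$. In Case~2, $H_\ell$ is a connected bipartite graph containing an edge, so $w_\ell$ has a neighbour $w'_\ell$; using the bipartition of $G_{i,j}$ with $v_{i,j}$ in one class, one obtains a homomorphism sending the class of $v_{i,j}$ to $w_\ell$ and the other class to $w'_\ell$. With positivity in hand, the required condition $\hom{(G(a),v)}{(H_p,w_p)}\neq\hom{(G(a),v)}{(H_q,w_q)}$ for a pair $\{p,q\}\in\binom{[k]}{2}$ becomes, after taking logarithms,
$$L_{p,q}(a):=\sum_{\{i,j\}}a_{i,j}\bigl(\log c_{i,j}^{(p)}-\log c_{i,j}^{(q)}\bigr)\neq 0.$$
The coefficient of $a_{p,q}$ in $L_{p,q}$ equals $\log c_{p,q}^{(p)}-\log c_{p,q}^{(q)}$, which is nonzero by the choice of $G_{p,q}$, so $L_{p,q}$ is a nontrivial linear form. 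Its vanishing set is therefore a proper hyperplane in $\mathbb{R}^{\binom{k}{2}}$, and the finite union of these hyperplanes cannot cover $\mathbb{Z}_{\geq 1}^{\binom{k}{2}}$; concretely, taking $a_{i,j}=B^{n(i,j)}$ for distinct exponents $n(i,j)$ and a sufficiently large integer base $B$ makes each $L_{p,q}(a)$ dominated by a single term and hence nonzero.

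Finally, one checks the structural properties of $G(a)$: it is connected because each $G_{i,j}$ is connected and all copies share the vertex $v$, and it is irreflexive because no $G_{i,j}$ has a loop. For Case~2, choose a $2$-colouring of each $G_{i,j}$ in which $v_{i,j}$ receives colour~$0$; these colourings agree at $v$ and, since the copies are otherwise vertex-disjoint, they combine into a proper $2$-colouring of $G(a)$, so $G(a)$ is bipartite. Moreover, $G(a)$ contains an edge because every $a_{i,j}\geq 1$ and each $G_{i,j}$ has an edge. The principal obstacle, and the only non-routine step, is the non-vanishing argument for the forms $L_{p,q}$: the individual factors of the two products being compared may each be nontrivial, and one must exclude the possibility that unrelated factors conspire to cancel the discrepancy introduced by the $\{p,q\}$-factor, which is precisely why the positivity-based logarithm trick together with a hyperplane-avoidance (or equivalently Vandermonde-style) choice of the exponents $a_{i,j}$ is needed.
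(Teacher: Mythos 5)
Your proof is correct, but it is organised quite differently from the paper's. The paper proceeds by induction on $k$: it assumes a single graph $(G,v)$ already separates $H_2,\dots,H_k$ (so their counts are strictly ordered), observes that the new graph $H_1$ can then tie with at most one $H_{i^*}$, takes a second graph $(G',v')$ from Lemma~\ref{lem:Lovasz1} breaking that one tie, and glues $t$ copies of $G$ onto $G'$ at the root with $t$ chosen large enough (via an explicit estimate with constants $C$ and $M$) that the $G$-factor preserves all the old strict inequalities while the $G'$-factor resolves the tie. You instead do everything in one shot: glue $a_{i,j}$ copies of a pairwise separator $G_{i,j}$ for \emph{every} pair at a common root, use multiplicativity of rooted homomorphism counts to get $\hom{(G(a),v)}{(H_\ell,w_\ell)}=\prod_{\{i,j\}}\bigl(c_{i,j}^{(\ell)}\bigr)^{a_{i,j}}$, and choose the exponent vector $a$ off the finite union of proper hyperplanes $L_{p,q}(a)=0$. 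Both arguments rest on the same three ingredients --- Lemma~\ref{lem:Lovasz1}, positivity of the rooted counts (your positivity argument is essentially identical to the paper's), and multiplicativity under vertex-gluing --- so the difference is one of bookkeeping rather than substance. Your version avoids the induction and the slightly fiddly renaming/permutation $\sigma$ in the paper, at the cost of invoking a hyperplane-avoidance (or moment-curve/Vandermonde) fact; the paper's version is longer but entirely explicit about the size of the construction. The structural checks (connectivity, irreflexivity, bipartiteness with consistent colour at the shared root, presence of an edge in Case~2) are all handled correctly. One cosmetic remark: for $k=1$ your product is empty and $G(a)$ degenerates, but the statement is vacuous there and a single edge (or vertex) serves, just as the paper's induction silently starts at $k=2$.
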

\begin{proof}
Again, we use the notion of relevant tuples but slightly modify the definition from the one given in the proof of Lemma~\ref{lem:Lovasz1}. A tuple $(G,v)$ is called relevant if $G$ is a connected \emph{irreflexive} graph and, in Case 2, if additionally $G$ contains at least one edge and is bipartite. We show that there exists a relevant $(G,v)$ such that for every $\{i,j\} \in \binom{[k]}{2}$ we have
\[
\hom{(G,v)}{(H_i,w_i)} \neq \hom{(G,v)}{(H_j,w_j)}.
\]

We use induction on $k$, which is the number of graphs $H_1,\dots, H_k$. The base case for $k=2$ is covered by Lemma~\ref{lem:Lovasz1}. Now let us assume that the statement holds for $k-1$ and the inductive step is for $k$. By the inductive hypothesis there exists a relevant $(G,v)$ such that without loss of generality (possibly by renaming the graphs $H_1,\dots, H_k$)
\[\hom{(G,v)}{(H_2,w_2)} >\dots > \hom{(G,v)}{(H_k,w_k)}.\]
Let $i^*\in[k]\setminus\{1\}$ be an index with 
\[\hom{(G,v)}{(H_1,w_1)} = \hom{(G,v)}{(H_{i^*},w_{i^*})}.\]
If no such index exists, we can simply choose the graph $G$ which then fulfils the statement of the lemma. Using the base case, there exists a relevant $(G',v')$ such that 
\[\hom{(G',v')}{(H_1,w_1)} > \hom{(G',v')}{(H_{i^*},w_{i^*})},\]
possibly renaming $(H_1,w_1)$ and $(H_{i^*},w_{i^*})$. 
Let $i \in [k]$. 

First, we show that for all $i\in[k]$ we have $\hom{(G',v')}{(H_i,w_i)} \ge 1$. This is clearly true for Case 1, where $w_i$ has a loop. In this case, we can always map all vertices of $G'$ to the single vertex $w_i$. 

In Case 2, as $H_i$ is connected and contains at least one edge, there is some $w\in V(H_i)$ such that $\{w,w_i\}\in E(H_i)$. Since $(G',v')$ is relevant, $G'$ is connected and bipartite and has at least one edge. Let $\{A,B\}$ be a partition of $V(G')$ such that $v'\in A$ and $A$ and $B$ are independent sets of $G$.  There is a homomorphism $h$ from $G'$ to $H_i$ with $h(v')=w_i$ which maps all vertices in $A$ to $w_i$ and all vertices in $B$ to $w$. 

Therefore, in both cases we have shown that for all $i\in[k]$ we have $\hom{(G',v')}{(H_i,w_i)} \ge 1$. 

\begin{figure}[h!]
\centering
\def\scaleFactor{.6}
\begin{tikzpicture}[scale=\scaleFactor, baseline=\scaleFactor*(2cm-.2\baselineskip)]

			\coordinate (ccenter)  at (-2.5, 0);
			\node at (ccenter) {$G'$};
			\coordinate[label=right:{$v^*=v'=v$}](eleftend) at (1.5, 0);	
			\coordinate[label=$1$] (etag1)  at (12, 3.5);
			\coordinate[label=$2$] (etag2)  at (12, 1.5);	
			\coordinate (dot1)  at (12, -.25);
			\coordinate (dot2)  at (12, -.75);
			\coordinate (dot3)  at (12, -1.25);
			\coordinate[label=$t$] (etag3)  at (12, -4);
			
			\fill (eleftend) circle[radius=3pt];
			\fill (dot1) circle[radius=2pt];
			\fill (dot2) circle[radius=2pt];
			\fill (dot3) circle[radius=2pt];
			
			\draw (ccenter) circle (4cm);
			
			\begin{scope}[rotate around={20:(eleftend)}]
				\node at (10.75,0) {$G$};
    			\coordinate (ecenter)  at (6.5, 0);
    			\draw (ecenter) ellipse (5cm and 1.8cm);
			\end{scope}
			\begin{scope}[rotate around={10:(eleftend)}]
				\node at (10.75,0) {$G$};
    			\coordinate (ecenter)  at (6.5, 0);
    			\draw (ecenter) ellipse (5cm and 1.8cm);
			\end{scope}
			\begin{scope}[rotate around={-20:(eleftend)}]
				\node at (10.75,0) {$G$};
    			\coordinate (ecenter)  at (6.5, 0);
    			\draw (ecenter) ellipse (5cm and 1.8cm);
			\end{scope}

			\addvmargin{3mm}
\end{tikzpicture}
\caption{$(G^*, v^*)$.}
\label{fig:LovaszG*}
\end{figure}

For a yet to be determined number $t$ we construct a graph $G^*$ from $(G,v)$ and $(G',v')$ by taking the graph $G'$ and $t$ copies of $G$ and identifying the vertex $v'$ with the $t$ copies of $v$ and call the resulting vertex $v^*$, cf. Figure~\ref{fig:LovaszG*}. Note that from the fact that $(G,v)$ and $(G',v')$ are relevant, it is straightforward to show that $(G^*,v^*)$ is relevant as well. Then, for any graph $H$ and $w\in V(H)$ it holds that
\[\hom{(G^*,v^*)}{(H,w)} = \hom{(G',v')}{(H,w)}\cdot\hom{(G,v)}{(H,w)}^t.\]
The goal is to choose $t$ sufficiently large to achieve
\begin{align*}
\hom{(G^*,v^*)}{(H_2,w_2)} > \dots &> \hom{(G^*,v^*)}{(H_{{i^*}-1},w_{{i^*}-1})}\\ 
&> \hom{(G^*,v^*)}{(H_1,w_1)}\\
&> \hom{(G^*,v^*)}{(H_{i^*},w_{i^*})} > \dots > \hom{(G^*,v^*)}{(H_k,w_k)}.
\end{align*}
Accordingly, we define a permutation $\sigma$ of the indices $\{1,\dots, k\}$ that inserts index $1$ between position ${i^*}-1$ and $i^*$. The domain of $\sigma$ corresponds to the new indices to which we assign the former indices. To avoid confusion, we give the function table in Table~\ref{tab:sigma}
\begin{table}[h]
\centering
\def\arraystretch{1.5}
\begin{tabular}{c|ccccccc}
$i$ & $1$ & $\cdots$ & $i^*-2$ & $i^*-1$ & $i^*$ & $\cdots$ & $k$\\
\hline
$\sigma(i)$ & $2$ & $\cdots$ & $i^*-1$ & $1$ & $i^*$ & $\cdots$ & $k$
\end{tabular}
\caption{Function table of $\sigma$.}
\label{tab:sigma}
\end{table}

Formally,
\[\sigma(i)=
\begin{cases}
i+1 \quad\text{if }i\le {i^*}-2\\
1 \quad\text{if }i= {i^*}-1\\
i\quad\text{otherwise.}
\end{cases}
\]
Let $M= \hom{(G,v)}{(H_2,w_2)}$. As $\hom{(G',v')}{(H_j,w_j)} \ge 1$ for all $j \in [k]$, it is well-defined to set
\[C= \max_{j\in [k]\setminus\{{i^*}-1\}} \frac{\hom{(G',v')}{(H_{\sigma(j+1)},w_{\sigma(j+1)})}}{\hom{(G',v')}{(H_{\sigma(j)},w_{\sigma(j)})}}\]
and $t= \lceil CM \rceil$. Let $G^*$ be as defined above. For ease of notation, for $j\in[k-1]$, we set
\[\xi(j)=\frac{\hom{(G^*,v^*)}{(H_{\sigma(j)},w_{\sigma(j)}}}{\hom{(G^*,v^*)}{(H_{\sigma(j+1)},w_{\sigma(j+1)}}}.\]
We want to show $\xi(j)>1$ for all $j\in[k-1]$ to complete the proof.

For $j=i^*-1$ we obtain
\begin{align*}
\xi(j) &=\frac{\hom{(G^*,v^*)}{(H_{\sigma({i^*}-1)},w_{\sigma({i^*}-1)})}}{\hom{(G^*,v^*)}{(H_{\sigma({i^*})},w_{\sigma({i^*})})}}\\
&=\frac{\hom{(G^*,v^*)}{(H_1,w_1)}}{\hom{(G^*,v^*)}{(H_{i^*},w_{i^*})}}\\
&=\frac{\hom{(G',v')}{(H_1,w_1)}}{\hom{(G',v')}{(H_{i^*},w_{i^*})}}> 1.
\end{align*} 

For $j\in[k-1]\setminus\{{i^*}-1\}$ we have
\begin{align*}
\xi(j) &=\frac{\hom{(G^*,v^*)}{(H_{\sigma(j)},w_{\sigma(j)})}}{\hom{(G^*,v^*)}{(H_{\sigma(j+1)},w_{\sigma(j+1)})}}\\ &= \frac{\hom{(G',v')}{(H_{\sigma(j)},w_{\sigma(j)})}\cdot\hom{(G,v)}{(H_{\sigma(j)},w_{\sigma(j)})}^t}{\hom{(G',v')}{(H_{\sigma(j+1)},w_{\sigma(j+1)})}\cdot\hom{(G,v)}{(H_{\sigma(j+1)},w_{\sigma(j+1)})}^t}\\
&\ge \frac{1}{C}\left(\frac{\hom{(G,v)}{(H_{\sigma(j)},w_{\sigma(j)})}}{\hom{(G,v)}{(H_{\sigma(j+1)},w_{\sigma(j+1)})}}\right)^t.
\end{align*}
Since $\hom{(G,v)}{(H_{\sigma(j)},w_{\sigma(j)})} \ge 1 + \hom{(G,v)}{(H_{\sigma(j+1)},w_{\sigma(j+1)})}$ for $j\in[k-1]\setminus\{{i^*}-1\}$ we have
\[
\xi(j) \ge\frac{1}{C}\left(1+\frac{1}{\hom{(G,v)}{(H_{\sigma(j+1)},w_{\sigma(j+1)})}}\right)^t.
\]
Using $(1+x)^t\ge 1+tx >tx$ for $t\ge 1$, $x\ge 0$ we obtain
\[
\xi(j) >\frac{t}{C\cdot\hom{(G,v)}{(H_{\sigma(j+1)},w_{\sigma(j+1)})}}.
\]
Finally, we use that for all $j\in[k-1]\setminus\{{i^*}-1\}$ we have \[\hom{(G,v)}{(H_2,w_2)}> \hom{(G,v)}{(H_{\sigma(j+1)},w_{\sigma(j+1)})}\] and conclude
\[
\xi(j)> \frac{t}{C\cdot\hom{(G,v)}{(H_2,w_2)}}\ge \frac{t}{CM} \ge 1.
\]
Thus, we have shown $\xi(j) >1$ as required, which completes the proof. 
\end{proof}

In the following theorem, we use the separating instances that we obtain from Lemma~\ref{lem:Lovasz2} for interpolation-based reductions to \cGSHom{(\calH, \lambda)}.
\begin{thm}\label{thm:CompInterpol1}
Let $(\calH,\lambda)$ be a weighted graph set for which one of two cases holds:
\begin{itemize}
\settowidth{\templength}{Case}
\setlength{\itemindent}{\templength}
\item[Case 1.] All graphs in $\calH$ are reflexive.
\item[Case 2.] All graphs in $\calH$ are irreflexive and bipartite.
\end{itemize}
Then, for all $H \in \calH$ with $\lambda(H)\neq 0$, it holds that
$\cHom{H} \le \cGSHom{(\calH,\lambda)}$.
\end{thm}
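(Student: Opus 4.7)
The plan is to prove the reduction by polynomial interpolation, combining Lemma~\ref{lem:Lovasz2} with a gluing construction. Fix $(\calH,\lambda)$ satisfying the hypotheses and fix $H\in\calH$ with $\lambda(H)\neq 0$. I enumerate all isomorphism classes of pointed graphs $(J,w)$ with $J\in\calH$ and $w\in V(J)$ by choosing representatives $(J_1,w_1),\dots,(J_m,w_m)$; two pointed graphs built from the same $J\in\calH$ are isomorphic iff their distinguished vertices lie in the same $\mathrm{Aut}(J)$-orbit, and pointed graphs coming from distinct members of $\calH$ are non-isomorphic because $\calH$ itself contains pairwise non-isomorphic graphs. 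Applying Lemma~\ref{lem:Lovasz2} to this list produces a connected irreflexive graph $G^\dagger$ with a distinguished vertex $v^\dagger$ (bipartite with at least one edge in Case~2) such that the integers $\beta_i := \hom{(G^\dagger,v^\dagger)}{(J_i,w_i)}$ are pairwise distinct.

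Given a connected irreflexive input $G$ and a chosen vertex $v\in V(G)$, I define $G_t$ as the graph obtained by taking $G$ together with $t$ vertex-disjoint copies of $G^\dagger$ and identifying each copy of $v^\dagger$ with $v$. Then $G_t$ is connected and irreflexive, so it is a valid input to $\cGSHom{(\calH,\lambda)}$. Splitting every homomorphism according to the image of~$v$ and using that $\hom{(G,v)}{(J,w)}$ and $\hom{(G^\dagger,v^\dagger)}{(J,w)}$ depend only on the $\mathrm{Aut}(J)$-orbit of~$w$, I obtain
\[
Z_{\calH,\lambda}(G_t)\;=\;\sum_{i=1}^m \gamma_i\,\beta_i^{\,t},\qquad \gamma_i \;=\; \lambda(J_i)\cdot |O_i|\cdot \hom{(G,v)}{(J_i,w_i)},
\]
where $|O_i|$ is the size of the orbit of $w_i$ in $V(J_i)$. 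Querying the oracle for $t=0,1,\dots,m-1$ gives a Vandermonde linear system in the $\gamma_i$, which is invertible because the $\beta_i$ are pairwise distinct; solving it recovers every $\gamma_i$, and since $\lambda(H)\neq 0$ and $|O_i|\geq 1$ I then recover $\hom{(G,v)}{(H,w_i)}$ for each orbit representative $w_i$ of $V(H)$. The required output is $\hom{G}{H}=\sum_O |O|\cdot \hom{(G,v)}{(H,w_O)}$, and the whole reduction runs in polynomial time because $m$ depends only on $(\calH,\lambda)$.

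The main technical obstacle is the possible presence of the trivial graph $K_1$ in $\calH$ in Case~2: it is irreflexive and bipartite and so may lie in $\calH$, but Lemma~\ref{lem:Lovasz2} (Case~2) excludes it since it has no edge. I would get around this by first peeling off its contribution: for the $G^\dagger$ provided by the lemma (which does contain an edge), $\hom{G_t}{K_1}$ equals $1$ when $G$ is edgeless and $t=0$, and equals $0$ otherwise, so the term $\lambda(K_1)\hom{G_t}{K_1}$ is computable directly, subtracted from the oracle answer, and the Vandermonde interpolation is then applied to $\calH\setminus\{K_1\}$. Small edge cases where $G$ is empty or a single vertex, or (in Case~2) where $G$ is non-bipartite so that $\hom{G}{H}=0$ whenever $H$ carries an edge, are disposed of by direct inspection without calling the oracle.
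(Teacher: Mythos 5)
Your proposal is correct and follows essentially the same route as the paper's proof: the same orbit decomposition of pointed graphs $(J,w)$, the same application of Lemma~\ref{lem:Lovasz2} to obtain a separating pointed graph, the same gluing of $t$ copies at a distinguished vertex, the same Vandermonde interpolation, and the same peeling-off of $K_1$ in Case~2. The only cosmetic difference is that the paper first restricts to the graphs of non-zero weight before invoking Lemma~\ref{lem:Lovasz2}, whereas you keep the zero-weight graphs in the system and simply recover $\gamma_i=0$ for them, which works equally well.
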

\begin{proof}
If, in Case 2, $\calH$ contains a graph without edges, i.e.\ a single-vertex graph $K_1$, let $(\calH', \lambda')$ be a weighted graph set constructed from $(\calH, \lambda)$ by removing the $K_1$ and its corresponding weight $\lambda(K_1)$. As $\Hom{K_1}$ is in $\FP$ we have 
\[\cGSHom{(\calH',\lambda')} \le \cGSHom{(\calH,\lambda)}\]
and 
\[\cHom{K_1} \le \cGSHom{(\calH,\lambda)}.\]
Therefore, for the remainder of this proof, we assume that every graph in $\calH$ contains at least one edge. Let $\calH^{\neq0}=\{H_1,\dots,H_k\}$ be the set of graphs in $\calH$ that are assigned non-zero weights by $\lambda$. Note that all graphs in $\calH^{\neq0}$ are pairwise non-isomorphic, connected and non-empty by definition of a weighted graph set. Thus, for every pair $\{i,j\} \in \binom{[k]}{2}$ and every $w_i\in V(H_i)$, $w_j\in V(H_j)$ we have $(H_i ,w_i)\ \ncong (H_j, w_j)$. 

Now, for each graph $H_i$ we collect the vertices which are in the same orbit of the automorphism group of $H_i$. Formally, for each $i\in [k]$ and $w\in V(H_i)$, let $[w]$ be the orbit of $w$, i.e.\ the set of vertices $w'$ such that $(H_i,w')\cong (H_i,w)$. Let $W$ be a set which contains exactly one representative from each such orbit. Further, for each $i\in[k]$ set $W_i= W\cap V(H_i)$. Then, for each $w,w'\in W_i$ with $w'\neq w$, we have $(H_i ,w)\ \ncong (H_i, w')$. 

Let $k'=\sum_{i=1}^k \abs{W_i}$ and let $(H'_1,w'_1), \dots, (H'_{k'},w'_{k'})$ be an enumeration of $\{(H_i,w_i) : i\in[k],\ w_i\in W_i\}$. Then we can apply Lemma~\ref{lem:Lovasz2} to $(H'_1,w'_1), \dots, (H'_{k'},w'_{k'})$ to obtain a connected irreflexive graph $J$ with distinguished $u\in V(J)$ such that for every $i,j\in [k]$ and for all $w_i\in W_i$, $w_j\in W_j$ we have $\hom{(J,u)}{(H_i,w_i)} \neq \hom{(J,u)}{(H_j,w_j)}$.

Let $i\in [k]$ and suppose that $H_i\in \calH$ and $\lambda(H_i) \neq 0$. Let $G$ be a non-empty graph which is an input to the problem \cHom{H_i}. Let $v$ be an arbitrary vertex of $G$. We use the same construction as in Figure~\ref{fig:LovaszG*} to design a graph $G_t$ as input to the problem \cGSHom{(\calH, \lambda)} by taking $t$ copies of $J$ as well as the graph $G$ and identifying the $t$ copies of vertex $u$ with the vertex $v\in V(G)$. As both $G$ and $J$ are connected, $G_t$ is as well.
Then, using an oracle for \cGSHom{(\calH, \lambda)}, we can compute $Z_{\calH,\lambda}(G_t)$ with
\begin{align}
Z_{\calH,\lambda}(G_t)&= \sum_{H\in \calH} \lambda(H)\hom{G_t}{H}\nonumber\\
&= \sum_{i\in [k]} \lambda(H_i)\hom{G_t}{H_i} \nonumber\\
&= \sum_{i\in [k]} \lambda(H_i)\sum_{w\in V(H_i)}\hom{(G,v)}{(H_i,w)}\cdot\hom{(J,u)}{(H_i,w)}^t \label{equ:Inter1}
\end{align}
Now we collect the terms which belong to vertices in the same orbit. To this end, for $w\in W$ and $i\in[k]$ such that $w\in V(H_i)$, we define $\lambda_w= \abs{[w]}\cdot \lambda(H_i)$, $N_w(G)= \hom{(G,v)}{(H_i,w)}$ and $N_w(J)=\hom{(J,u)}{(H_i,w)}$. Let $W=\{w_0,\dots,w_r\}$. Then, continuing from Equation~(\ref{equ:Inter1}):
\begin{align*}
Z_{\calH,\lambda}(G_t)&= \sum_{i\in [k]} \lambda(H_i)\sum_{w\in V(H_i)}\hom{(G,v)}{(H_i,w)}\cdot\hom{(J,u)}{(H_i,w)}^t\\
&=\sum_{w\in W} \lambda_w N_w(G)N_w(J)^t.
\end{align*}

By choosing $r+1$ different values for the parameter $t$ --- here it is sufficient to choose $t=0,\dots ,r$ --- we obtain a system of linear equations $\boldb= \boldA\boldx$ as follows:  
\begin{equation*}
\boldb=
\begin{pmatrix}
Z_{\calH,\lambda}(G_0)\\
\vdots\\
Z_{\calH,\lambda}(G_r)
\end{pmatrix}
\quad
\boldA=
\begin{pmatrix}
\lambda_{w_0}N_{w_0}(J)^0 &\dots & \lambda_{w_r}N_{w_r}(J)^0\\
\vdots & \ddots & \vdots\\
\lambda_{w_0}N_{w_0}(J)^r &\dots & \lambda_{w_r}N_{w_r}(J)^r\\
\end{pmatrix}
\quad
\text{and}
\quad
\boldx= \begin{pmatrix}
N_{w_0}(G)\\
\vdots\\
N_{w_r}(G)
\end{pmatrix}
\end{equation*}
The vector $\boldb$ can be computed using $r+1$ \cGSHom{(\calH, \lambda)} oracle calls. Then
\[\hom{G}{H_i} = \sum_{w \in W_i}\abs{[w]} N_{w}(G).\]
Thus, determining $x$ is sufficient for computing the sought-for $\hom{G}{H_i}$. It remains to show that the matrix $\boldA\in \Z^{(r+1)\times (r+1)}$ is of full rank and therefore invertible. This can be easily seen by dividing each column by its first entry. The division is well-defined as for $t\in \{0\dots, r\}$ we have $\lambda_{w_t}\neq0$ by definition of $\calH^{\neq0}$. The columns of the resulting matrix are pairwise different by the choice of $(J,u)$ and as a consequence the resulting matrix is a Vandermonde matrix and therefore invertible.
\end{proof}

Next, we give a short technical lemma which follows from Definition~\ref{defn:lambda_H} and is used in  Lemma~\ref{thm:simpleCompDichotomy} to show that Theorem~\ref{thm:CompInterpol1} gives hardness results for \cComp{H}.
\begin{lem}
\label{lem:H-}
Let $H$ be a connected graph with at least one non-loop edge. Let $H^-$ be the graph obtained 
from~$H$ by deleting exactly one non-loop edge   (but keeping all vertices). If $H^-$ is connected,  then
$\lambda_H(H^-) \neq 0$.
\end{lem}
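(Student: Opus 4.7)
The plan is to apply the inductive clause of Definition~\ref{defn:lambda_H} to $\lambda_H(H^-)$ and argue that the defining sum collapses to a single non-zero term.

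First I would verify that $H^-$ lies in $Sub(H)$, so its isomorphism class is represented by some $H^\star \in \calS_H$ with $\mu_H(H^\star) \ge 1$. Non-emptiness holds because $V(H^-) = V(H)$ and $H$ has at least one non-loop edge; connectedness holds by hypothesis; loop-hereditariness holds because we only removed a non-loop edge, so every vertex that had a loop in $H$ still has it in $H^-$. Since $|E(H^-)| = |E(H)| - 1$ we have $H^- \ncong H$, so the inductive clause of Definition~\ref{defn:lambda_H} gives
\[
\lambda_H(H^-) \;=\; -\sum_{\substack{H' \in \calS_H \\ H' \ncong H}} \mu_H(H')\,\lambda_{H'}(H^-).
\]

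The key step is then to show that the only $H'$ that can contribute to this sum is $H^\star$. For $\lambda_{H'}(H^-) \ne 0$ we need $H^-$ to be isomorphic to some graph in $\calS_{H'}$, so (since that graph is a subgraph of $H'$) we need $|V(H')| \ge |V(H)|$ and $|E(H')| \ge |E(H)| - 1$. Since $H' \in \calS_H$ is itself isomorphic to a subgraph of $H$, the reverse inequalities $|V(H')| \le |V(H)|$ and $|E(H')| \le |E(H)|$ also hold. Hence $|V(H')| = |V(H)|$ and $|E(H')| \in \{|E(H)|-1,\,|E(H)|\}$. The case $|E(H')| = |E(H)|$ forces $H' \cong H$, which is excluded by the summation condition. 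So $|E(H')| = |E(H)|-1$, and then matching the vertex and edge counts forces $H^- \cong H'$, i.e.\ $H' = H^\star$.

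Finally, since $H^- \cong H^\star$, the middle case of Definition~\ref{defn:lambda_H} gives $\lambda_{H^\star}(H^-) = 1$, so
\[
\lambda_H(H^-) \;=\; -\,\mu_H(H^\star)\,\cdot\,1 \;\ne\; 0,
\]
because $\mu_H(H^\star) \ge 1$. I expect the only delicate step to be the size-matching argument, and the one place where the non-loop hypothesis really enters is in ensuring loop-hereditariness of $H^-$ in $Sub(H)$; I would double-check that deleting a non-loop edge never strips a loop from either endpoint.
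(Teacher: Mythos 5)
Your proof is correct and takes essentially the same route as the paper's: both apply the inductive clause of Definition~\ref{defn:lambda_H} and observe that every $H'\in\calS_H$ with $H'\ncong H$ and $H'\ncong H^-$ has $\lambda_{H'}(H^-)=0$, so the sum collapses to the single term $-\mu_H(H^-)\lambda_{H^-}(H^-)=-\mu_H(H^-)\le -1$. Your explicit vertex/edge-count matching simply fills in the detail that the paper leaves implicit when it asserts that $H^-$ is not isomorphic to any graph in $\calS_{H'}$.
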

\begin{proof}
As $H^-$ is non-empty and connected, it is a valid input to $\lambda_H$ and from the definition of $\lambda_H$ (Definition~\ref{defn:lambda_H}) we obtain
\begin{align*}
\lambda_H(H^-)&= -\sum_{\substack{H'\in \calS_H \\ \text{s.t. }H'\ncong H}} \mu_H(H')\lambda_{H'}(H^-).\\
\intertext{Consider a graph $H'\in \calS_H$ with $H'\ncong H$ and $H'\ncong H^-$. $H'$ is a non-empty loop-hereditary connected subgraph of $H$ and not isomorphic to $H$ or $H^-$. Note that $H^-$ is not isomorphic to any graph in $\calS_{H'}$ which gives $\lambda_{H'}(H^-) = 0$. Furthermore, $\mu_H(H^-) \ge 1$. Thus, we proceed} 
\lambda_H(H^-)&= -\mu_H(H^-)\lambda_{H^-}(H^-)\\
&\le -1.
\end{align*}
\end{proof}

We now have most of the tools at hand to classify the complexity of \Comp{H}.
Tractability results come from Lemma~\ref{lem:LCompFP}. If $H$ has a component that
is not a reflexive clique or an irreflexive biclique  then hardness will be lifted from Dyer and Greenhill's Theorem~\ref{thm:DGorig}
via Theorem~\ref{thm:HomToComp}. The most difficult case is when all components of $H$
are reflexive cliques or irreflexive bicliques, but some component is not an irreflexive 
star or a reflexive clique of size at most~$2$.

If $H$ is connected then hardness will come from the following lemma,  
whose
proof builds on the weighted graph set technology (Corollary~\ref{cor:Comp=GSHom})
using Theorem~\ref{thm:CompInterpol1} and Lemma~\ref{lem:H-} 
(using the stronger hardness result of Dyer and Greenhill, Theorem~\ref{thm:DG}).

The remainder of the section generalises the connected case to the case in which $H$ is not connected.

\begin{lem}\label{thm:simpleCompDichotomy}
If $H$ is a reflexive clique of size at least~$3$ then $\cComp{H}$ is $\numP$-hard.
If $H$ is an irreflexive biclique that is not a star then $\cComp{H}$ is $\numP$-hard.
\end{lem}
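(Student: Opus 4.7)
The plan is to combine the weighted graph set machinery with the Dyer--Greenhill dichotomy. By Corollary~\ref{cor:Comp=GSHom}, it suffices to show that $\cGSHom{(\calS_H, \lambda_H)}$ is $\numP$-hard. To do this I would apply Theorem~\ref{thm:CompInterpol1} to exhibit some $J \in \calS_H$ with $\lambda_H(J) \ne 0$ for which $\cHom{J}$ is already known to be $\numP$-hard; the witness $J$ will be the graph $H^-$ obtained from $H$ by deleting a single non-loop edge.

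Concretely, first I would define $H^-$ by deleting one arbitrary non-loop edge $\{u,v\}$ of $H$ (without removing vertices) and verify that $H^-$ is connected. In the reflexive clique case, $|V(H)| \ge 3$ ensures that $u$ and $v$ are still joined via any third vertex, and all other pairs are unaffected. In the biclique case $H = K_{a,b}$ with $a,b \ge 2$, the endpoints of the removed edge remain connected through any other vertex in the opposite side. In both cases $H^- \in Sub(H)$, so (identifying $H^-$ with its isomorphism class representative) $H^- \in \calS_H$, and Lemma~\ref{lem:H-} gives $\lambda_H(H^-) \ne 0$.

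Next I would verify the hypotheses of Theorem~\ref{thm:CompInterpol1}. Every $H' \in \calS_H$ is a loop-hereditary connected subgraph of~$H$: if $H$ is a reflexive clique then loop-hereditariness forces every vertex of $H'$ to carry a loop, so $\calS_H$ falls into Case~1 of Theorem~\ref{thm:CompInterpol1}; if $H$ is an irreflexive biclique then every $H' \in \calS_H$ is irreflexive and bipartite, so $\calS_H$ falls into Case~2. Applying Theorem~\ref{thm:CompInterpol1} with $J = H^-$ yields
\[
\cHom{H^-} \le \cGSHom{(\calS_H, \lambda_H)} \equiv \cComp{H}.
\]

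Finally I would invoke Theorem~\ref{thm:DG} to conclude that $\cHom{H^-}$ is $\numP$-hard. In the reflexive clique case, $H^-$ is a single reflexive connected graph missing the edge $\{u,v\}$, hence not a clique; since it is reflexive it is also not an irreflexive biclique. In the biclique case, $H^-$ is a single irreflexive bipartite connected graph missing an edge of the complete bipartite graph, hence not a biclique, and not a reflexive clique either. In both cases $H^-$ satisfies the hardness condition of Theorem~\ref{thm:DG}, so $\cHom{H^-}$ is $\numP$-hard, completing the reduction. I do not anticipate any serious obstacle: the main substance is already packaged in Theorem~\ref{thm:CompInterpol1} and Lemma~\ref{lem:H-}, and what remains is the bookkeeping needed to check that $H^-$ is connected and that it falls outside the tractable class of Dyer and Greenhill.
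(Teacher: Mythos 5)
Your proposal is correct and follows essentially the same route as the paper's own proof: delete one non-loop edge to obtain a connected $H^-$ that falls outside the Dyer--Greenhill tractable class, use Lemma~\ref{lem:H-} to get $\lambda_H(H^-)\neq 0$, apply Theorem~\ref{thm:CompInterpol1} (Case~1 for reflexive cliques, Case~2 for irreflexive bicliques), and finish with Corollary~\ref{cor:Comp=GSHom}. The only difference is that you spell out the connectivity and non-membership checks in slightly more detail than the paper does.
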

\begin{proof} 
Suppose that $H$ is a reflexive clique of size at least~$3$ or an irreflexive biclique that is not a star.
Recall the definitions of~$\calS_H$, $\lambda_H$ 
and weighted graph sets
(Definitions~\ref{defn:wGS}, \ref{defn:calS_H} and~\ref{defn:lambda_H}). 
Note that $(\calS_H,\lambda_H)$ is a weighted graph set.
Let $H^-$ be a graph 
obtained from~$H$ by deleting a non-loop edge. Note that 
$H^-$ is connected and it is not a reflexive clique or an irreflexive biclique.  
Thus Theorem~\ref{thm:DG} states that \cHom{H^-} is $\numP$-complete.
We will complete the proof of the Lemma by showing $\cHom{H^-}\le \cComp{H}$. 

If $H$ is a reflexive graph then the definition of $\calS_H$ 
ensures that all graphs in $\calS_H$ are reflexive. 
If $H$ is an irreflexive bipartite graph, then the definition
ensures that all graphs in $\calS_H$ are irreflexive and bipartite. 
Since $H^-$ is connected and therefore $\lambda_H(H^-)\neq0$ by Lemma~\ref{lem:H-},
we can apply Theorem~\ref{thm:CompInterpol1} to the weighted graph 
 set $(\calS_H,\lambda_H)$ with $H^-\in \calS_H$ to obtain
$\cHom{H^-}\le \cGSHom{(\calS_H,\lambda_H)}$.
By Corollary~\ref{cor:Comp=GSHom},
$\cGSHom{(\calS_H,\lambda_H)}
 \equiv \cComp{H}$.   The lemma follows.\end{proof}
 
 We use the following two definitions in Lemmas~\ref{lem:cCompCToComp1} and~\ref{lem:cCompCToComp2} and 
 in the proof of Theorem~\ref{thm:CompDichotomy}.

\begin{defn}\label{defn:calAcalB}
Let $H$ be a graph. Suppose that every connected component that has more than $j$ vertices is an irreflexive star. Suppose further that some connected component has $j$ vertices and is not an irreflexive star. Let $\calA(H)$ be the set of reflexive   components of $H$ with $j$ vertices 
and let $\calB(H)$ be the set of irreflexive non-star components of $H$ with $j$ vertices.
\end{defn} 

\begin{defn}\label{defn:H^0}
Let $L(H)$ denote the set of loops of a graph~$H$. We define the graph $H^0=(V(H),E(H)\setminus L(H))$.
\end{defn}

\begin{lem}\label{lem:cCompCToComp1}
Let $H$ be a graph in which every component is a reflexive clique or an irreflexive biclique. If $J\in \calA(H)$ then
$\cComp{J} \le \Comp{H}$.
\end{lem}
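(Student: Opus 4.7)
The idea is a single-oracle-call reduction that expresses $\comp{G}{H}$, for a carefully chosen $G$ built from $G'$, as a known positive multiple of $\comp{G'}{J}$. Handle the easy cases first: if $j\in\{1,2\}$, then $J$ is a reflexive clique of size at most $2$, so $\cComp{J}\in\FP$ by Lemma~\ref{lem:LCompFP} and the reduction is trivial. Assume $j\ge 3$ henceforth.

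Given a connected irreflexive instance $G'$ of $\cComp{J}$, construct an auxiliary irreflexive graph $F$ as the disjoint union, over every connected component $C\ne J$ of $H$, of one cover piece $F_C$. Let $F_C$ be the irreflexive clique on $c$ vertices when $C$ is a reflexive clique of size $c$, and let $F_C=C$ itself when $C$ is an irreflexive biclique. Set $G:=G'\oplus F$, so that $G$ is irreflexive and has exactly $m$ connected components, where $m$ is also the number of components of $H$. Invoke the $\Comp{H}$ oracle on $G$ to obtain $\comp{G}{H}$.

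Each connected component of $G$ is mapped by any homomorphism $\gamma\colon G\to H$ entirely into one component of $H$. Because a compaction must cover all $m$ components of $H$ while $G$ has only $m$ components, the induced assignment $\phi$ from components of $G$ to components of $H$ must be a bijection, and each component $X$ of $G$ must be compactly mapped onto $\phi(X)$. Hence
\[
\comp{G}{H} \;=\; \sum_{C} \comp{G'}{C}\cdot K_C,
\]
where the outer sum is over components $C$ of $H$ and $K_C$ is the sum, over bijections $\phi$ from components of $G$ to components of $H$ with $\phi(G')=C$, of $\prod_{X\ne G'}\comp{X}{\phi(X)}$.

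The heart of the argument is the claim that $K_C=0$ unless $C\in\calA(H)$ (that is, $C$ is a reflexive clique component of $H$ of size $j$). For $j\ge 3$, the only piece available in $F$ for compactly covering a reflexive clique of size $j$ is another irreflexive $j$-vertex clique: (i) a smaller irreflexive clique cannot surject onto $j$ vertices; (ii) an irreflexive star $K_{1,k}$ uses only image edges incident to the image of its centre, so cannot cover the $\binom{j}{2}\ge 3$ non-loop edges of a reflexive $j$-clique; and (iii) a non-star irreflexive biclique $K_{a,b}$ in $F$ satisfies $a,b\ge 2$ and $a+b\le j$, since the definition of $\calA(H)$ forces every non-star component of $H$ to have at most $j$ vertices, whereas compactly covering a reflexive $j$-clique would force $a,b\ge j-1$ and therefore $j\le 2$. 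Because $F$ contains exactly $|\calA(H)|-1$ irreflexive $j$-vertex cliques, the $|\calA(H)|$ size-$j$ reflexive clique slots in any contributing bijection can be filled only when $G'$ itself occupies one of them, yielding $\phi(G')\in\calA(H)$. For each $C\in\calA(H)$, an identity-like bijection (mapping $G'\mapsto C$, mapping $F_C\mapsto J$ if $C\ne J$, and every other $F_{C'}\mapsto C'$) gives a positive product, so $K_C>0$; and since all elements of $\calA(H)$ are isomorphic to $J$, by symmetry $K_C$ equals a common positive integer $K$ that depends only on $H$ and can be computed in constant time. Consequently $\comp{G}{H}=|\calA(H)|\cdot K\cdot\comp{G'}{J}$, and $\comp{G'}{J}$ is recovered by a single division. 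The principal obstacle is the structural claim ``$K_C=0$ for $C\notin\calA(H)$'', which rests on the case analysis above and uses both $j\ge 3$ and the size bound for non-star components of $H$ built into the definition of $\calA(H)$.
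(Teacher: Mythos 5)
Your proposal is correct and follows essentially the same route as the paper: your instance $G'\oplus F$ is exactly the paper's $(H\setminus J\oplus G)^0$, and your key claim (that the counting of size-$j$ reflexive-clique slots versus available irreflexive $j$-cliques in $F$ forces the input graph to occupy one such slot, while stars and non-star bicliques cannot) is the paper's claim, leading to the same identity $\comp{G'\oplus F}{H}=|\calA(H)|\cdot K\cdot\comp{G'}{J}$ and a single oracle call. The only cosmetic difference is that you rule out non-star bicliques via $a,b\ge j-1$ whereas the paper uses an edge count; both are valid.
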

\begin{proof}
Let $H$ be a graph in which every component is a reflexive clique or an irreflexive biclique.
Let $\calA(H)=\{A_1,\dots, A_k\}$.  It follows from the definition of $\calA(H)$ 
that all elements of $\calA(H)$ are 
reflexive cliques of some size~$j$  (the same $j$ for all graphs in $\calA(H)$).

If $j\le 2$, the statement of the lemma is trivially true, since
Lemma~\ref{lem:LCompFP} shows that
\Comp{A_i} is in $\FP$, so the restricted problem
\cComp{A_i} is also in $\FP$.

Now assume $j\ge 3$.
Suppose without loss of generality that $J=A_1$.
Let
 $G$ be a (connected) input to $\cComp{J}$.
For all $i\in [k]$, let $H\setminus A_i$ be the graph constructed from $H$ by deleting the connected component $A_i$. Using Definition~\ref{defn:H^0} we define the (irreflexive) graph $G'=(H\setminus J \oplus G)^0$ as an input to \Comp{H}.
Intuitively, to form $G'$ from $H$ we replace the connected component $J$ with the graph $G$, then we delete all loops.
We will prove the following claim.

\medskip 
\noindent {\bf Claim:  
Let $h\from V(G')\to V(H)$ be a compaction from $G'$ to $H$. Then the restriction $h\vert_{V(G)}$ is a compaction from $G$ onto an element of $\calA(H)$.}
\medskip

\noindent{\bf Proof of the claim:}\quad 
As $h$ is a homomorphism, it maps each connected component of $G'$ to a connected component of $H$. As, furthermore, $h$ is a compaction and $G'$ and $H$ have the same number of connected components, it follows that there exist connected components $C_1, \dots, C_k$ of $G'$ such that for $i\in [k]$, $h\vert_{V(C_i)}$ is a compaction from $C_i$ onto $A_i$. To prove the claim, we show that $G$ is an element of $\calC=\{C_1, \dots, C_k\}$. 
In order to use all vertices of a graph in $\calA(H)$, 
i.e.\ a  reflexive size-$j$ clique, 
a graph in $\calC$ has to have at least $j$ vertices itself. Therefore and by the construction of $G'$, an element of $\calC$ can only be one of the following:
\begin{itemize}
\item a  clique with $j$ vertices,
\item a  biclique with $j$ vertices,
\item a star with at least $j$ vertices
\item or the copy of $G$.
\end{itemize}
Since $j\ge 3$, it is easy to see that there is no compaction from a star onto a 
clique with $j$ vertices. In order to compact onto a  reflexive clique of size~$j$, 
an element of $\calC$ also has to have at least $j(j-1)/2$ edges. Thus, $\calC$ does not contain any  bicliques. Finally, there are only $k-1$ connected components in $G'$ that are 
$j$-vertex cliques other than (possibly) $G$. Therefore, $G$ has to be an element of $\calC$, which proves the claim. {\bf (End of the proof of the claim.)}
\medskip

\noindent Using the notation from Definition~\ref{defn:H^0}, the claim implies
\begin{equation}
\comp{G'}{H} = \sum_{i=1}^k \comp{G}{A_i}\cdot \comp{(H\setminus A_i)^0}{H\setminus A_i}. \label{equ:calA2}
\end{equation}
We can simplify the expression (\ref{equ:calA2}) using the fact that all elements of $\calA(H)$ are 
reflexive size-$j$ cliques: 
\[
\comp{G'}{H} =k\cdot \comp{G}{J} \cdot \comp{(H\setminus J)^0}{H\setminus J}.
\]
As $\comp{(H\setminus J)^0}{H\setminus J}$ does not depend on $G$, it can be computed in constant time. Thus, using a single \Comp{H} oracle call we can compute $\comp{G}{J}$ in polynomial time as required.
\end{proof}

\begin{lem}\label{lem:cCompCToComp2}
Let $H$ be a graph in which every component is a reflexive clique or an irreflexive biclique. If $\calA(H)$ is empty but $\calB(H)$ is non-empty, then there exists a   component $J\in \calB(H)$ such that
$\cComp{J} \le \Comp{H}$.
\end{lem}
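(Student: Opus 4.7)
The plan is to proceed analogously to the proof of Lemma~\ref{lem:cCompCToComp1}. Pick any $J \in \calB(H)$; since $J$ is an irreflexive non-star biclique with $j$ vertices, $J \cong K_{a,b}$ with $a,b \ge 2$ and $a+b = j \ge 4$. Given a connected input $G$ to $\cComp{J}$, I would construct $G' = (H \setminus J \oplus G)^0$ as an input to $\Comp{H}$. Let $\{J_1, \dots, J_k\}$ (with $J_1 = J$) be the components of $H$ isomorphic to $J$, and let $O_1, \dots, O_l$ be the remaining components, so the components of $G'$ are $G, J_2, \dots, J_k, O_1^0, \dots, O_l^0$.

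As in Lemma~\ref{lem:cCompCToComp1}, in any compaction $h \colon G' \to H$ the induced map on components must be a bijection, because $G'$ and $H$ have the same number $k+l$ of components and every component of $H$ is hit. The crux is to show that no $O_p^0$ can compact onto $J$, which I would establish by case analysis using $j \ge 4$ and $a,b \ge 2$: (i) if $O_p = K_{1,m}$ is an irreflexive star, then a compaction to $K_{a,b}$ would require the single centre to cover an entire side of size $\ge 2$ in the target; (ii) if $O_p$ is a reflexive clique $K_c$ (with $c < j$ since $\calA(H)$ is empty), then $O_p^0$ is either too small ($c \le 2 < j$) or not bipartite ($c \ge 3$), so it admits no homomorphism into $J$ at all; (iii) if $O_p = K_{c,d}$ is a non-star biclique with $\{c,d\} \ne \{a,b\}$, then either $c + d < j$ (too few vertices) or $c + d = j$ with WLOG $c < a$ and $d > b$, and neither way of matching the bipartitions is surjective.

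This key fact implies that only the $k$ components $G, J_2, \dots, J_k$ of $G'$ can compact onto any $J_i$, so the bijection on components must send $\{G, J_2, \dots, J_k\}$ bijectively onto $\{J_1, \dots, J_k\}$ and $\{O_1^0, \dots, O_l^0\}$ bijectively onto $\{O_1, \dots, O_l\}$. Summing over all valid bijections and using that all $J_i$ are isomorphic to $J$ yields
\[
\comp{G'}{H} \;=\; M \cdot \comp{G}{J}, \qquad M \;=\; k! \cdot \comp{J}{J}^{k-1} \cdot \sum_{\tau} \prod_{p=1}^{l} \comp{O_p^0}{O_{\tau(p)}},
\]
where $\tau$ ranges over bijections of $[l]$. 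The multiplier $M$ depends only on $H$ and is strictly positive, since the identity choice $\tau(p) = p$ always contributes: each $O_p^0$ compacts onto $O_p$, either trivially when $O_p$ is irreflexive or via the natural map $K_c \to K_c^{\text{reflexive}}$, which uses all $c$ vertices and all $\binom{c}{2}$ non-loop edges. Hence $\comp{G}{J} = \comp{G'}{H}/M$ is recoverable from a single oracle call, yielding $\cComp{J} \le \Comp{H}$. The main obstacle is the biclique sub-case of (iii), where one must carefully verify that both possible bipartition assignments fail when the side sizes do not match.
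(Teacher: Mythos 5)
Your proof is correct and follows the paper's construction almost exactly: the same gadget $G'=(H\setminus J\oplus G)^0$, the same observation that a compaction induces a bijection on the components, and the same recovery of $\comp{G}{J}$ from a single oracle call by dividing out a positive constant depending only on $H$. The one genuine difference is the choice of $J$. The paper does not take an arbitrary $J\in\calB(H)$; it takes $J$ from $\calB^{\max}(H)$, the set of components of $\calB(H)$ with the maximum number of edges (these are pairwise isomorphic because $a(j-a)$ is strictly increasing for $a\le j/2$), so that a pure vertex-and-edge-counting argument suffices to rule out every other component of $G'$ as a source for a copy of $J$. You instead work with an arbitrary $J\cong K_{a,b}$ and eliminate the competing bicliques $K_{c,d}$ with $c+d=j$ and $\{c,d\}\neq\{a,b\}$ by the bipartition argument; this is sound (a bipartition-respecting surjection would need $c\ge a$ and $d\ge b$, or else $c\ge b$ and $d\ge a$, and either option together with $c+d=a+b$ forces $\{c,d\}=\{a,b\}$), although your ``WLOG $c<a$ and $d>b$'' is not actually without loss of generality --- the other possibility $c>a$, $d<b$ does occur (e.g.\ $K_{3,3}$ versus $K_{2,4}$) and must be dispatched by the symmetric computation, which works. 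The net effect is that you prove the slightly stronger statement that $\cComp{J}\le\Comp{H}$ holds for \emph{every} $J\in\calB(H)$, at the cost of the extra sub-case analysis; your explicit check that the multiplier $M$ is strictly positive (via the identity assignment of the remaining components) is a detail the paper leaves implicit.
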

\begin{proof}
The proof is similar to that of Lemma~\ref{lem:cCompCToComp1}. For completeness, we give the details.
By Definition~\ref{defn:calAcalB} the elements of $\calB(H)$ are of the form $K_{a,b}$ with $a+b=j$ for some fixed $j$. As stars are excluded from $\calB(H)$, we have $a,b \ge 2$. Let $\calB^{\max}(H)$ denote the set of graphs with the maximum number of edges in $\calB(H)$. The elements of $\calB^{\max}(H)$ are pairwise isomorphic since the number of edges of a $K_{a,b}$ is $a\cdot b=a(j-a)$ and this function is strictly increasing for $a\leq j/2$.
For concreteness, fix $a$ and $b$ so that each $J\in \calB^{\max}(H)$ is isomorphic to $K_{a,b}$.
Let $\calB^{\max}(H)=\{B_1,\dots, B_k\}$. Take $J=B_1$.

For all $i\in [k]$, let $H\setminus B_i$ be the graph constructed from $H$ by deleting the connected component $B_i$.  Let  
$G'=(H\setminus J \oplus G)^0$ be an input to \Comp{H}. 
We will prove the following claim.
\medskip

\noindent {\bf Claim:  
Let $h\from V(G')\to V(H)$ be a compaction from $G'$ to $H$. Then the restriction $h\vert_{V(G)}$ is a compaction from $G$ onto an element of $\calB^{\max}(H)$.}
\medskip

\noindent{\bf Proof of the claim:}\quad 
As $h$ is a homomorphism, it maps each connected component of $G'$ to a connected component of $H$. As, furthermore, $h$ is a compaction and $G'$ and $H$ have the same number of connected components, it follows that there exist connected components $C_1, \dots, C_k$ of $G'$ such that for $i\in [k]$, $h\vert_{V(C_i)}$ is a compaction from $C_i$ onto $B_i$. To prove the claim, we show that $G$ is an element of $\calC=\{C_1, \dots, C_k\}$. In order to compact onto a graph in $\calB^{\max}(H)$, a graph in $\calC$ has to have at least $j$ vertices and $a\cdot b$ edges itself. By the construction of $G'$ and the fact that $\calA(H)$ is empty, a connected component in $G'$ with at least $j$ vertices and $a\cdot b$ edges can only be one of the following:
\begin{itemize}
\item a  biclique $K_{a,b}$,
\item a star with at least $j$ vertices and at least $a\cdot b$ edges
\item or the copy of $G$.
\end{itemize}
Since $a,b\ge 2$, it is easy to see that there is no compaction from a star onto a $K_{a,b}$. Finally, there are only $k-1$ connected components in $G'$ that are  bicliques of the form $K_{a,b}$ other than (possibly) $G$. Therefore, $G$ has to be an element of $\calC$, which proves the claim. {\bf (End of the proof of the claim.)}
\medskip

\noindent Using the notation from Definition~\ref{defn:H^0}, the claim implies
\begin{equation}
\comp{G'}{H} = \sum_{i=1}^k \comp{G}{B_i}\cdot \comp{(H\setminus B_i)^0}{H\setminus B_i}. \label{equ:calB2}
\end{equation}
We can simplify the expression (\ref{equ:calB2}) using   the fact that all elements of $\calB^{\max}(H)$ are of the form $K_{a,b}$:
\[
\comp{G'}{H} =k\cdot \comp{G}{J} \cdot \comp{(H\setminus J)^0}{H\setminus J}.
\]
As $\comp{(H\setminus J)^0}{H\setminus J}$ does not depend on $G$, it can be computed in constant time. Thus, using a single \Comp{H} oracle call we can compute $\comp{G}{J}$ in polynomial time as required.
\end{proof}

Finally, we prove the main theorem of this section, which we restate at this point.
{\renewcommand{\thethm}{\ref{thm:CompDichotomy}}
\begin{thm}
\ThmComp
\end{thm}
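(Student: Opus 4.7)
The tractability half is already in Lemma~\ref{lem:LCompFP}, and both $\Comp{H}, \LComp{H} \in \numP$ is immediate since any alleged compaction acts as a short, verifiable witness. The trivial reduction $\Comp{H} \leq \LComp{H}$ (take $S_v = V(H)$ for every $v$) also lets me focus solely on $\numP$-hardness of $\Comp{H}$, assuming $H$ has at least one component that is neither an irreflexive star nor a reflexive clique of size at most $2$. I then split into two cases.

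In Case~A, some component of $H$ is not a reflexive clique or an irreflexive biclique. Then Theorem~\ref{thm:DGorig} says $\Hom{H}$ is $\numP$-complete, and Theorem~\ref{thm:HomToComp} provides $\Hom{H} \leq \Comp{H}$, so $\Comp{H}$ is $\numP$-hard. In Case~B every component of $H$ is a reflexive clique or an irreflexive biclique, but some component is \emph{bad} (neither an irreflexive star nor a reflexive clique of size $\leq 2$). A brief check shows every bad component has size at least $3$: a bad reflexive clique has size $\geq 3$ by definition, and a bad irreflexive biclique is a non-star $K_{a,b}$ with $a,b \geq 2$, hence of size $\geq 4$. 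Since bad components are non-stars, the quantity $j$ from Definition~\ref{defn:calAcalB} (the maximum size of a non-star component) is well defined and satisfies $j \geq 3$. Moreover, any size-$j$ non-star component in Case~B is a reflexive $K_j$ (lying in $\calA(H)$) or an irreflexive $K_{a,b}$ with $a+b=j$ and $a,b \geq 2$ (lying in $\calB(H)$); at least one of $\calA(H), \calB(H)$ is therefore non-empty.

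If $\calA(H) \neq \emptyset$, then any $J \in \calA(H)$ is a reflexive clique of size $j \geq 3$: Lemma~\ref{lem:cCompCToComp1} gives $\cComp{J} \leq \Comp{H}$ and Lemma~\ref{thm:simpleCompDichotomy} shows $\cComp{J}$ is $\numP$-hard. Otherwise $\calA(H) = \emptyset$ and $\calB(H) \neq \emptyset$, and Lemma~\ref{lem:cCompCToComp2} delivers a $J \in \calB(H)$ with $\cComp{J} \leq \Comp{H}$; here $J$ is an irreflexive $K_{a,b}$ with $a,b \geq 2$, so once again Lemma~\ref{thm:simpleCompDichotomy} makes $\cComp{J}$ hard. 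Either way $\Comp{H}$ is $\numP$-hard, completing the proof. All the real work has been done in Theorem~\ref{thm:HomToComp}, Lemma~\ref{thm:simpleCompDichotomy}, and Lemmas~\ref{lem:cCompCToComp1}--\ref{lem:cCompCToComp2}; the main obstacle here is simply to verify that the component $J$ delivered in Case~B is of the exact form required by Lemma~\ref{thm:simpleCompDichotomy}, which is guaranteed by the bound $j \geq 3$ combined with the definitions of $\calA(H)$ and $\calB(H)$.
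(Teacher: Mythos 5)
Your proposal is correct and follows essentially the same route as the paper: tractability from Lemma~\ref{lem:LCompFP}, hardness via Theorem~\ref{thm:HomToComp} when some component is not a reflexive clique or irreflexive biclique, and otherwise hardness via Lemma~\ref{thm:simpleCompDichotomy} combined with Lemma~\ref{lem:cCompCToComp1} or Lemma~\ref{lem:cCompCToComp2} applied to a size-$j$ component in $\calA(H)\cup\calB(H)$. Your explicit check that $j\ge 3$ and that every size-$j$ non-(irreflexive-star) component lands in $\calA(H)$ or $\calB(H)$ is exactly the verification the paper leaves implicit in its Case~3.
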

\addtocounter{thm}{-1}
}
\begin{proof}
The membership of \Comp{H} in $\numP$ is straightforward.
We distinguish between a number of cases depending on the graph $H$.

Case 1: Suppose that every connected component of $H$ is an irreflexive star
or a reflexive clique of size at most~$2$. Then \LComp{H} is in $\FP$ by Lemma~\ref{lem:LCompFP}.

Case 2:  Suppose that $H$ contains a component that is not a reflexive clique or an irreflexive biclique.
Then the hardness of $\Hom{H}$ (from Theorem~\ref{thm:DGorig})   together with
the  reduction $\Hom{H}\le \Comp{H}$ (from Theorem~\ref{thm:HomToComp})
implies  that \Comp{H} is $\numP$-hard.
The hardness of \LComp{H} follows from the trivial reduction from \Comp{H} to \LComp{H}.

Case 3: Suppose that the components of $H$ are reflexive cliques or irreflexive bicliques and  that $H$  
contains at least one component that is not an irreflexive star
or a reflexive clique of size at most~$2$.  
Every graph $J\in \calA(H) \cup \calB(H)$   is
a reflexive clique of size at least~$3$ or an irreflexive biclique that is not a star.
By Lemma~\ref{thm:simpleCompDichotomy},  \cComp{J} is $\numP$-complete. 
Finally, as $\calA(H) \cup \calB(H)$ is non-empty, we can use either Lemma~\ref{lem:cCompCToComp1} or Lemma~\ref{lem:cCompCToComp2} to obtain the existence of $J\in \calA(H) \cup \calB(H)$ with $\cComp{J} \le \Comp{H}$.
This implies that \Comp{H} is \#P-hard.
As in Case~2, the hardness of \LComp{H} follows from the trivial reduction from \Comp{H} to \LComp{H}. \end{proof}

\section{Counting Surjective Homomorphisms}\label{sec:SHom}
 
The proof of Theorem~\ref{thm:SHom} is divided into two sections. The first of these deals with tractable
cases and the second deals with hardness results and also contains the proof of the final  theorem.
Taken together,  Theorem~\ref{thm:SHom} and Dyer and Greenhill's Theorem~\ref{thm:DGorig} show that the problem
of counting surjective homomorphisms to a fixed graph~$H$ has the same complexity characterisation as
the problem of counting all homomorphisms to~$H$.

Section~\ref{sec:UniSHom} shows that this equivalence disappears in the uniform case, where $H$
is part of the input, rather than being a fixed parameter of the problem. 
Specifically, Theorem~\ref{thm:Uniform} demonstrates a setting in which counting surjective homomorphisms
is more difficult than counting all homomorphisms (assuming $\FP\neq\numP$).

\subsection{Tractability Results}\label{sec:easySHom}

\begin{thm}\label{thm:LSHomToLHom}
Let $H$ be a graph. Then
$\LSHom{H} \le \LHom{H}$.
\end{thm}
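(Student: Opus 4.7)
The plan is to reduce $\LSHom{H}$ to $\LHom{H}$ by a standard inclusion-exclusion over the image of the homomorphism, exploiting that $|V(H)|$ is a constant (since $H$ parameterises the problem).

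Given an instance $(G,\boldS)$ of $\LSHom{H}$, for any set $W \subseteq V(H)$ let $\boldS_W = \{S_v \cap W : v \in V(G)\}$. I would first observe that
\[\hom{(G,\boldS_W)}{H} = \#\{\text{list homomorphisms from }(G,\boldS)\text{ to }H\text{ with image } \subseteq W\},\]
simply because restricting each list to $W$ forbids exactly the homomorphisms whose image hits $V(H) \setminus W$, and nothing else. In particular this quantity can be computed in polynomial time with one call to an $\LHom{H}$ oracle (the reduction simply passes $(G,\boldS_W)$).

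Next, let $f(W)$ denote the number of list homomorphisms from $(G,\boldS)$ to $H$ whose image is exactly $W$, so that $\hom{(G,\boldS_W)}{H} = \sum_{W' \subseteq W} f(W')$. By Möbius inversion on the Boolean lattice of subsets of $V(H)$,
\[f(W) = \sum_{W' \subseteq W} (-1)^{|W| - |W'|} \hom{(G,\boldS_{W'})}{H}.\]
Since $\sur{(G,\boldS)}{H} = f(V(H))$, applying this with $W = V(H)$ gives
\[\sur{(G,\boldS)}{H} = \sum_{W' \subseteq V(H)} (-1)^{|V(H)| - |W'|} \hom{(G,\boldS_{W'})}{H}.\]

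This identity is the reduction: the right-hand side is a sum of $2^{|V(H)|}$ terms, each obtainable by a single oracle call to $\LHom{H}$ on an input of size polynomial in $|(G,\boldS)|$, and $|V(H)|$ is constant because $H$ is the fixed parameter. There is no genuine obstacle here; the only thing to verify carefully is that the oracle call with list collection $\boldS_{W'}$ is a valid input of $\LHom{H}$ (the lists are subsets of $V(H)$, so this is immediate) and that the inclusion-exclusion count of ``image exactly $W$'' homomorphisms is the correct notion of surjectivity onto~$H$ when $W = V(H)$.
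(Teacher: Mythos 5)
Your proof is correct, but it takes a genuinely different route from the paper's. You perform inclusion--exclusion over the image: for each $W\subseteq V(H)$ you restrict every list to $W$, observe that $\hom{(G,\boldS_W)}{H}$ counts exactly the list homomorphisms with image contained in $W$, and then recover $\sur{(G,\boldS)}{H}$ by M\"obius inversion as an alternating sum of $2^{|V(H)|}$ oracle answers --- a constant number of non-adaptive calls, each on the same graph $G$ with shrunken lists. The paper instead enumerates ``configurations'': it fixes an ordering of $V(G)$, guesses, for each vertex of $H$, the first vertex of $G$ mapped to it together with the assignment, and uses lists to pin those vertices and to forbid all earlier vertices from hitting not-yet-introduced targets; summing $\hom{(G,\boldS^\psi)}{H}$ over the at most $q!\binom{n}{q}\le n^q$ configurations gives $\sur{(G,\boldS)}{H}$. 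Both are valid polynomial-time Turing reductions. Your version is shorter, uses only constantly many oracle calls, and matches the subset-sum-over-subgraphs flavour used elsewhere in the paper (e.g.\ in the tractability argument for counting compactions); the paper's version has the mild aesthetic advantage that its decomposition is a sum of non-negative terms rather than an alternating sum, though nothing in the paper exploits this. The only points worth verifying in your write-up --- that $(G,\boldS_W)$ is a legal $\LHom{H}$ instance and that ``image exactly $V(H)$'' is the right notion of surjectivity --- are exactly the ones you flag, and both are immediate.
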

\begin{proof}
Let $H$ be fixed and $\abs{V(H)}=q$. Let $(G,\boldS)$ be an input instance of \LSHom{H}. Let $(v_1,\dots,v_n)$ be the vertices of $G$ in an arbitrary but fixed order. With respect to this ordering and with respect to a homomorphism from $G$ to $H$, let us denote by $v_{i_1}$ the first vertex of $G$ which is assigned the first new vertex of $H$ ($v_{i_1} = v_1$), $v_{i_2}$ the first vertex of $G$ which is assigned the second new vertex of $H$ and so on. Every surjective homomorphism from $G$ to $H$ contains exactly one subsequence $\boldv=(v_{i_1},\dots,v_{i_q})$ and every homomorphism containing such a subsequence is surjective. The number of subsequences is bounded from above by $\binom{n}{q}$. Let $\sigma\from \boldv \to V(H)$ be an assignment of the vertices of $H$ to the vertices in $\boldv$. There are $q!$ such assignments. We call $\psi=(\boldv,\sigma)$ a \emph{configuration} of $G$ and $\Psi(G)$ the set of all configurations for the given $G$.
For every such configuration $\psi$ we create a \LHom{H} instance $(G,\boldS^\psi)$ with $\boldS^\psi = \{S^\psi_{v_i} \subseteq V(H)\ :\ i\in [n]\}$ and
\[
S^\psi_{v_i}=
\begin{cases}
S_{v_i} \cap \{\sigma(v_{i_j})\}, &\text{ if }i=i_j \text{ for }j\in[q]\\
S_{v_i} \cap \{\sigma(v_{i_1}),\dots, \sigma(v_{i_j})\}, &\text{ for }i_{j}<i<i_{j+1}.
\end{cases}
\]
Intuitively, we use lists to ``pin'' the vertices in $\boldv$ to the vertices assigned by $\sigma$ and to prohibit the remainder of the vertices of $G$ from being mapped to new vertices of $H$.
Then
\[\sur{(G,\boldS)}{H}  = \sum_{\psi \in \Psi(G)} \hom{(G,\boldS^\psi)}{H}\]
We can compute $\sur{(G,\boldS)}{H}$ by making a \LHom{H} oracle call for every instance $(G,\boldS^\psi)$ and adding the results. The number of oracle calls $|\Psi(G)|$ is bounded from above by the polynomial $q!\binom{n}{q}\le n^q$.
\end{proof}

\begin{cor}\label{cor:easyLSHom}
Let $H$ be a graph. If every connected component of $H$ is  a reflexive  clique or an irreflexive  biclique then $\LSHom{H}$ is in $\FP$.
\end{cor}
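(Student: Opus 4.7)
The plan is to combine the two preceding results directly: Theorem~\ref{thm:LSHomToLHom} gives a polynomial-time Turing reduction $\LSHom{H} \le \LHom{H}$, and Theorem~\ref{thm:DGorig} tells us that under the stated hypothesis on~$H$, the problem $\LHom{H}$ is in~$\FP$. Composing these two facts yields the corollary, since a polynomial-time Turing reduction from $\LSHom{H}$ to a problem in $\FP$ places $\LSHom{H}$ itself in $\FP$.

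More explicitly, I would argue as follows. Let $H$ be a graph in which every connected component is a reflexive clique or an irreflexive biclique, and let $(G,\boldS)$ be an arbitrary instance of $\LSHom{H}$. By Theorem~\ref{thm:LSHomToLHom}, there is an algorithm that, making polynomially many oracle calls to $\LHom{H}$ on instances of size polynomial in that of $(G,\boldS)$, computes $\sur{(G,\boldS)}{H}$ in polynomial time. By Theorem~\ref{thm:DGorig}, each such oracle call can be answered by a polynomial-time algorithm. Substituting this algorithm for the oracle yields a polynomial-time algorithm for $\LSHom{H}$.

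There is really no obstacle here, since the work has already been done: the combinatorial content sits in Theorem~\ref{thm:LSHomToLHom} (the configuration-based enumeration that replaces surjectivity by a polynomial-sized list of list-homomorphism instances) and the tractable side of Dyer and Greenhill's dichotomy. The only thing to check is that the composition remains polynomial time, which is immediate from the bound $|\Psi(G)| \le n^q$ appearing in the proof of Theorem~\ref{thm:LSHomToLHom}, together with the fact that $q=|V(H)|$ is a constant (as $H$ is fixed).
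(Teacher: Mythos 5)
Your proposal is correct and is exactly the paper's argument: compose the reduction $\LSHom{H} \le \LHom{H}$ from Theorem~\ref{thm:LSHomToLHom} with the tractability of $\LHom{H}$ from Theorem~\ref{thm:DGorig}. The extra detail you supply about the polynomial bound on the number of oracle calls is already contained in the proof of Theorem~\ref{thm:LSHomToLHom}, so nothing further is needed.
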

\begin{proof}
The statement follows directly from Theorem~\ref{thm:LSHomToLHom} using  Dyer and Greenhill's dichotomy from Theorem~\ref{thm:DGorig}.
\end{proof}

\subsection{Hardness Results}\label{sec:hardSHom}

The following result and proof are very similar to that of Theorem~\ref{thm:HomToComp} and Lemma~\ref{lem:HomToComp2}, respectively. For completeness, we repeat the proof in detail.
\begin{thm}\label{thm:HomToSHom}
Let $H$ be a graph. Then
$\Hom{H} \le \SHom{H}$.
\end{thm}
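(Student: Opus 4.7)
The plan is to mimic the interpolation argument used in Lemma~\ref{lem:HomToComp1}, which handled the case where $H$ has isolated vertices (size-$1$ components) by classifying compactions according to how many ``problem'' vertices of $H$ are used by the $G$-part of the input. Here the analogous obstruction is that a surjective homomorphism must hit \emph{every} vertex of $H$, so we add isolated vertices to $G$ to ``absorb'' whichever vertices of $H$ are missed by $G$ itself.

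Concretely, let $q=\abs{V(H)}$. Given an input $G$ to $\Hom{H}$, let $V_t$ be a set of $t$ fresh isolated vertices and set $G_t = G\oplus V_t$; note $G_t$ is irreflexive since $V_t$ has no edges. For $k\in\{0,\ldots,q\}$, let $S^k(G)$ denote the number of homomorphisms from $G$ to $H$ whose image consists of exactly $k$ vertices of $H$. For $i\in\{0,\ldots,q\}$, let $N^i(V_t)$ denote the number of homomorphisms from $V_t$ to $H$ whose image contains a fixed $i$-element subset of $V(H)$. Since $V_t$ has no edges, a homomorphism is just an arbitrary function, so by symmetry $N^i(V_t)$ depends only on $i$, $t$ and $q$, not on the particular $i$-set chosen. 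Classifying each surjective homomorphism $\gamma\from V(G_t)\to V(H)$ by the subset of vertices of $H$ used by $\gamma|_{V(G)}$ yields
\[
\sur{G_t}{H} \;=\; \sum_{i=0}^{q} S^{q-i}(G)\cdot N^i(V_t),
\]
because $\gamma|_{V_t}$ must supply precisely the vertices of $H$ missed by $\gamma|_{V(G)}$ (any of the $q$ vertices is an admissible image for an isolated vertex).

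Choosing $t=0,1,\ldots,q$ gives a linear system $\boldb = \boldA\boldx$ with $\boldb_t = \sur{G_t}{H}$ (computable with $q+1$ oracle calls to $\SHom{H}$), unknowns $\boldx_k = S^{q-k}(G)$, and matrix $\boldA_{t,i} = N^i(V_t)$. The matrix $\boldA$ is lower triangular: when $t<i$ one cannot cover $i$ distinct targets using only $t$ function values, so $N^i(V_t)=0$; and the diagonal entries are $N^t(V_t)=t!\neq 0$, since each of the $t$ isolated vertices must be mapped bijectively onto one of the $t$ prescribed vertices. Hence $\boldA$ is invertible, and solving the system recovers every $S^k(G)$; summing gives $\hom{G}{H} = \sum_{k=0}^{q} S^k(G)$.

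There is no essential obstacle here: the argument parallels the interpolation of Lemma~\ref{lem:HomToComp1} almost verbatim, and is in fact simpler than that of Lemma~\ref{lem:HomToComp2} because we do not have to keep track of which non-loop edges of $H$ are used. The only point requiring a moment's care is the well-definedness of $N^i(V_t)$, which follows immediately from the fact that an isolated vertex has no edge constraints and can therefore be mapped to any vertex of $H$.
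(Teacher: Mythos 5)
Your proposal is correct and follows essentially the same route as the paper's proof: add $t$ isolated vertices, classify surjective homomorphisms by how many vertices of $H$ the $G$-part uses, and invert the resulting triangular system with diagonal entries $t!$. The only cosmetic difference is your phrase that $\gamma|_{V_t}$ must supply ``precisely'' the missed vertices, which should read ``at least'' to match your (correct) definition of $N^i(V_t)$ as counting maps whose image \emph{contains} the fixed $i$-set.
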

\begin{proof}
Let $\abs{V(H)}=q$ and $G$ be an input to \Hom{H}. We design a graph $G_t= G\oplus W_t$ as an input to the problem \SHom{H} by adding a set $W_t$ of $t$ new isolated vertices to the graph $G$. 

We introduce some additional notation. Let $S^k(G)$ be the number of homomorphisms $\sigma$ from $G$ to $H$ that use exactly $k$ of the vertices of $H$. Let $\{w_1, \dots, w_k\}$ be a set of $k$ arbitrary but fixed vertices from $H$. We define $N^k(W_t)$ as the number of homomorphisms $\tau$ from $W_t$ to $H$ such that $\{w_1, \dots, w_k\}$ are amongst the vertices used by $\tau$. The particular choice of vertices $\{w_1, \dots, w_k\}$ is not important when counting homomorphisms from a set of isolated vertices---$N^k(W_t)$ only depends on the numbers $k$ and $t$.

We observe that, for each surjective homomorphism $\gamma\from V(G_t) \to V(H)$, the restriction $\gamma\vert_{V(G)}$ uses a subset $V'\subseteq V(H)$ of vertices and does not use any vertices outside of $V'$. Suppose that $V'$ has cardinality $\abs{V'}=q-k$ for some $k\in \{0,\dots, q\}$. Then $\gamma\vert_{W_t}$ uses at least the remaining $k$ fixed vertices of $H$.

Therefore, we obtain the following linear equation for a fixed $t\ge 0$:
\[
\underbrace{\sur{G_t}{H}}_{b_t}
=\sum_{k=0}^q \underbrace{S^{q-k}(G)}_{x_k}\underbrace{N^k(W_t)}_{a_{t,k}}.
\]

By choosing $q+1$ different values for the parameter $t$ we obtain a system of linear equations. Here, we choose $t=0,\dots ,q$. Then the system is of the form $\boldb= \boldA\boldx$ for 
\begin{equation*}
\boldb=
\begin{pmatrix}
b_0\\
\vdots\\
b_q
\end{pmatrix}
\qquad
\boldA=
\begin{pmatrix}
a_{0,0} &\dots & a_{0,q}\\
\vdots & \ddots & \vdots\\
a_{q,0} &\dots & a_{q,q}\\
\end{pmatrix}
\qquad
\text{and}
\qquad
\boldx= \begin{pmatrix}
x_0\\
\vdots\\
x_q
\end{pmatrix}.
\end{equation*}
Note, that the vector $\boldb$ can be computed using $q+1$ \SHom{H} oracle calls. Further, 
\[\sum_{k=0}^q x_k = \sum_{k=0}^q S^{q-k}(G)= \sum_{k=0}^q S^k(G) = \hom{G}{H}.\]
Thus, determining $\boldx$ is sufficient for computing the sought-for $\hom{G}{H}$. It remains to show that the matrix $\boldA$ is of full rank and is therefore invertible.

For $t<k$, clearly $a_{t,k} = N^k(W_t) = 0$. Further, for the diagonal elements we have $a_{t,t}= N^t(W_t)=t!$ for $t\in \{0,\dots, q\}$. Hence, 
\begin{align*}
\boldA =
\begin{pmatrix}
1 & 0 &\cdots &0\\
\ast & 1! & \ddots &\vdots\\
\vdots & \ddots & \ddots & 0\\
\ast & \cdots &\ast & q!
\end{pmatrix}
\end{align*}
is a triangular matrix with non-zero diagonal entries, which completes the proof.
\end{proof}

{\renewcommand{\thethm}{\ref{thm:SHom}}
\begin{thm}
\ThmSHom
\end{thm}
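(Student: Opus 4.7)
The plan is to simply assemble the final theorem from the results that have already been developed in the section, so the proof is a short case analysis rather than a new technical argument. Before splitting into cases, I would remark that both $\SHom{H}$ and $\LSHom{H}$ lie in $\numP$ for any fixed $H$, since given a candidate function $h\from V(G)\to V(H)$ one can check in polynomial time that it respects edges, respects the lists (in the list version), and hits every vertex of~$H$.

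For the tractable direction, assume every connected component of $H$ is a reflexive clique or an irreflexive biclique. Corollary~\ref{cor:easyLSHom} directly places $\LSHom{H}$ in $\FP$. To obtain the same conclusion for $\SHom{H}$, I would observe that any instance $G$ of $\SHom{H}$ corresponds to an instance $(G,\boldS)$ of $\LSHom{H}$ with $S_v = V(H)$ for every $v \in V(G)$, so $\sur{G}{H} = \sur{(G,\boldS)}{H}$; hence $\SHom{H}\le \LSHom{H}$ and $\SHom{H}$ is in $\FP$ as well.

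For the hard direction, assume some connected component of $H$ fails to be a reflexive clique or an irreflexive biclique. Theorem~\ref{thm:DGorig} gives that $\Hom{H}$ is $\numP$-complete, and Theorem~\ref{thm:HomToSHom} provides the reduction $\Hom{H}\le \SHom{H}$, so $\SHom{H}$ is $\numP$-hard. Composing with the trivial reduction $\SHom{H}\le \LSHom{H}$ noted above, $\LSHom{H}$ is $\numP$-hard too. Together with membership in $\numP$, both problems are $\numP$-complete.

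Since every ingredient is already in place, there is no real obstacle left: the only ``content'' in the proof is checking that the tractable and intractable frontiers stated in Dyer--Greenhill coincide exactly with those in the theorem statement, which they do by hypothesis. Thus the proof reduces to citing Corollary~\ref{cor:easyLSHom}, Theorem~\ref{thm:DGorig}, Theorem~\ref{thm:HomToSHom}, and the trivial inclusion $\SHom{H}\le \LSHom{H}$, combined in two short paragraphs.
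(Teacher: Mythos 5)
Your proof is correct and follows exactly the same route as the paper: Corollary~\ref{cor:easyLSHom} together with the trivial reduction $\SHom{H}\le\LSHom{H}$ for tractability, and Theorem~\ref{thm:DGorig} composed with Theorem~\ref{thm:HomToSHom} (and the same trivial reduction) for hardness. The only addition is your explicit remark on membership in $\numP$, which the paper leaves implicit here.
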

\addtocounter{thm}{-1}
}
\begin{proof}
The easiness result  follows from Corollary~\ref{cor:easyLSHom} using the trivial
reduction $\SHom{H} \leq \LSHom{H}$.
The hardness result follows from the same trivial reduction, along with Theorem~\ref{thm:HomToSHom} and the dichotomy for \Hom{H} from~Theorem~\ref{thm:DGorig}. 

\end{proof}

\subsection{The Uniform Case}\label{sec:UniSHom}

We have seen from Theorems~\ref{thm:DGorig} and~\ref{thm:SHom}
that the problem of counting homomorphisms to
a fixed graph~$H$ has the same complexity as the problem of counting \emph{surjective} homomorphisms to~$H$.

Nevertheless, there are scenarios in which counting problems involving surjective homomorphisms
are more difficult than those involving unrestricted homomorphisms.
To illustrate this point, we consider the following 
\emph{uniform} homomorphism-counting problems.
Motivated by terminology from constraint satisfaction, we use ``uniform'' to indicate that the target graph~$H$
is part of the input, rather than being a fixed parameter.\bigskip

 \noindent\begin{tabular}{ll} 
\textbf{Name:} $\UniHom$. & \textbf{Name:} $\UniSHom$.  
\tabularnewline[-0.05cm]
\textbf{Input:}  Irreflexive graph $G$ & \textbf{Input:} Irreflexive graph $G$  
\tabularnewline[-0.05cm]
whose components are cliques & whose components are cliques 
\tabularnewline[-0.05cm]
and reflexive graph $H$ & and reflexive graph $H$
\tabularnewline[-0.05cm]
whose components are cliques. & whose components are cliques. 
\tabularnewline[-0.05cm]
\textbf{Output:} $ \hom{G}{H}$. & \textbf{Output:}  $\sur{G}{H}$.    
\tabularnewline[-0.05cm]
\end{tabular}\bigskip

The main result of this section is the following theorem.
\newcommand{\ThmUniform}{
$\UniHom$ is in $\FP$ but $\UniSHom$ is $\numP$-complete.
}\begin{thm}\label{thm:Uniform}
\ThmUniform
\end{thm}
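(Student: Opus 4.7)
The plan is to treat the two assertions separately: derive an explicit polynomial-time formula for $\UniHom$, and give a polynomial-time Turing reduction from the $\numP$-complete problem $\SubSum$ to $\UniSHom$. The key observation powering both parts is that every function from an irreflexive clique $K_n$ to a reflexive clique $K_m$ is automatically a homomorphism: two distinct images are joined by a non-loop edge of $K_m$, while coincident images are covered by the loop at their common vertex, so adjacency is always preserved regardless of which vertex of $K_m$ each source maps to.

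For tractability, I would use the fact that a homomorphism from a connected graph must land in a single component of $H$. If $G$ has components of sizes $n_1, \dots, n_t$ and $H$ has components of sizes $m_1, \dots, m_s$, this yields
\[\hom{G}{H} = \prod_{i=1}^t \sum_{j=1}^s m_j^{n_i},\]
which is evaluable in polynomial time in the input length.

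For hardness, given a $\SubSum$ instance $(a_1, \dots, a_n; T)$ with $M := \sum_i a_i$ (we may assume $0 < T < M$, since the extreme cases are trivial), I would construct $G = K_{a_1} \oplus \cdots \oplus K_{a_n}$ and $H = K_T \oplus K_{M-T}$, noting that $G$ is an irreflexive disjoint union of cliques and $H$ is a reflexive disjoint union of cliques as required. By the observation above, each homomorphism $G \to H$ is determined by choosing, for every $i$, one of the two components of $H$ to host $K_{a_i}$, together with an arbitrary vertex-assignment inside that component. Letting $A \subseteq [n]$ index the cliques routed to $K_T$, surjectivity factors across the two components of $H$, so the contribution of $A$ to $\sur{G}{H}$ is the product of the two surjective-function counts from sets of sizes $\sum_{i \in A} a_i$ and $\sum_{i \notin A} a_i$ onto $V(K_T)$ and $V(K_{M-T})$ respectively.

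The crux is that, because the source sizes sum to exactly $T + (M-T) = M$, both surjectivity constraints can hold simultaneously only when $\sum_{i \in A} a_i = T$; for each such $A$ the two surjections are forced to be bijections, contributing exactly $T!\,(M-T)!$. Summing over $A$ gives $\sur{G}{H} = T!\,(M-T)! \cdot \SubSum(a_1,\dots,a_n; T)$, so an oracle for $\UniSHom$ immediately solves $\SubSum$, and membership of $\UniSHom$ in $\numP$ is routine. The main obstacle I anticipate is exactly this tightness step: the two target sizes must be arranged so that the partition-induced surjection constraints collapse to the equality $\sum_{i \in A} a_i = T$ rather than a pair of inequalities, which is what lets the reduction output a clean integer multiple of the $\SubSum$ count.
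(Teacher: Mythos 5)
Your proposal is correct and follows essentially the same route as the paper: the same product formula for the tractable uniform homomorphism count, and the same reduction from $\SubSum$ using $G = K_{a_1}\oplus\cdots\oplus K_{a_n}$ and a two-component reflexive target of sizes $T$ and $M-T$, with the observation that the two surjectivity constraints force the partition sum to equal $T$ exactly (the paper phrases this via Stirling numbers of the second kind, you phrase it via forced bijections, but the argument is the same).
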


In order to prove Theorem~\ref{thm:Uniform}, we define 
a counting variant of the subset sum problem.
Given a set of integers $\calA=\{a_1,\dots,a_n\}$ and an integer $b$ let $S{(\calA,b)}$, 
be the number of subsets $\calA'\subseteq \calA$ such that the sum of the elements in $\calA'$ is equal to $b$. 
The counting problem is stated as follows.
\prob{\SubSum.}{A set of positive integers $\calA=\{a_1,\dots,a_n\}$ and a positive integer $b$.}{$S{(\calA,b)}$.} 
It is well known that $\SubSum$ is $\numP$-complete (see for instance the textbook by Papadimitriou~\cite[Theorems 9.9, 9.10 and 18.1]{Papadimitriou1994}). Thus, Theorem~\ref{thm:Uniform} follows immediately from
Lemmas~\ref{lem:UniHom} and~\ref{lem:SubSum}.

\begin{lem}\label{lem:UniHom}
$\UniHom$ is in $\FP$.
\end{lem}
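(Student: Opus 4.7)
The plan is as follows. Let $G$ be the input graph, with connected components $G_1,\ldots,G_t$ that are irreflexive cliques of sizes $k_1,\ldots,k_t$, and let $H$ be the target, with connected components $H_1,\ldots,H_s$ that are reflexive cliques of sizes $m_1,\ldots,m_s$. The goal is to derive a closed-form expression for $\hom{G}{H}$ that can be evaluated in polynomial time.

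First, I would invoke the standard multiplicativity of homomorphism counts over connected components of~$G$ (stated in Section~\ref{sec:Prelim}) to write
\[
\hom{G}{H} \;=\; \prod_{i=1}^{t} \hom{G_i}{H}.
\]
Next, since each $G_i$ is connected, the image of a homomorphism from $G_i$ to~$H$ must lie entirely inside a single connected component of~$H$, so
\[
\hom{G_i}{H} \;=\; \sum_{j=1}^{s} \hom{G_i}{H_j}.
\]

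The key observation is then that $\hom{K_{k_i}}{H_j} = m_j^{k_i}$, where $K_{k_i}$ denotes the irreflexive clique on $k_i$ vertices and $H_j$ is the reflexive clique on $m_j$ vertices. Indeed, any function $f\colon V(G_i)\to V(H_j)$ is automatically a homomorphism: for every edge $\{u,v\}\in E(G_i)$ the pair $\{f(u),f(v)\}$ is either a loop (if $f(u)=f(v)$) or a non-loop edge of the clique $H_j$, and in both cases it belongs to $E(H_j)$ because $H_j$ is a reflexive clique. Thus every one of the $m_j^{k_i}$ functions from $V(G_i)$ to $V(H_j)$ counts.

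Combining these gives
\[
\hom{G}{H} \;=\; \prod_{i=1}^{t} \sum_{j=1}^{s} m_j^{k_i},
\]
which can be computed in polynomial time in the size of the input: there are $ts$ terms of the form $m_j^{k_i}$, each of which is an integer of polynomial bit-length and can be computed by repeated squaring. I do not anticipate a genuine obstacle here; the only mildly delicate point is justifying that every map into a reflexive clique is a homomorphism, which is immediate once one recalls that ``reflexive'' means that all loops are present.
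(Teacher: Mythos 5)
Your proposal is correct and follows essentially the same route as the paper's proof: both decompose $\hom{G}{H}$ multiplicatively over the components of $G$, sum over the components of $H$ each connected component can map into, and use that every map from an irreflexive clique into a reflexive clique is a homomorphism, arriving at the identical formula $\prod_i \sum_j m_j^{k_i}$. Your write-up merely spells out the intermediate justifications that the paper compresses into one line.
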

\begin{proof}
Let $G$ and $H$ be an input instance of $\UniHom$.
Let $k$ be the number of connected components of $G$ and let $a_1,\dots ,a_k$ be the number of vertices of these components, respectively. Let $H$ have $q$ connected components with $b_1, \dots, b_q$ vertices, respectively. Then, as all components are 
cliques  and $H$ is reflexive, 
\[\hom{G}{H} = \prod_{i=1}^k \sum_{j=1}^q b_j^{a_i}.\]
Thus, it is easy to compute $\hom{G}{H}$.
\end{proof}

\begin{lem}\label{lem:SubSum}
$\SubSum \le \UniSHom$.
\end{lem}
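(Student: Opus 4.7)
The plan is to describe a one-query Turing reduction from $\SubSum$ to $\UniSHom$. Given an instance $(\calA=\{a_1,\dots,a_n\},b)$ with positive-integer entries, set $N=\sum_{i=1}^n a_i$; the boundary cases $b>N$ and $b=N$ give $S(\calA,b)=0$ and $S(\calA,b)=1$ respectively and can be answered directly, so I will assume $1\le b\le N-1$. I would then take $G$ to be the disjoint union of irreflexive cliques of sizes $a_1,\dots,a_n$ and $H$ to be the disjoint union of two reflexive cliques $B_1$ and $B_2$ of sizes $b$ and $N-b$ respectively. Both $G$ and $H$ satisfy the structural constraints of $\UniSHom$.

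The analysis rests on two easy observations about maps from an irreflexive clique into a reflexive clique. First, because every component of $G$ is connected and every component of $H$ is a clique, every homomorphism $h\from G\to H$ sends each component of $G$ entirely into a single component of $H$. Second, because $B_1$ and $B_2$ are reflexive cliques, once a destination $B_j$ is fixed for a component of $G$, every function from that component's vertex set into $V(B_j)$ is automatically a homomorphism: same-vertex collisions are absorbed by loops, and distinct-vertex pairs give clique edges. Consequently a homomorphism $G\to H$ is encoded by a subset $S\subseteq[n]$ (the indices of $G$-components sent to $B_1$) together with an arbitrary function into $B_1$ and an arbitrary function into $B_2$, and it is surjective iff both of those functions are surjective.

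Writing $\sigma(S)=\sum_{i\in S}a_i$, this gives
\[\sur{G}{H}=\sum_{S\subseteq[n]} \text{Surj}(\sigma(S),b)\cdot\text{Surj}(N-\sigma(S),N-b).\]
Surjectivity onto $B_1$ requires $\sigma(S)\ge b$, and surjectivity onto $B_2$ requires $N-\sigma(S)\ge N-b$, i.e.\ $\sigma(S)\le b$; these together force $\sigma(S)=b$, in which case the contribution is $\text{Surj}(b,b)\cdot\text{Surj}(N-b,N-b)=b!(N-b)!$. Hence $\sur{G}{H}=b!(N-b)!\cdot S(\calA,b)$, so a single oracle call for $\UniSHom$ followed by one division recovers $S(\calA,b)$.

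I do not expect a serious obstacle. The ``clique into reflexive clique'' setup makes every vertex-function a homomorphism, which is exactly what lets the count decompose cleanly into a product of surjection counts, and choosing the two components of $H$ to have sizes $b$ and $N-b$ is engineered precisely to annihilate every term in the sum except those coming from subsets that sum to $b$. The only mild care needed is the boundary-case bookkeeping for $b\ge N$ and the verification that the construction respects the input restrictions on $G$ and $H$.
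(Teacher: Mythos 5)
Your reduction is correct and is essentially the paper's own proof: the same graphs $G$ (irreflexive cliques of sizes $a_i$) and $H$ (reflexive cliques of sizes $b$ and $N-b$), the same observation that surjectivity onto both components forces the subset sum to equal exactly $b$, and the same identity $\sur{G}{H}=b!(N-b)!\cdot S(\calA,b)$. Your separate handling of the $b=N$ boundary case (where the second component of $H$ would be empty) is a small tidiness improvement over the paper, which only treats $N<b$ explicitly.
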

\begin{proof}
Let $\calA=\{a_1, \dots, a_k\}$, $b$ be an input instance of \SubSum. We define $N=\sum_{i=1}^k a_i$. Now, we design a polynomial time algorithm to determine $S(\calA,b)$ using an oracle for \UniSHom. If $N<b$, we have $S(\calA,b)=0$. Now assume $N\ge b$. We create an input of $\UniSHom$ as follows. We set $G$ to be an irreflexive graph with a connected component $G_i$ for each $i\in[k]$, where $G_i$ is a  clique with $a_i$ vertices. Furthermore, we set $H$ to be a reflexive graph with two connected components $H_1$ and $H_2$. Let $H_1$ be a  clique with $b$ vertices and let $H_2$ be a  clique with $N-b$ vertices. By $\stirling{n}{k}$ we denote the Stirling number of the second kind, i.e.\ the number of partitions of a set of $n$ elements into $k$ non-empty subsets. By definition, we have $\stirling{n}{k}=0$ if $n<k$.

Let $h\from V(G) \to V(H)$ be a homomorphism from $G$ to $H$ and let $b'$ be the number of vertices of $G$ that are mapped to the connected component $H_1$. Note that $h$ has to map each connected component of $G$ to a connected component of $H$. By the construction of $G$, this implies that there exists a subset $\calA'\subseteq \calA$ such that the sum of elements in $\calA'$ is equal to $b'$.
Furthermore, as all connected components of $G$ and $H$ are  cliques and $H$ is reflexive, the number of surjective homomorphisms from $G$ to $H$ that assign exactly $b'$ fixed vertices to $H_1$ is equal to the number of surjective mappings from $[b']$ to $[b]$, which is $b!\stirling{b'}{b}$. 
Therefore, we can express $\sur{G}{H}$ as follows.
\begin{align}
\sur{G}{H}=\sum_{b'=0}^N S(\calA,b')\cdot b!\stirling{b'}{b}\cdot (N-b)!\stirling{N-b'}{N-b}, \label{equ:SubSum1}
\end{align}
where the factor $(N-b)!\stirling{N-b'}{N-b}$ corresponds to the number surjective mappings from the remaining $N-b'$ fixed vertices of $G$ to the component $H_2$.
Finally, we use the fact that the summands in (\ref{equ:SubSum1}) are non-zero only if $b'\ge b$ and $N-b'\ge N-b$, which implies $b'=b$. Thus,
\begin{align*}
\sur{G}{H}&= S(\calA,b)\cdot b!\stirling{b}{b}\cdot (N-b)!\stirling{N-b}{N-b}\\
&=b!(N-b)!\cdot S(\calA,b).
\end{align*}
\end{proof}

\section{Addendum: A Dichotomy for Approximately Counting Homomorphisms with Surjectivity Constraints}

The following standard definitions are taken from~\cite[Definitions 11.1, 11.2, Exercise 11.3]{Mitzenmacher2017}.
A randomised algorithm gives an $(\epsilon,\delta)$-approximation
for the value~$V$ if the output~$X$ of the algorithm satisfies $\Pr(|X-V| \leq \epsilon V) \geq 1-\delta$.
A \emph{fully polynomial randomised approximation scheme} (FPRAS)
for a problem~$V$ is a randomised algorithm which, given an input~$x$ and a parameter 
$\epsilon \in (0,1)$, outputs an $(\epsilon,1/4)$-approximation to~$V(x)$ in time that is polynomial in~$1/\epsilon$
and the size of the input~$x$.
The concept of an 
approximation-preserving reduction (AP-reduction) between counting
problems  was introduced by Dyer, Goldberg, Greenhill and Jerrum~\cite{DGGJApprox}. 
We will not need the detailed
definition here, but the definition has the property that if there is an AP-reduction from
problem~$A$ to problem~$B$ (written as $A\leap B$) then
this reduction, together with an FPRAS for~$B$, yields an FPRAS for~$A$. 
The problem $\bis$, which is the problem of counting the independent sets of a bipartite graph, comes up frequently
in approximate counting because  it is complete with respect to AP-reductions in an intermediate 
complexity class. It is not believed to have an FPRAS.
Galanis,  Goldberg and Jerrum~\cite{Galanis2016} gave a dichotomy for
the problem of \emph{approximately}
 counting homomorphisms in the connected case, in terms of $\bis$.

\begin{thm}[\cite{Galanis2016}]\label{thm:HomBIS}
Let $H$ be a connected graph. If $H$ is a reflexive clique or an irreflexive biclique, then there is an FPRAS for \Hom{H}. Otherwise, $\bis\leap\Hom{H}$.
\end{thm}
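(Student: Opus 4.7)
The plan is to dispatch the tractable direction immediately and then focus on the reduction from $\bis$. If $H$ is a reflexive clique or an irreflexive biclique, Theorem~\ref{thm:DGorig} shows that $\Hom{H}$ lies in $\FP$, and any exact polynomial-time counting algorithm is trivially an FPRAS. So the real content lies in the ``otherwise'' direction.

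For the hard direction, I would split into structural cases for a connected graph $H$ that is neither a reflexive clique nor an irreflexive biclique: (i) $H$ is irreflexive, bipartite, and not a biclique; (ii) $H$ is reflexive and not a clique; (iii) $H$ is irreflexive and not bipartite; (iv) $H$ has some but not all of its vertices looped. In each case, the goal is to build, from a bipartite graph $B$, an input graph $G_B$ such that an FPRAS for $\Hom{H}$ on $G_B$ can be post-processed to approximately count the independent sets of $B$.

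The starting point in case (i) is that $H$ has two non-adjacent vertices on the same side of its bipartition, which can serve as the two ``states'' of an encoded independent set via standard bipartite gadgets. In case (ii), two non-adjacent vertices of the reflexive $H$ play the same role, but now the presence of loops means every partial assignment has many completions, and one must use a phase-transition-style analysis, viewing $H$ as specifying a Potts-like weighting. Cases (iii) and (iv) can usually be reduced back to (i) or (ii) by preprocessing: a bipartite double cover handles non-bipartite $H$, while for mixed $H$ one restricts attention to the looped or unlooped part of $H$ via gadgets that ``pin'' the image to the appropriate region.

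The main obstacle will be accuracy preservation. Unlike the polynomial interpolation used earlier in the paper (e.g.\ in Theorem~\ref{thm:HomToComp}), an AP-reduction cannot afford any condition-number blow-up, so the ``bad'' homomorphisms---those not corresponding to independent sets of $B$---must contribute only an exponentially small fraction of the total count. Establishing this gap typically requires a second-moment or large-deviations argument tailored to the specific gadget, and is the most delicate ingredient. The mixed case (iv) is particularly subtle because loops and non-loops interact: one must argue that the ``dominant phase'' of a random homomorphism localises on a prescribed sub-structure of $H$, and then choose the gadget to match that phase.
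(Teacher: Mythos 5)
This theorem is not proved in the paper at all: it is imported verbatim from Galanis, Goldberg and Jerrum~\cite{Galanis2016}, and the paper's ``proof'' is the citation. Your tractable direction is correct and matches what anyone would say: Theorem~\ref{thm:DGorig} puts $\Hom{H}$ in $\FP$ for these $H$, and an exact polynomial-time algorithm is an FPRAS. The problem is the hardness direction, where your text is a research plan rather than a proof. Every substantive ingredient --- the actual gadgets, the ``second-moment or large-deviations argument'', the ``dominant phase'' analysis for partially looped $H$ --- is named as an obstacle and then deferred. Those ingredients are precisely the content of~\cite{Galanis2016}, which is a long and technical paper; asserting that they ``typically require'' such arguments does not supply them, so the reduction $\bis\leap\Hom{H}$ is not established by your proposal.

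Two of your structural cases are also miscalibrated. Case (iii), irreflexive non-bipartite $H$, is in fact the \emph{easy} case and needs no bipartite double cover (which would change the target graph and does not obviously preserve AP-reducibility): by Hell and Ne\v{s}et\v{r}il the decision problem $\DHom{H}$ is NP-complete there, and by the machinery of~\cite{DGGJApprox} an NP-hard decision version already yields $\bis\leap\#\mathrm{SAT}\leap\Hom{H}$. Conversely, cases (ii) and (iv) --- reflexive non-clique, and mixed loops --- are where the decision problem is trivial or easy and where all of the difficulty of~\cite{Galanis2016} is concentrated; these are exactly the cases your sketch treats most vaguely (``Potts-like weighting'', ``pin the image to the appropriate region''). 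If the intent is to use this theorem as the paper does, the honest proof is to cite~\cite{Galanis2016}; if the intent is to reprove it, the argument as written has a gap the size of that paper.
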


In this addendum we give a similar dichotomy for approximately counting homomorphisms with surjectivity constraints\footnote{When $H$ is not connected, the complexity of approximate counting is open even for counting homomorphisms. Hence we do not address this case here.}.
The tractability part of the following theorem follows from 
Theorem~\ref{thm:SHom}, Corollary~\ref{cor:retdichotomy} and from Lemma~\ref{lem:CompFPRAS} 
below. The
$\bis$-hardness   follows from Theorem~\ref{thm:HomBIS} and 
 from the reductions in Lemmas~\ref{lem:APHomToSHom},~\ref{lem:APSHomToComp} and~\ref{lem:APHomToRet}.

\begin{thm}\label{thm:ApproxDicho}
Let $H$ be a connected graph. If $H$ is a reflexive clique or an irreflexive biclique, then there is an FPRAS for  \SHom{H}, \Ret{H} and \Comp{H}. Otherwise, 
each of these problems is $\bis$-hard under approximation-preserving reductions.
\end{thm}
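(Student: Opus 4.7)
Suppose $H$ is a connected reflexive clique or irreflexive biclique. Then $\SHom{H}$ and $\Ret{H}$ are in $\FP$ by Theorem~\ref{thm:SHom} and Corollary~\ref{cor:retdichotomy}, so both trivially admit FPRASes. For $\Comp{H}$ the plan is to prove Lemma~\ref{lem:CompFPRAS} by rejection sampling. Since $\Hom{H}$ is in $\FP$ by Theorem~\ref{thm:DGorig}, a uniformly random homomorphism $h\colon G\to H$ can be generated in polynomial time (for a reflexive clique this is a uniform vertex-colouring; for an irreflexive biclique, first pick a $2$-colouring of $G$ and then independently pick a side for each colour class). After factoring out the isolated vertices of~$G$, which contribute a multiplicative factor computable in closed form, a union-bound argument over the non-loop edges of $H$ shows that whenever the non-isolated part of $G$ has at least some $C(H)$ edges, the fraction of homomorphisms that are compactions is at least a positive constant; the FPRAS then follows by taking $O(1/\varepsilon^2)$ samples via Hoeffding's inequality. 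The finitely many remaining instances have constantly many non-isolated vertices and are handled exactly by brute force.

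\textbf{Hardness.} Let $H$ be a connected graph that is neither a reflexive clique nor an irreflexive biclique. Theorem~\ref{thm:HomBIS} gives $\bis \leap \Hom{H}$. It suffices to produce AP-reductions $\Hom{H} \leap \Ret{H}$, $\Hom{H} \leap \SHom{H}$, and $\SHom{H} \leap \Comp{H}$; composing with Theorem~\ref{thm:HomBIS} then delivers $\bis$-hardness of all three problems. The reduction $\Hom{H}\leap\Ret{H}$ (Lemma~\ref{lem:APHomToRet}) is already given by the construction in Observation~\ref{obs:ret}: setting $G'=G\oplus H^0$ and distinguishing the copy of $V(H)$ in $G'$ yields the equality $\ret{(G';v_1,\dots,v_c)}{H}=\hom{G}{H}$ via a single oracle call, hence trivially AP-preserving.

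For the remaining two reductions (Lemmas~\ref{lem:APHomToSHom} and~\ref{lem:APSHomToComp}), the interpolation arguments of Theorems~\ref{thm:HomToSHom} and~\ref{thm:HomToComp} amplify relative error and so are not AP-preserving. The plan instead is to construct, for each reduction, an irreflexive ``forcing'' gadget depending only on $H$. For $\Hom{H}\leap\SHom{H}$, a gadget $D_H$ such that every homomorphism $D_H\to H$ is surjective: on the input $G'=G\oplus D_H$, the $D_H$-component already covers $V(H)$, so
\begin{equation*}
\sur{G'}{H} \;=\; \hom{G}{H}\cdot\sur{D_H}{H},
\end{equation*}
and a single oracle call gives the reduction. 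For $\SHom{H}\leap\Comp{H}$, a gadget $E_H$ whose every homomorphism to $H$ uses every non-loop edge of~$H$: on the input $G''=G\oplus E_H$ the compaction and surjective-homomorphism conditions coincide on the combined graph, so a single oracle call to $\Comp{H}$ returns a known multiple of $\sur{G}{H}$. In both cases the multiplicative constant depends only on $H$ and is precomputed.

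\textbf{Main obstacle.} The crux is constructing the forcing gadgets $D_H$ and $E_H$ for every $H$ in the hardness regime. If $H$ is an irreflexive core one can take $D_H=E_H=H$, since every endomorphism of a core is an automorphism. The difficulty arises when $H$ contains loops: an irreflexive gadget can map non-surjectively onto the reflexive part of $H$ by collapsing vertices onto a single looped vertex. The plan is a case distinction exploiting the assumption that $H$ is neither a reflexive clique nor an irreflexive biclique in order to pin down structural witnesses (for example an induced $P_3$ or $C_4$ in the non-loop part of $H$) around which the gadgets can be assembled. Carrying this out uniformly across all $H$ in the hardness regime is the main technical hurdle and is likely to mirror the structural analysis of Section~\ref{sec:hardComp}.
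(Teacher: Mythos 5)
Your hardness plan has a fatal gap: the ``forcing'' gadgets $D_H$ and $E_H$ you need simply do not exist for most graphs $H$ in the hardness regime. If $H$ contains a looped vertex $w$ and has at least two vertices, then \emph{every} irreflexive graph $D$ admits the constant homomorphism onto $w$, which is neither vertex-surjective nor edge-surjective; so no irreflexive gadget can force surjectivity, and no case analysis on the structure of $H$ can rescue this (consider, say, a reflexive path on three vertices). The same obstruction kills the irreflexive non-core case: composing any homomorphism $D_H\to H$ with a retraction of $H$ onto its proper core yields a non-surjective homomorphism. Your plan therefore only works when $H$ is an irreflexive core, which is far from the whole hardness regime. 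The paper's route is genuinely different and is the idea you are missing: pad $G$ with $t=\Theta(n)$ isolated vertices (for $\SHom{H}$) or $t=\Theta(n)$ isolated edges (for $\Comp{H}$), so that $\sur{G_t}{H}$ equals $s_{t,q}\hom{G}{H}$ plus a contribution from \emph{non-surjective} assignments of the padding; with $t$ chosen so that $(q/(q-1))^t\geq 5q^n2^q$, that contribution is, after dividing by the number $s_{t,q}$ of surjective maps $[t]\to[q]$, an additive error below $1/4$, and since $\hom{G}{H}$ is an integer one recovers it by flooring an $\epsilon/21$-accurate oracle answer, exactly as in \cite[Theorem 3]{DGGJApprox}. (The paper also reduces $\Hom{H}$ to $\Comp{H}$ directly via isolated edges rather than through $\SHom{H}$.) Your reduction $\Hom{H}\leap\Ret{H}$ and your observation that the interpolation arguments of Section~\ref{sec:hardSHom} do not preserve approximation are both correct.

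There is also a flaw in your tractability argument for $\Comp{H}$. The claim that ``whenever the non-isolated part of $G$ has at least $C(H)$ edges, the fraction of homomorphisms that are compactions is at least a positive constant'' is false: if $H$ is a reflexive clique on $q\geq 3$ vertices and $G$ is a star with $n$ leaves, then $G$ has arbitrarily many edges but $\comp{G}{H}=0$, since the centre of the star would have to lie in every non-loop edge of $H$ simultaneously; a union bound over the non-loop edges of $H$ cannot give what you want. The correct statement, which the paper uses, is conditional: \emph{if} at least one compaction exists, it restricts to a compaction of $G[U]$ for some $U$ with $|U|\leq 2p$, and every extension of that restriction to $V(G)\setminus U$ is again a compaction, whence $\comp{G}{H}\geq \hom{G}{H}/q^{2p}$; naive Monte Carlo is then valid because in the remaining case the true answer is $0$ and the estimator outputs $0$. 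In the biclique case with fewer than $p$ components one must additionally stratify the sample space by the orientation of each component's bipartition before applying this bound. Your sampling routine and the treatment of isolated vertices are otherwise in line with the paper.
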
 

\begin{lem}\label{lem:CompFPRAS}
Let $H$ be a reflexive clique or an irreflexive biclique. Then there is an FPRAS for \Comp{H}.
\end{lem}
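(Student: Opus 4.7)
The plan is to give an FPRAS for $\Comp{H}$ via rejection sampling from the set of homomorphisms, complemented by exact enumeration in cases where $G$ is structurally degenerate. Two ingredients are needed: polynomial-time exact evaluation of $\hom{G}{H}$ and polynomial-time uniform sampling from the set of homomorphisms from $G$ to $H$. Both are straightforward for the specified $H$ via Theorem~\ref{thm:DGorig}: for $H=K_k^{\text{refl}}$, assign each vertex of $G$ independently and uniformly to $V(H)$; for $H=K_{a,b}$, process each connected component of $G$ (which must be bipartite, else $\hom{G}{H}=0$) by first sampling its side assignment with probability proportional to $a^{|A|}b^{|B|}$ (or $a^{|B|}b^{|A|}$) and then sampling each vertex uniformly on the side of $H$ chosen for its part; isolated vertices are sampled uniformly over $V(H)$. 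Given these primitives, the estimator $(X/N)\cdot\hom{G}{H}$, where $X$ counts compactions among $N$ independent samples and $N=\Theta(\epsilon^{-2}/p)$ for $p := \comp{G}{H}/\hom{G}{H}$, yields an $(\epsilon,1/4)$-approximation by a standard Chernoff bound, provided $p \geq 1/\mathrm{poly}(n,1/\epsilon)$.

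To guarantee such a lower bound on $p$ (or else handle $G$ by exact computation), I split on $n = |V(G)|$ and the matching number $\mu = \mu(G)$. A coupon-collector union bound will show that if both $n$ and $\mu$ exceed constants $n_0,\mu_0$ depending only on $H$, then $p \geq 1/2$. For $H=K_k^{\text{refl}}$ the probability that a uniform homomorphism misses a fixed vertex of $H$ is at most $(1-1/k)^n$, and---restricting attention to a maximum matching of $G$, whose edges are vertex-disjoint and hence have conditionally independent images---the probability that it fails to realize a fixed non-loop edge of $H$ is at most $(1-2/k^2)^\mu$; summing over the $O(1)$ vertices and non-loop edges of $H$ gives $1-p \leq 1/2$. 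The analogous bound for $H=K_{a,b}$ uses $1/(ab)$ in place of $2/k^2$ and requires some care with the per-side vertex counts under random side assignments, but goes through by the same kind of union bound.

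If $n \leq n_0$, enumerate all $|V(H)|^n = O(1)$ functions and check each directly. If instead $\mu \leq \mu_0$, any maximum matching yields a vertex cover $C$ of size at most $2\mu_0$, and the $n-|C|$ non-cover vertices form an independent set all of whose edges go to $C$. Enumerate the $|V(H)|^{|C|}=O(1)$ assignments $\phi:C\to V(H)$; for each $\phi$, count the compacting extensions to the non-cover vertices by inclusion-exclusion over the constantly many vertices and non-loop edges of $H$ not already covered by $\phi$ and the induced cover-cover edges. Each inclusion-exclusion term reduces to a product $\prod_{v \notin C} |P_v \setminus F_v|$, where $P_v$ is the set of values permitted for $v$ by its cover-neighbours' images (all of $V(H)$ for $K_k^{\text{refl}}$; the opposite side of the bipartition for $K_{a,b}$) and $F_v$ is the forbidden set imposed by the particular inclusion-exclusion choice. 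Each such product is computable in polynomial time, and summing the $O(1)$ terms and then over all $\phi$ gives $\comp{G}{H}$ exactly.

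The chief obstacle is the small-matching case, whose correctness relies on the observation that non-cover vertices form an independent set and so interact only locally with the cover, allowing the per-$\phi$ counts to factor cleanly over non-cover vertices. This is precisely the feature of reflexive cliques and irreflexive bicliques that makes the argument work: at each non-cover vertex, both the permissible set $P_v$ and the forbidden set $F_v$ are determined by the images of $v$'s cover-neighbours alone, with no coupling among non-cover vertices, so the required counts remain products rather than becoming intractable correlated sums.
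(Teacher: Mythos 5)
Your proposal is correct and shares the paper's algorithmic core --- rejection sampling (naive Monte Carlo) from the set of homomorphisms, which for these targets can be counted exactly and sampled uniformly --- but it establishes the required density bound $\comp{G}{H}/\hom{G}{H}\ge 1/\mathrm{poly}$ by a genuinely different route. The paper argues combinatorially: any single compaction is already witnessed on a set $U\subseteq V(G)$ of at most $2p$ vertices (where $p$ is the number of edges of $H$), and every extension of that witness is again a compaction, so whenever a compaction exists the density is at least $1/q^{2p}$ with $q=|V(H)|$; for bicliques this is applied to a single fixed orientation when the number of components $\kappa$ is at least $p$, and separately to each of the $2^\kappa<2^p$ orientation classes when $\kappa<p$. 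You instead prove a coupon-collector union bound showing the density is at least $1/2$ once $|V(G)|$ and the matching number $\mu(G)$ exceed constants depending on $H$ --- exploiting that vertex-disjoint matching edges have (conditionally) independent images and that at least $\mu(G)=\sum_i\mu(C_i)\le\sum_i\min(|L_i|,|R_i|)$ vertices land on each side of a biclique --- and you fall back on exact counting otherwise (brute force for small $n$; enumeration of assignments to a vertex cover of size at most $2\mu_0$ followed by inclusion-exclusion over the $O(1)$ surjectivity requirements for small $\mu$). Your route avoids the paper's stratification over orientations but pays for it with the exact-counting machinery in the small-matching regime; the independence claims and the factorization of each inclusion-exclusion term as $\prod_{v\notin C}|P_v\setminus F_v|$ over the independent set $V(G)\setminus C$ both check out, so the argument is sound. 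The only spots deserving an extra line in a full write-up are the degenerate targets ($H=K_1$ or a biclique with an empty side, which the paper dismisses as trivial) and the per-side vertex count for $K_{a,b}$ that you only gesture at, which does go through via the inequality $\mu(C_i)\le\min(|L_i|,|R_i|)$ noted above.
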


\begin{proof}

Let $H$ be a reflexive clique or an irreflexive biclique with $q$ vertices and $p$ edges.
Our goal is give an FPRAS for \Comp{H}.
 
First, we show that we can assume without loss of generality that every input~$G$
to \Comp{H} has no isolated vertices.
To see this, suppose instead that  $G$ is of the form $G'\oplus I$ where $I$ is the set of isolated vertices in $G$. 
As $H$ is connected, we have $\comp{G}{H}=q^{\abs{I}}\comp{G'}{H}$. 
Thus, an estimate of the number of compactions from $G'$ to $H$ will
immediately enable us to approximately count compactions from $G$ to $H$.

From now on we restrict attention to inputs $G$ which have no isolated vertices.
We use $\calH(G,H)$ to denote the set of homomorphisms from $G$ to $H$.  
   
\noindent{\bf Case 1. $H$ is a reflexive clique.}

Let $G$ be a size-$n$ input to \Comp{H}.
Then $\hom{G}{H}=q^n$. 
If there is a  compaction from $G$ to $H$ 
then there is a set $U\subseteq V(G)$ 
with $|U| \leq 2p$
and a compaction~$\sigma$ from~$G[U]$ to~$H$. 
Each assignment of  the (at most $n-2p$) vertices in $V(G)\setminus U$
extends $\sigma$
to a  compaction from~$G$ to~$H$. 
Thus, we have $\comp{G}{H}\ge q^{n-2p}=\hom{G}{H}/q^{2p}$. Using this lower bound, it is straightforward to apply the naive Monte Carlo method (cf. \cite[Theorem 11.1]{Mitzenmacher2017}).
Hence Algorithm~\ref{algo:MC} with $c=q^{2p}$ and $\calH=\calH(G,H)$ 
gives an $(\epsilon,\delta)$-approximation for the number of compactions in $\cal{H}$.

\begin{algorithm}
\caption{If the number of compactions in $\mathcal{H}$ is at least $| \calH|/c$ then  
by 
 \cite[Theorem 11.1]{Mitzenmacher2017}
this algorithm gives an $(\epsilon,\delta)$-approximation
for the number of compactions in $\mathcal{H}$.}
\label{algo:MC}
\begin{algorithmic}
\Require Irreflexive graph $G$, $\epsilon\in (0,1)$ and $\delta\in (0,1)$.
\State   $ m = \left\lceil c  3 \ln(2/\delta)/\epsilon^2\right\rceil$.
\State Choose $m$ samples independently and uniformly at random from $\calH$.
\State Let $X_1,\dots, X_m$ be the corresponding indicator random variables, where $X_i$ takes value $1$ \State if the $i$th sample is a compaction and $0$ otherwise.
\State $\displaystyle Y=\frac{\abs{\calH}}{m} \sum_{i=1}^m X_i$.
\Ensure $Y$
\end{algorithmic}
\end{algorithm}

If there are no compactions in $\mathcal{H}$ then the algorithm answers~$0$.
Otherwise, the number of compactions in $\mathcal{H}$ is at least $|\calH|/c$,
so the algorithm gives an $(\epsilon,\delta)$-approximation.

When the algorithm is run with $\delta=1/4$, the running time is at most a polynomial in $n$ and
$1/\epsilon$ because $m$ is at most a polynomial in $1/\epsilon$ and
the basic tasks (choosing a sample from $\mathcal{H}$, determining whether a sample is a compaction,
and computing $|\mathcal{H}|=q^n$) can all be done in $\text{poly}(n)$ time. 
Thus, the algorithm gives an FPRAS for \Comp{H}.

\noindent{\bf Case 2. $H$ is an irreflexive biclique.}

Let the bipartition of $V(H)$ be   $(L_H,R_H)$  
where $\ell_H = |L_H| \leq |R_H| = r_H$. 
We can assume that $\ell_H \geq 1$, otherwise counting compactions to~$H$ is trivial.

Without loss generality, we can assume that inputs $G$ to \Comp{H}
are bipartite (as well as having no isolated vertices).
(If $G$ is not bipartite, then $\comp{G}{H}=0$.)

Suppose that $G$ is an input to \Comp{H}.
Let $C_1,\dots,C_{\kappa}$ be the connected components of $G$. For each $i\in [\kappa]$, let $(L_i,R_i)$ be a fixed bipartition of $C_i$ such that 
$1\leq \ell_i = \abs{L_i}\le \abs{R_i} = r_i$.
Then $\hom{G}{H}=\prod_{i=1}^{\kappa} \left(\ell_H^{ {\ell_i}}{r_H}^{ {r_i}} + {\ell_H}^{ {r_i}}{r_H}^{ {\ell_i}}\right) \leq 
2 \prod_{i=1}^{\kappa} {\ell_H}^{ {\ell_i}}{r_H}^{{r_i}}$. 

Let $\Omega$ be the set of functions $\omega   \from [\kappa] \to \{L_H,R_H\}$.
Given $\omega\in \Omega$, we say that a homomorphism from $G$ to $H$ obeys $\omega$ if, for each $i\in [\kappa]$, the vertices of $L_i$ are assigned to vertices in $\omega(i)$.  

\noindent{\bf Case 2a. $\kappa \geq p$.}

Let $\omega$ be the function in $\Omega$ that maps every $i\in [\kappa]$ to $L_H$.
Since $G$ has no isolated vertices, each of $C_1,\ldots,C_\kappa$ has at least $2$~vertices,
so there is a compaction from~$G$ to~$H$ which obeys~$\omega$.

As in Case~1, there is a set $U\subseteq V(G)$ of size at most $2p$
such that 
there is a compaction  $\sigma$ from $G[U]$ to~$H$
that obeys the restriction of~$\sigma$ to~$U$.
Every assignment of the vertices in $V(G)\setminus U$  that obeys~$\omega$
yields an $\omega$-obeying compaction from~$G$ to~$H$.
Since $r_H\geq \ell_H$, we obtain the lower bound
 
\[
\comp{G}{H}\ge  
\frac{1}{{(r_H)}^{2p}}\prod_{i=1}^{\kappa} {\ell_H}^{{\ell_i}}{r_H}^{{r_i}}\ge \frac{\hom{G}{H}}{2{(r_H)}^{2p}}.
\]
By the same arguments as 
in Case~1, Algorithm~\ref{algo:MC} with $c=2{(r_H)}^{2p}$ and $\calH=\calH(G,H)$ gives an $(\epsilon,\delta)$-approximation for 
the number of compactions in $\calH$.
When the algorithm is run with $\delta=1/4$, the running time is at most a polynomial in $|V(G)|$ and $1/\epsilon$,
so it can be used in an FPRAS for inputs $G$ with $\kappa \geq p$.

\noindent{\bf Case 2b. $\kappa < p$.}

For each $\omega\in \Omega$, let $\calH_\omega(G,H)$ be the set of homomorphisms obeying $\omega$,  and let $N_\omega(G\to H)$ and $N^{\text{comp}}_\omega(G\to H)$ be the number of homomorphisms and compactions obeying $\omega$, respectively. 
Given a compaction that obeys $\omega$ we obtain a lower bound as before:
\[
N^{\text{comp}}_\omega(G\to H)\ge \frac{1}{{(r_H)}^{2p}}\prod_{i=1}^{\kappa} \abs{\omega(i)}^{{\ell_i}}(\abs{V(H)}-\abs{\omega(i)})^{{r_i}}= \frac{N_\omega(G\to H)}{{(r_H)}^{2p}}.
\]
Now Algorithm~\ref{algo:MC} with 
$c={(r_H)}^{2p}$ and $\calH=\calH_\omega(G,H)$
gives an $(\epsilon,\delta)$-approximation for the number of compactions in 
$\calH_\omega(G,H)$.
Taking $\delta = 1/(4 \cdot 2^{\kappa})$ and summing over the $2^{\kappa}< 2^p$
functions $\omega\in \Omega$,
we obtain an $(\epsilon,1/4)$-approximation for the number of compactions in $\calH(G,H)$.
 The running time of each call to Algorithm~\ref{algo:MC} 
 is at most a polynomial in $|V(G)|$ and $1/\epsilon$.
 Thus, putting the cases together, we get an FPRAS for \Comp{H}.\end{proof}

\begin{lem}\label{lem:APHomToSHom}
Let $H$ be a graph. Then $\Hom{H} \leap \SHom{H}$.
\end{lem}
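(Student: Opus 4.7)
The plan is to adapt the isolated-vertex gadget from Theorem~\ref{thm:HomToSHom}, but to avoid the polynomial interpolation (which would be fatal for approximation because it relies on signed linear combinations of approximately computed quantities). Instead, I would exploit the observation that if we append a sufficiently large independent set $W_t$ of isolated vertices to $G$, then almost every homomorphism from $G_t = G \oplus W_t$ to $H$ is automatically surjective. This makes $\sur{G_t}{H}/q^t$ an excellent multiplicative approximation of $\hom{G}{H}$, where $q=|V(H)|$.

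Concretely, let $g(G,V')$ be the number of homomorphisms from $G$ to $H$ whose image is exactly $V'$, and let $h_t(U)$ be the number of homomorphisms from $W_t$ to $H$ whose image contains $U$. Partitioning by the image of each side gives
\[
\sur{G_t}{H} \;=\; \sum_{V' \subseteq V(H)} g(G,V')\, h_t(V(H)\setminus V').
\]
An inclusion-exclusion bound yields $h_t(U) \ge q^t - |U|(q-1)^t \ge q^t\bigl(1 - q(1-1/q)^t\bigr)$, and trivially $h_t(U)\le q^t$. Since $h_t$ is antitone in $U$, summing over $V'$ gives
\[
\bigl(1-q(1-1/q)^t\bigr)\, q^t \hom{G}{H} \;\le\; \sur{G_t}{H} \;\le\; q^t \hom{G}{H},
\]
so $\sur{G_t}{H}/q^t$ is a deterministic one-sided $\gamma_t$-approximation of $\hom{G}{H}$ with $\gamma_t = q(1-1/q)^t$.

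The AP-reduction is then almost immediate. On input $(G,\epsilon)$: choose the smallest $t$ with $\gamma_t \le \epsilon/3$ (since $q$ is a constant this gives $t = O(\log(1/\epsilon))$, and therefore $q^t$ is polynomially bounded in $1/\epsilon$); construct $G_t = G\oplus W_t$ (of size polynomial in $|G|$ and $1/\epsilon$); make a single oracle call to $\SHom{H}$ on $G_t$ with error parameter $\epsilon/3$ to obtain $X$; and output $X/q^t$. If $\hom{G}{H}=0$ then $\sur{G_t}{H}=0$ and the oracle correctly returns $0$.

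The one genuine thing to check is that the two sources of error combine cleanly. The deterministic gap gives $\sur{G_t}{H}/q^t \in [(1-\epsilon/3)\hom{G}{H},\, \hom{G}{H}]$, while the oracle guarantees $X \in [(1-\epsilon/3),(1+\epsilon/3)]\cdot \sur{G_t}{H}$ with probability at least $3/4$. Multiplying, $X/q^t$ lies in $[(1-\epsilon/3)^2,\, (1+\epsilon/3)] \cdot \hom{G}{H} \subseteq [1-\epsilon,\,1+\epsilon]\cdot \hom{G}{H}$, as required. The only potentially delicate point is confirming that $q^t$ stays polynomial in $1/\epsilon$ (so that $t$ is a legal parameter and arithmetic with $q^t$ is efficient), which is immediate once one notes that $q$ is a fixed constant depending only on~$H$. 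This is the main obstacle to keep in mind when writing the proof carefully, though as observed it dissolves under the standing assumption that $H$ is fixed.
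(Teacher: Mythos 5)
Your proof is correct, and although it uses the same gadget as the paper (appending a set of isolated vertices to $G$), the analysis is genuinely different. The paper normalises by $s_{t,q}$, the exact number of surjections from $[t]$ onto $V(H)$ (computed via Stirling numbers), writes $\sur{G_t}{H}$ as $s_{t,q}\hom{G}{H}$ plus an error term coming from non-surjective assignments of the isolated vertices, and takes $t=\Theta(n)$ so large that after dividing by $s_{t,q}$ the error is an \emph{additive} $1/4$; it then invokes the machinery of \cite[Theorem 3]{DGGJApprox} (approximate the oracle value to accuracy $\epsilon/21$ and take a floor) to convert this into an FPRAS. You instead normalise by $q^t$, obtain the two-sided \emph{multiplicative} sandwich $(1-q(1-1/q)^t)\,q^t\hom{G}{H}\le \sur{G_t}{H}\le q^t\hom{G}{H}$ directly from the union bound, and need only $t=O(\log(1/\epsilon))$; the errors then compose by straightforward multiplication. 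What your route buys is a shorter and more self-contained argument: no Stirling-number computation, no appeal to the additive-error lemma of Dyer et al., and a much smaller gadget. What the paper's route buys is that its choice of $t$ is independent of $\epsilon$, and its template transfers with almost no change to the compaction version (Lemma~\ref{lem:APSHomToComp}), where the analogous count $c_{t,p}$ of edge-surjective maps again has a clean closed form. Both arguments rely on $q=|V(H)|$ being a constant, and both correctly dispose of the case $\hom{G}{H}=0$ via $\sur{G_t}{H}=0$; your observation that $q^t=\mathrm{poly}(1/\epsilon)$ for $t=O(\log(1/\epsilon))$ is the only place where constancy of $q$ is truly needed in your version, and you have addressed it.
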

\begin{proof}
Let $q=|V(H)|$. Given any positive integer~$t$, let $s_{t,q}$ denote the number
of surjective functions from $[t]$ to $[q]$.
Clearly, $s_{t,q} \geq q^t - 2^q {(q-1)}^t$,
since the range of
every non-surjective function from~$[t]$ to~$[q]$ is 
a proper subset of $[q]$, and there are most $2^q$ of these.
Also,  the number of functions from $[t]$ onto  this subset is at most ${(q-1)}^t$.

Given any $n$-vertex input $G$ to the problem $\Hom{H}$,
let 
$$t = \lceil
\log(5 q^n 2^q)/\log(q/(q-1)
\rceil.$$
Clearly, $t=O(n)$, and $t$ can be computed in time $\mathrm{poly}(n)$.
Note that
\begin{equation}\label{eq:easy}
{\left(\frac{q}{q-1}\right)}^t \geq 5 q^n 2^q \geq 4 q^n2^q + 2^q.
\end{equation}
Let $G_t$ be the graph constructed from~$G$ by adding 
a set $I_t$ of $t$ isolated vertices that are distinct from the vertices in~$V(G)$.
We claim that
$$s_{t,q} \hom{G}{H} \leq \sur{G_t}{H} \leq s_{t,q} \hom{G}{H} + (q^t - s_{t,q}) q^n.$$
To see this, note that any homomorphism from $G$ to $H$, together with a
surjective homomorphism from the $I_t$ to $V(H)$,
constitutes a surjective homomorphism from $G_t$ to $H$.
Any other surjective homomorphism from $G_t$ to $H$
consists of a non-surjective homomorphism
from $I_t$ to $H$ (and there are $q^t - s_{t,q}$ of these) together
with some homomorphism from $G$ to $H$ (and there are at most $q^n$ of these).
Dividing through by $s_{t,q}$ and applying 
our lower bound for $s_{t,q}$ and then
inequality~\eqref{eq:easy}, we have
\begin{align}\nonumber
\hom{G}{H} \leq \frac{ \sur{G_t}{H}}{s_{t,q}} & \leq \hom{G}{H} + \left(\frac{q^t - s_{t,q}}{s_{t,q}}\right) q^n\\ \nonumber
& \leq \hom{G}{H} + \frac{ 2^q {(q-1)}^t  q^n  }{  q^t - 2^q {(q-1)}^t } \\ \nonumber
& = \hom{G}{H} + \frac{    q^n  }{  \frac{q^t}{2^q{(q-1)}^t} -  1} \\ 
& \leq \hom{G}{H} + \frac{  1  }{  4}. \label{eq:red}
\end{align}
Given Equation~\eqref{eq:red}, the proof of
\cite[Theorem 3]{DGGJApprox} shows that
to approximate $\hom{G}{H}$ with accuracy $\varepsilon$, we need only
use the oracle to obtain an approximation $\widehat{S}$ for 
$\sur{G_t}{H}$ with accuracy $\epsilon/21$. We can then
return the floor of $\widehat{S}/s_{t,q}$.
The only remaining issue is how to compute $s_{t,q}$.
However, it is easy to do this in 
time $\mathrm{poly}(t) = \mathrm{poly}(n)$ since  $s_{t,q} = \stirling{t}{q} q! = \sum_{j=0}^{q}{(-1)}^{q-j} \binom{q}{j} j^t$,
where $\stirling{t}{q}$ is a Stirling number of the second kind.\end{proof}

\begin{lem}\label{lem:APSHomToComp}
Let $H$ be a connected graph. Then $\Hom{H} \leap \Comp{H}$.
\end{lem}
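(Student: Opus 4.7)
Our plan is to construct a ``padding gadget'' $F_t$ so that approximate compaction counts on $G \oplus F_t$ translate into approximate homomorphism counts on $G$. Let $H^0$ denote the loopless version of $H$ (see Definition~\ref{defn:H^0}); since $H$ is connected, so is $H^0$, and the identity map is a compaction from $H^0$ to $H$. For a parameter $t$ to be chosen polylogarithmically in $1/\epsilon$, let $F_t$ be the disjoint union of $t$ copies of $H^0$; note that $F_t$ is irreflexive, so $G' := G \oplus F_t$ is a valid input to $\Comp{H}$ whenever $G$ is a valid input to $\Hom{H}$. Given $(G,\epsilon)$, the reduction will query the $\Comp{H}$ oracle on $G'$ with precision $\epsilon/3$ to obtain an estimate $\widehat{C}$ and return $\widehat{C}/\hom{F_t}{H}$.

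The correctness analysis rests on the sandwich
\begin{equation*}
\hom{G}{H}\cdot\comp{F_t}{H} \;\leq\; \comp{G'}{H} \;\leq\; \hom{G}{H}\cdot\hom{F_t}{H}.
\end{equation*}
The upper bound is trivial. The lower bound uses the observation that, for any homomorphism $\sigma\colon G \to H$ and any $\tau\colon F_t \to H$ that is itself a compaction, the combined map $\sigma \cup \tau$ is automatically a compaction of $G'$, because $\tau$ alone already uses every vertex and every non-loop edge of~$H$. Dividing by $\hom{F_t}{H} = \hom{H^0}{H}^t$ (which depends only on $H$ and $t$, hence is computable in time $\mathrm{poly}(t)$) shows that $\comp{G'}{H}/\hom{F_t}{H}$ lies in $[p_t\,\hom{G}{H},\ \hom{G}{H}]$, where $p_t := \comp{F_t}{H}/\hom{F_t}{H}$.

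The crux of the argument will be to show that $p_t \to 1$ exponentially fast in~$t$. By a union bound,
\begin{equation*}
1 - p_t \;\leq\; \sum_{v\in V(H)} \Pr[\tau \text{ misses } v] + \sum_{e \in E^0(H)} \Pr[\tau \text{ misses } e],
\end{equation*}
where $\tau$ is uniform on $\hom{F_t}{H}$ and ``misses'' means fails to use. Because the $t$ copies of $H^0$ in $F_t$ are independent, each probability above factors as $\beta^t$ for some $\beta \in [0,1)$; specifically, the miss probability for vertex~$v$ equals $(1 - n_v/\hom{H^0}{H})^t$, where $n_v$ is the number of homomorphisms from $H^0$ to~$H$ using~$v$, and analogously for edges. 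The strict inequality $\beta < 1$ is exactly where the identity compaction of $H^0$ is used: it witnesses that each vertex and each non-loop edge of~$H$ is hit by at least one homomorphism from $H^0$ to~$H$, so $n_v, n_e \geq 1$. Hence $1 - p_t$ decays geometrically, and a choice $t = \Theta(\log(1/\epsilon))$ suffices to guarantee $1 - p_t \leq \epsilon/2$.

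The only delicate step is the union-bound argument above; the rest is bookkeeping. Combining the sandwich with the oracle's $(1\pm\epsilon/3)$-approximation, the output will lie within a factor of $(1\pm\epsilon)$ of $\hom{G}{H}$ with the required probability. Since $t$ is polylogarithmic in $1/\epsilon$ and $H$ is fixed, $G'$ has size polynomial in $|V(G)|$ and $\log(1/\epsilon)$ and the whole reduction runs in polynomial time, giving the required AP-reduction.
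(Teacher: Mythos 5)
Your proof is correct, and it rests on the same high-level idea as the paper's — pad $G$ with a disjoint gadget so that (almost) every homomorphism from the padded graph is forced to be a compaction, then divide out the gadget's contribution — but the gadget and the error analysis are genuinely different. The paper pads with $t$ isolated edges and divides by $c_{t,p}=2^t s_{t,p}$, the number of ways the padding alone can cover all $p$ non-loop edges of $H$; this yields $\hom{G}{H} \le \comp{G_t}{H}/c_{t,p} \le \hom{G}{H} + 1/4$, a small \emph{additive} error that still has to be converted into a relative approximation via the integrality/floor argument of \cite[Theorem 3]{DGGJApprox}, and it needs the connectivity of $H$ to conclude that edge-surjectivity of the padding forces vertex-surjectivity. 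Your gadget, $t$ copies of $H^0$, instead gives the purely \emph{multiplicative} sandwich $p_t\,\hom{G}{H} \le \comp{G'}{H}/\hom{F_t}{H} \le \hom{G}{H}$ with $p_t \to 1$ geometrically, so the floor step disappears; the price is the independence-plus-union-bound computation $1-p_t \le (q+p)\left(1 - 1/\hom{H^0}{H}\right)^t$, which is correct because the restrictions of a uniform homomorphism to the $t$ copies are i.i.d.\ and the identity map on $H^0$ witnesses that each vertex and each non-loop edge is hit with probability at least $1/\hom{H^0}{H}$ per copy. A small bonus of your route: since the compaction property of $G'$ is certified entirely inside the padding, you never use connectivity of $H$, so your argument proves the statement for arbitrary $H$, whereas the paper's proof genuinely needs $H$ connected. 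The only points to spell out in a full write-up are the $\hom{G}{H}=0$ case (automatic, since then $\comp{G'}{H}=0$) and the routine combination of the $(1\pm\epsilon/3)$ oracle guarantee with $p_t \ge 1-\epsilon/2$.
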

\begin{proof}
If not explicitly defined otherwise, we use the same notation and observations as in the proof of Lemma~\ref{lem:APHomToSHom}. In addition let $p$ be the number of non-loop edges in $H$ and $c_{t,p}=2^t s_{t,p}$. 
If $G$ is an input to \Hom{H} of size $n$, $G_t$ is the graph constructed from $G$ by adding a set of $t$ isolated edges distinct from the edges in $G$. If $H$ is a graph of size $1$ the statement of the lemma clearly holds. If otherwise $H$ is a connected graph of size at least $2$, every homomorphism that uses all non-loop edges of $H$ is also surjective and therefore a compaction. Thus, we obtain
\[c_{t,p} \hom{G}{H} \leq \comp{G_t}{H} \leq c_{t,p} \hom{G}{H} + (2^tp^t - c_{t,p}) q^n.\]
Dividing through by $c_{t,p}$ gives
\[\hom{G}{H} \leq \frac{\comp{G_t}{H}}{c_{t,p}} \leq  \hom{G}{H} + \left(\frac{p^t - s_{t,p}}{s_{t,p}}\right) q^n.\]
If we choose $t = \lceil
\log(5 q^n 2^p)/\log(p/(p-1)
\rceil$  the remainder of this proof is analogous to that of Lemma~\ref{lem:APHomToSHom}.
\end{proof}

\begin{lem}\label{lem:APHomToRet}
Let $H$ be a graph. Then $\Hom{H} \leap \Ret{H}$.
\end{lem}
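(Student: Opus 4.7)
The plan is to reuse the polynomial-time Turing reduction from $\Hom{H}$ to $\Ret{H}$ already constructed in the proof of Observation~\ref{obs:ret}, and to observe that it is in fact approximation-preserving on the nose. Recall that, given an irreflexive input $G$ to $\Hom{H}$, that construction produces in polynomial time a graph $G'$ on vertex set $V(G) \cup V(H)$ (assumed disjoint) with edge set $E(G) \cup E^0$, where $E^0$ is the set of non-loop edges of $H$, so $G'$ is irreflexive and $(v_1,\ldots,v_c)$ induces a copy of $H$ in $G'$. Observation~\ref{obs:ret} also establishes the exact identity $\hom{G}{H} = \ret{(G';v_1,\ldots,v_c)}{H}$, which holds because the copy of $H$ sitting inside $G'$ is a distinct connected component from~$G$, so a retraction is freely determined on $V(G)$ by any homomorphism from $G$ to $H$ while being forced to be the identity on $\{v_1,\ldots,v_c\}$.

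Because the reduction consists of a single oracle call whose answer equals $\hom{G}{H}$ with no rescaling, any $(\varepsilon,\delta)$-approximation returned by the $\Ret{H}$ oracle for the instance $(G';v_1,\ldots,v_c)$ is automatically an $(\varepsilon,\delta)$-approximation to $\hom{G}{H}$. In particular, an FPRAS for $\Ret{H}$ invoked with parameters $\varepsilon$ and $\delta=1/4$ yields an FPRAS for $\Hom{H}$ with the same parameters and with running time polynomial in $|V(G)|$ and $1/\varepsilon$. This meets the definition of an AP-reduction in the sense of Dyer, Goldberg, Greenhill and Jerrum~\cite{DGGJApprox}.

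I do not expect a real obstacle here: the substantive content was already carried out in Observation~\ref{obs:ret}, and the only thing this lemma adds is the observation that an exact, single-query Turing reduction is a fortiori approximation-preserving. The sole routine check is that $G'$ is computable from $G$ in polynomial time for fixed $H$, which is immediate from the construction (add $|V(H)|$ fresh vertices together with the edges of $E^0$ and record the distinguished tuple $(v_1,\ldots,v_c)$).
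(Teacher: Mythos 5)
Your proposal is correct and is essentially identical to the paper's proof: the paper also forms the disjoint union $G\oplus H'$ of the input with a (loop-stripped) copy of $H$, uses the exact identity $\hom{G}{H}=\ret{(G\oplus H';u_1,\ldots,u_q)}{H}$, and concludes that this single exact oracle call is an AP-reduction. No gaps.
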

\begin{proof}
Let $q=\abs{V(H)}$ and $G$ be an input to \Hom{H}. Further, let $H'$ be a copy of $H$ and $(u_1,\dots,u_q)$ be the vertices of $H'$ ordered in such a way that they induce a copy of $H$. Then $\hom{G}{H} = \ret{(G\oplus H'; u_1,\ldots,u_q)}{H}$.
\end{proof}

 \bibliography{\jobname}
 
 \clearpage

\section*{Appendix: Decomposition of $\comp{G}{K_{2,3}}$} 
  
In this  appendix, we work through a long example to illustrate
some of
the definitions and ideas from Section~\ref{sec:hardComp}.
We do this by 
verifying the statement of Theorem~\ref{thm:CompDecomp}
for the special case where $H=K_{2,3}$.

Of course, the theorem is already proved in the earlier sections of this paper, but
we work through this example in order to help the reader gain familiarity with the definitions.
For $H=K_{2,3}$ and a non-empty, irreflexive and connected graph $G$ we want to prove
\begin{equation}\label{equ:A1}
\comp{G}{H} = \sum_{J\in \calS_H} \lambda_H(J)\hom{G}{J}.
\end{equation}
First, we set $\calS_H=\{H_1,\dots,H_{10}\}$, cf. Figure~\ref{fig:calSK23}, as defined in Definition~\ref{defn:calS_H}. 

\begin{figure}[h]
\centering
\def\scaleFactor{.6}
\def\arraystretch{1.25}
\smallskip
\begin{tabularx}{\textwidth}{@{}c*5{>{\centering\arraybackslash}X}@{}}
& $H\cong H_{1}$ & $H_{2}$ & $H_{3}$ &$H_{4}$& $H_{5}$\\
\hline  
& 
\begin{tikzpicture}[scale=\scaleFactor, baseline=\scaleFactor*(2cm-.2\baselineskip)]

			\coordinate (L1)  at (0, 3);
			\coordinate (L2)  at (0, 1);
			\coordinate (R1)  at (3, 4);
			\coordinate (R2)  at (3, 2);
			\coordinate (R3)  at (3, 0);
			
			\fill (L1) circle[radius=3pt];
			\fill (L2) circle[radius=3pt];
			\fill (R1) circle[radius=3pt];
			\fill (R2) circle[radius=3pt];
			\fill (R3) circle[radius=3pt];
	
			\draw (R1) -- (L1) -- (R2) -- (L2) -- (R3);
			\draw (L1) -- (R3);
			\draw (L2) -- (R1);
			\addvmargin{3mm}
\end{tikzpicture}
&
\begin{tikzpicture}[scale=\scaleFactor, baseline=\scaleFactor*(2cm-.2\baselineskip)]

			\coordinate (L1)  at (0, 3);
			\coordinate (L2)  at (0, 1);
			\coordinate (R1)  at (3, 4);
			\coordinate (R2)  at (3, 2);
			\coordinate (R3)  at (3, 0);
			
			\fill (L1) circle[radius=3pt];
			\fill (L2) circle[radius=3pt];
			\fill (R1) circle[radius=3pt];
			\fill (R2) circle[radius=3pt];
			\fill (R3) circle[radius=3pt];
	
			\draw (R1) -- (L1) -- (R2) -- (L2) -- (R3);
			\draw (L1) -- (R3);
			\addvmargin{3mm}
\end{tikzpicture}
&
\begin{tikzpicture}[scale=\scaleFactor, baseline=\scaleFactor*(2cm-.2\baselineskip)]

			\coordinate (L1)  at (0, 3);
			\coordinate (L2)  at (0, 1);
			\coordinate (R1)  at (3, 4);
			\coordinate (R2)  at (3, 2);
			\coordinate (R3)  at (3, 0);
			
			\fill (L1) circle[radius=3pt];
			\fill (L2) circle[radius=3pt];
			\fill (R1) circle[radius=3pt];
			\fill (R2) circle[radius=3pt];
			\fill (R3) circle[radius=3pt];
	
			\draw (R1) -- (L1) -- (R2) -- (L2) -- (R3);
			\addvmargin{3mm}
\end{tikzpicture}
&
\begin{tikzpicture}[scale=\scaleFactor, baseline=\scaleFactor*(2cm-.2\baselineskip)]

			\coordinate (L1)  at (0, 3);
			\coordinate (L2)  at (0, 1);
			\coordinate (R1)  at (3, 4);
			\coordinate (R2)  at (3, 2);
			\coordinate (R3)  at (3, 0);
			
			\fill (L1) circle[radius=3pt];
			\fill (L2) circle[radius=3pt];
			\fill (R2) circle[radius=3pt];
			\fill (R3) circle[radius=3pt];
	
			\draw (L1) -- (R2) -- (L2) -- (R3);
			\draw (L1) -- (R3);
			\addvmargin{3mm}
\end{tikzpicture}
&
\begin{tikzpicture}[scale=\scaleFactor, baseline=\scaleFactor*(2cm-.2\baselineskip)]

			\coordinate (L1)  at (0, 3);
			\coordinate (L2)  at (0, 1);
			\coordinate (R1)  at (3, 4);
			\coordinate (R2)  at (3, 2);
			\coordinate (R3)  at (3, 0);
			
			\fill (L1) circle[radius=3pt];
			\fill (L2) circle[radius=3pt];
			\fill (R1) circle[radius=3pt];
			\fill (R2) circle[radius=3pt];
			\fill (R3) circle[radius=3pt];
	
			\draw (R1) -- (L1) -- (R2);
			\draw (L2) -- (R3);
			\draw (L1) -- (R3);
			\addvmargin{3mm}
\end{tikzpicture}
\\
\multicolumn{6}{c}{}\\
& $H_{6}$ & $H_{7}$ & $H_{8}$ &$H_{9}$& $H_{10}$\\
\hline  
&
\begin{tikzpicture}[scale=\scaleFactor, baseline=\scaleFactor*(2cm-.2\baselineskip)]

			\coordinate (L1)  at (0, 3);
			\coordinate (L2)  at (0, 1);
			\coordinate (R1)  at (3, 4);
			\coordinate (R2)  at (3, 2);
			\coordinate (R3)  at (3, 0);
			
			\fill (L1) circle[radius=3pt];
			\fill (R1) circle[radius=3pt];
			\fill (R2) circle[radius=3pt];
			\fill (R3) circle[radius=3pt];
	
			\draw (R1) -- (L1) -- (R2);
			\draw (L1) -- (R3);
			\addvmargin{3mm}
\end{tikzpicture}
&
\begin{tikzpicture}[scale=\scaleFactor, baseline=\scaleFactor*(2cm-.2\baselineskip)]

			\coordinate (L1)  at (0, 3);
			\coordinate (L2)  at (0, 1);
			\coordinate (R1)  at (3, 4);
			\coordinate (R2)  at (3, 2);
			\coordinate (R3)  at (3, 0);
			
			\fill (L1) circle[radius=3pt];
			\fill (L2) circle[radius=3pt];
			\fill (R2) circle[radius=3pt];
			\fill (R3) circle[radius=3pt];
	
			\draw (R2) -- (L1) -- (R3) -- (L2);
			\addvmargin{3mm}
\end{tikzpicture}
&
\begin{tikzpicture}[scale=\scaleFactor, baseline=\scaleFactor*(2cm-.2\baselineskip)]

			\coordinate (L1)  at (0, 3);
			\coordinate (L2)  at (0, 1);
			\coordinate (R1)  at (3, 4);
			\coordinate (R2)  at (3, 2);
			\coordinate (R3)  at (3, 0);
			
			\fill (L1) circle[radius=3pt];
			\fill (R1) circle[radius=3pt];
			\fill (R2) circle[radius=3pt];
	
			\draw (R1) -- (L1) -- (R2);
			\addvmargin{3mm}
\end{tikzpicture}
&
\begin{tikzpicture}[scale=\scaleFactor, baseline=\scaleFactor*(2cm-.2\baselineskip)]

			\coordinate (L1)  at (0, 3);
			\coordinate (L2)  at (0, 1);
			\coordinate (R1)  at (3, 4);
			\coordinate (R2)  at (3, 2);
			\coordinate (R3)  at (3, 0);
			
			\fill (L1) circle[radius=3pt];
			\fill (R1) circle[radius=3pt];
	
			\draw (R1) -- (L1);
			\addvmargin{3mm}
\end{tikzpicture}
&
\begin{tikzpicture}[scale=\scaleFactor, baseline=\scaleFactor*(2cm-.2\baselineskip)]

			\coordinate (L1)  at (0, 3);
			\coordinate (L2)  at (0, 1);
			\coordinate (R1)  at (3, 4);
			\coordinate (R2)  at (3, 2);
			\coordinate (R3)  at (3, 0);
			
			\fill (L1) circle[radius=3pt];
			
			\addvmargin{3mm}
\end{tikzpicture}
\end{tabularx}
\caption{$\calS_H=\{H_1,\dots, H_{10}\}$}
\label{fig:calSK23}
\end{figure}

Next, we recall the definitions of $\mu_H$ and $\lambda_H$ from Definitions~\ref{defn:calS_H} and~\ref{defn:lambda_H}. For $J\in \calS_H$, $\mu_H(J)$ is the number of non-empty connected subgraphs of $H$ that are isomorphic to $J$. Also, $\lambda_H(J)=1$ if $J\cong H$. If otherwise $J$ is isomorphic to some graph in $\calS_H$ but $J\ncong H$, we have
\begin{equation}\label{equ:A2}
\lambda_H(J)=-\sum_{\substack{H'\in \calS_H \\ \text{s.t. }H'\ncong H}} \mu_H(H')\lambda_{H'}(J).
\end{equation}

In order to verify~\eqref{equ:A1}, we have to determine $\lambda_H(J)$ for all $J\in S_H$. As $\lambda_H(J)$ is defined inductively by~\eqref{equ:A2}, we first determine $\lambda_{H'}(J)$ for all $H'\in \calS_H$ with $H'\ncong H$.

We start with the graph $H_{10}$ and determine $\lambda_{H_{10}}$. Clearly, $H_{10}$ has only one connected subgraph and we can choose $\calS_{H_{10}}=\{H_{10}\}$. Recall that $\lambda_{H_{10}}(J)=0$ for all graphs $J$ that are not isomorphic to any graph in $\calS_{H_{10}}$, i.e.\ not isomorphic to $H_{10}$ in this case. By definition we have 
\[\mu_{H_{10}}(H_{10})=1 \quad \text{as well as}\quad \lambda_{H_{10}}(H_{10})=1, \text{ see Table~\ref{tab:H10decomp}}.\]
This conforms with our intuition as for the single vertex graph $H_{10}$, it clearly holds that 
\begin{equation}\label{equ:A3}
\comp{G}{H_{10}} =\hom{G}{H_{10}}.
\end{equation}
Thus, we have now verified~\eqref{equ:A1} for $H=H_{10}$.

Using this information, we consider the graph $H_9$ next and determine $\mu_{H_9}$ and $\lambda_{H_9}$ for $\calS_{H_9}=\{H_9,H_{10}\}$, see Table~\ref{tab:H9decomp}. $H_9$ contains two connected subgraphs that are isomorphic to $H_{10}$, therefore $\mu_{H_9}(H_{10}) = 2$. Then, from~\eqref{equ:A2} we obtain \[\lambda_{H_9}(H_{10}) = -\sum_{H'\in \{H_{10}\}} \mu_{H_9}(H')\lambda_{H'}(H_{10})=-2.\] Plugging this into~\eqref{equ:A1} for $H=H_9$, we get
\begin{align}
\comp{G}{H_9} &= \sum_{J\in \calS_{H_9}} \lambda_{H_9}(J)\hom{G}{J}\nonumber\\
&=\hom{G}{H_9} - 2 \hom{G}{H_{10}}. \label{equ:A4}
\end{align}

Now let us verify this expression. Recall that $G$ is connected. The central idea behind our approach is that every homomorphism from $G$ to $H_9$ is a compaction onto some connected subgraph $H'$ of $H_9$. Furthermore, $\mu_{H_9}(H')$ tells us how many such subgraphs there are that are isomorphic to $H'$. Thus,
\begin{align*}
\hom{G}{H_9} &=\mu_{H_9}(H_{9})\cdot\comp{G}{H_9} + \mu_{H_9}(H_{10})\cdot\comp{G}{H_{10}}\\
&= \comp{G}{H_9} + 2 \comp{G}{H_{10}}.
\end{align*}
Rearranging and using the fact that we already know $\comp{G}{H_{10}} =\hom{G}{H_{10}}$ from~\eqref{equ:A3}:
\begin{align*}
\comp{G}{H_9} &= \hom{G}{H_9} - 2 \comp{G}{H_{10}}\\
&= \hom{G}{H_9} - 2 \hom{G}{H_{10}}.
\end{align*}
Thus, we have now proved~\eqref{equ:A4} which in turn proves~\eqref{equ:A1} for $H=H_9$.

Using~\eqref{equ:A3} and~\eqref{equ:A4} we can now go on to find (see Table~\ref{tab:H8decomp}) that 
\[\comp{G}{H_8}= \hom{G}{H_8} - 2 \hom{G}{H_{9}}+ \hom{G}{H_{10}}\]
and so on.

This gives the intuition behind the formal definitions of $\mu_H$ and $\lambda_H$. For completeness, we give the values for all graphs $H_1$ through $H_{10}$ in Tables~\ref{tab:H10decomp} through~\ref{tab:K23decomp}. From Table~\ref{tab:K23decomp} we can conclude that for $H=K_{2,3}$ the statement of Theorem~\ref{thm:CompDecomp} gives
\begin{align*}
\comp{G}{K_{2,3}} &= \hom{G}{K_{2,3}} - 6 \hom{G}{H_2} + 6 \hom{G}{H_3}\\
&\hphantom{=} + 3 \hom{G}{H_4} + 6 \hom{G}{H_5} - 2 \hom{G}{H_6}\\
&\hphantom{=} - 12\hom{G}{H_7} + 3 \hom{G}{H_8}.
\end{align*}

\begin{table}[h]
\centering
\def\scaleFactor{.6}
\def\arraystretch{1.25}
\begin{tabularx}{\textwidth}{@{}c|*5{>{\centering\arraybackslash}X}@{}}
$H'$& $H_{10}$ & & & &\\
\hline

&
\begin{tikzpicture}[scale=\scaleFactor, baseline=\scaleFactor*(2cm-.2\baselineskip)]

			\coordinate (L1)  at (0, 3);
			\coordinate (L2)  at (0, 1);
			\coordinate (R1)  at (3, 4);
			\coordinate (R2)  at (3, 2);
			\coordinate (R3)  at (3, 0);
			
			\fill (L1) circle[radius=3pt];
			
			\addvmargin{3mm}
\end{tikzpicture}
&
&
&
&
\\
\hline
$\mu_{H_{10}}(H')$& $1$ & & & &\\
\hline
$\lambda_{H_{10}}(H')$& $1$ & & & &
\end{tabularx}
\caption{Decomposition of $H_{10}$}
\label{tab:H10decomp}
\end{table}

\begin{table}
\centering
\def\scaleFactor{.6}
\def\arraystretch{1.25}
\begin{tabularx}{\textwidth}{@{}c|*5{>{\centering\arraybackslash}X}@{}}
$H'$& $H_{9}$ & $H_{10}$ & & &\\
\hline
 
&
\begin{tikzpicture}[scale=\scaleFactor, baseline=\scaleFactor*(2cm-.2\baselineskip)]

			\coordinate (L1)  at (0, 3);
			\coordinate (L2)  at (0, 1);
			\coordinate (R1)  at (3, 4);
			\coordinate (R2)  at (3, 2);
			\coordinate (R3)  at (3, 0);
			
			\fill (L1) circle[radius=3pt];
			\fill (R1) circle[radius=3pt];
	
			\draw (R1) -- (L1);
			\addvmargin{3mm}
\end{tikzpicture}
&
\begin{tikzpicture}[scale=\scaleFactor, baseline=\scaleFactor*(2cm-.2\baselineskip)]

			\coordinate (L1)  at (0, 3);
			\coordinate (L2)  at (0, 1);
			\coordinate (R1)  at (3, 4);
			\coordinate (R2)  at (3, 2);
			\coordinate (R3)  at (3, 0);
			
			\fill (L1) circle[radius=3pt];
			
			\addvmargin{3mm}
\end{tikzpicture}
&
&
&
\\
\hline
$\mu_{H_{9}}(H')$& $1$ & $2$ & & &\\
\hline
$\lambda_{H_{9}}(H')$& $1$ & $-2$ & & &
\end{tabularx}
\caption{Decomposition of $H_9$}
\label{tab:H9decomp}
\end{table}

\begin{table}
\centering
\def\scaleFactor{.6}
\def\arraystretch{1.25}
\begin{tabularx}{\textwidth}{@{}c|*5{>{\centering\arraybackslash}X}@{}}
$H'$& $H_{8}$ & $H_{9}$ & $H_{10}$ & &\\
\hline
 
&
\begin{tikzpicture}[scale=\scaleFactor, baseline=\scaleFactor*(2cm-.2\baselineskip)]

			\coordinate (L1)  at (0, 3);
			\coordinate (L2)  at (0, 1);
			\coordinate (R1)  at (3, 4);
			\coordinate (R2)  at (3, 2);
			\coordinate (R3)  at (3, 0);
			
			\fill (L1) circle[radius=3pt];
			\fill (R1) circle[radius=3pt];
			\fill (R2) circle[radius=3pt];
	
			\draw (R1) -- (L1) -- (R2);
			\addvmargin{3mm}
\end{tikzpicture}
&
\begin{tikzpicture}[scale=\scaleFactor, baseline=\scaleFactor*(2cm-.2\baselineskip)]

			\coordinate (L1)  at (0, 3);
			\coordinate (L2)  at (0, 1);
			\coordinate (R1)  at (3, 4);
			\coordinate (R2)  at (3, 2);
			\coordinate (R3)  at (3, 0);
			
			\fill (L1) circle[radius=3pt];
			\fill (R1) circle[radius=3pt];
	
			\draw (R1) -- (L1);
			\addvmargin{3mm}
\end{tikzpicture}
&
\begin{tikzpicture}[scale=\scaleFactor, baseline=\scaleFactor*(2cm-.2\baselineskip)]

			\coordinate (L1)  at (0, 3);
			\coordinate (L2)  at (0, 1);
			\coordinate (R1)  at (3, 4);
			\coordinate (R2)  at (3, 2);
			\coordinate (R3)  at (3, 0);
			
			\fill (L1) circle[radius=3pt];
			
			\addvmargin{3mm}
\end{tikzpicture}
&
&
\\
\hline
$\mu_{H_{8}}(H')$& $1$ & $2$ & $3$ & &\\
\hline
$\lambda_{H_{8}}(H')$& $1$ & $-2$ & $1$ & &
\end{tabularx}
\caption{Decomposition of $H_8$}
\label{tab:H8decomp}
\end{table}

\begin{table}
\centering
\def\scaleFactor{.6}
\def\arraystretch{1.25}
\begin{tabularx}{\textwidth}{@{}c|*5{>{\centering\arraybackslash}X}@{}}
$H'$& $H_{7}$ & $H_{8}$ & $H_{9}$ &$H_{10}$&\\
\hline
&
\begin{tikzpicture}[scale=\scaleFactor, baseline=\scaleFactor*(2cm-.2\baselineskip)]

			\coordinate (L1)  at (0, 3);
			\coordinate (L2)  at (0, 1);
			\coordinate (R1)  at (3, 4);
			\coordinate (R2)  at (3, 2);
			\coordinate (R3)  at (3, 0);
			
			\fill (L1) circle[radius=3pt];
			\fill (L2) circle[radius=3pt];
			\fill (R2) circle[radius=3pt];
			\fill (R3) circle[radius=3pt];
	
			\draw (R2) -- (L1) -- (R3) -- (L2);
			\addvmargin{3mm}
\end{tikzpicture}
&
\begin{tikzpicture}[scale=\scaleFactor, baseline=\scaleFactor*(2cm-.2\baselineskip)]

			\coordinate (L1)  at (0, 3);
			\coordinate (L2)  at (0, 1);
			\coordinate (R1)  at (3, 4);
			\coordinate (R2)  at (3, 2);
			\coordinate (R3)  at (3, 0);
			
			\fill (L1) circle[radius=3pt];
			\fill (R1) circle[radius=3pt];
			\fill (R2) circle[radius=3pt];
	
			\draw (R1) -- (L1) -- (R2);
			\addvmargin{3mm}
\end{tikzpicture}
&
\begin{tikzpicture}[scale=\scaleFactor, baseline=\scaleFactor*(2cm-.2\baselineskip)]

			\coordinate (L1)  at (0, 3);
			\coordinate (L2)  at (0, 1);
			\coordinate (R1)  at (3, 4);
			\coordinate (R2)  at (3, 2);
			\coordinate (R3)  at (3, 0);
			
			\fill (L1) circle[radius=3pt];
			\fill (R1) circle[radius=3pt];
	
			\draw (R1) -- (L1);
			\addvmargin{3mm}
\end{tikzpicture}
&
\begin{tikzpicture}[scale=\scaleFactor, baseline=\scaleFactor*(2cm-.2\baselineskip)]

			\coordinate (L1)  at (0, 3);
			\coordinate (L2)  at (0, 1);
			\coordinate (R1)  at (3, 4);
			\coordinate (R2)  at (3, 2);
			\coordinate (R3)  at (3, 0);
			
			\fill (L1) circle[radius=3pt];
			
			\addvmargin{3mm}
\end{tikzpicture}
&
\\
\hline
$\mu_{H_{7}}(H')$& $1$ & $2$ & $3$ & $4$ &\\
\hline
$\lambda_{H_{7}}(H')$& $1$ & $-2$ & $1$ &$0$&
\end{tabularx}
\caption{Decomposition of $H_7$}
\label{tab:H7decomp}
\end{table}

\begin{table}
\centering
\def\scaleFactor{.6}
\def\arraystretch{1.25}
\begin{tabularx}{\textwidth}{@{}c|*5{>{\centering\arraybackslash}X}@{}}
$H'$& $H_{6}$ & $H_{8}$ & $H_{9}$ &$H_{10}$&\\
\hline 
&
\begin{tikzpicture}[scale=\scaleFactor, baseline=\scaleFactor*(2cm-.2\baselineskip)]

			\coordinate (L1)  at (0, 3);
			\coordinate (L2)  at (0, 1);
			\coordinate (R1)  at (3, 4);
			\coordinate (R2)  at (3, 2);
			\coordinate (R3)  at (3, 0);
			
			\fill (L1) circle[radius=3pt];
			\fill (R1) circle[radius=3pt];
			\fill (R2) circle[radius=3pt];
			\fill (R3) circle[radius=3pt];
	
			\draw (R1) -- (L1) -- (R2);
			\draw (L1) -- (R3);
			\addvmargin{3mm}
\end{tikzpicture}
&
\begin{tikzpicture}[scale=\scaleFactor, baseline=\scaleFactor*(2cm-.2\baselineskip)]

			\coordinate (L1)  at (0, 3);
			\coordinate (L2)  at (0, 1);
			\coordinate (R1)  at (3, 4);
			\coordinate (R2)  at (3, 2);
			\coordinate (R3)  at (3, 0);
			
			\fill (L1) circle[radius=3pt];
			\fill (R1) circle[radius=3pt];
			\fill (R2) circle[radius=3pt];
	
			\draw (R1) -- (L1) -- (R2);
			\addvmargin{3mm}
\end{tikzpicture}
&
\begin{tikzpicture}[scale=\scaleFactor, baseline=\scaleFactor*(2cm-.2\baselineskip)]

			\coordinate (L1)  at (0, 3);
			\coordinate (L2)  at (0, 1);
			\coordinate (R1)  at (3, 4);
			\coordinate (R2)  at (3, 2);
			\coordinate (R3)  at (3, 0);
			
			\fill (L1) circle[radius=3pt];
			\fill (R1) circle[radius=3pt];
	
			\draw (R1) -- (L1);
			\addvmargin{3mm}
\end{tikzpicture}
&
\begin{tikzpicture}[scale=\scaleFactor, baseline=\scaleFactor*(2cm-.2\baselineskip)]

			\coordinate (L1)  at (0, 3);
			\coordinate (L2)  at (0, 1);
			\coordinate (R1)  at (3, 4);
			\coordinate (R2)  at (3, 2);
			\coordinate (R3)  at (3, 0);
			
			\fill (L1) circle[radius=3pt];
			
			\addvmargin{3mm}
\end{tikzpicture}
&
\\
\hline
$\mu_{H_{6}}(H')$& $1$ & $3$ & $3$ & $4$ &\\
\hline
$\lambda_{H_{6}}(H')$& $1$ & $-3$ & $3$ &$-1$&
\end{tabularx}
\caption{Decomposition of $H_6$}
\label{tab:H6decomp}
\end{table}

\begin{table}
\centering
\def\scaleFactor{.6}
\def\arraystretch{1.25}
\begin{tabularx}{\textwidth}{@{}c|*5{>{\centering\arraybackslash}X}@{}}
$H'$& $H_{5}$ & $H_{6}$ & $H_{7}$ &$H_{8}$& $H_{9}$\\
\hline
&
\begin{tikzpicture}[scale=\scaleFactor, baseline=\scaleFactor*(2cm-.2\baselineskip)]

			\coordinate (L1)  at (0, 3);
			\coordinate (L2)  at (0, 1);
			\coordinate (R1)  at (3, 4);
			\coordinate (R2)  at (3, 2);
			\coordinate (R3)  at (3, 0);
			
			\fill (L1) circle[radius=3pt];
			\fill (L2) circle[radius=3pt];
			\fill (R1) circle[radius=3pt];
			\fill (R2) circle[radius=3pt];
			\fill (R3) circle[radius=3pt];
	
			\draw (R1) -- (L1) -- (R2);
			\draw (L2) -- (R3);
			\draw (L1) -- (R3);
			\addvmargin{3mm}
\end{tikzpicture}
&
\begin{tikzpicture}[scale=\scaleFactor, baseline=\scaleFactor*(2cm-.2\baselineskip)]

			\coordinate (L1)  at (0, 3);
			\coordinate (L2)  at (0, 1);
			\coordinate (R1)  at (3, 4);
			\coordinate (R2)  at (3, 2);
			\coordinate (R3)  at (3, 0);
			
			\fill (L1) circle[radius=3pt];
			\fill (R1) circle[radius=3pt];
			\fill (R2) circle[radius=3pt];
			\fill (R3) circle[radius=3pt];
	
			\draw (R1) -- (L1) -- (R2);
			\draw (L1) -- (R3);
			\addvmargin{3mm}
\end{tikzpicture}
&
\begin{tikzpicture}[scale=\scaleFactor, baseline=\scaleFactor*(2cm-.2\baselineskip)]

			\coordinate (L1)  at (0, 3);
			\coordinate (L2)  at (0, 1);
			\coordinate (R1)  at (3, 4);
			\coordinate (R2)  at (3, 2);
			\coordinate (R3)  at (3, 0);
			
			\fill (L1) circle[radius=3pt];
			\fill (L2) circle[radius=3pt];
			\fill (R2) circle[radius=3pt];
			\fill (R3) circle[radius=3pt];
	
			\draw (R2) -- (L1) -- (R3) -- (L2);
			\addvmargin{3mm}
\end{tikzpicture}
&
\begin{tikzpicture}[scale=\scaleFactor, baseline=\scaleFactor*(2cm-.2\baselineskip)]

			\coordinate (L1)  at (0, 3);
			\coordinate (L2)  at (0, 1);
			\coordinate (R1)  at (3, 4);
			\coordinate (R2)  at (3, 2);
			\coordinate (R3)  at (3, 0);
			
			\fill (L1) circle[radius=3pt];
			\fill (R1) circle[radius=3pt];
			\fill (R2) circle[radius=3pt];
	
			\draw (R1) -- (L1) -- (R2);
			\addvmargin{3mm}
\end{tikzpicture}
&
\begin{tikzpicture}[scale=\scaleFactor, baseline=\scaleFactor*(2cm-.2\baselineskip)]

			\coordinate (L1)  at (0, 3);
			\coordinate (L2)  at (0, 1);
			\coordinate (R1)  at (3, 4);
			\coordinate (R2)  at (3, 2);
			\coordinate (R3)  at (3, 0);
			
			\fill (L1) circle[radius=3pt];
			\fill (R1) circle[radius=3pt];
	
			\draw (R1) -- (L1);
			\addvmargin{3mm}
\end{tikzpicture}
\\
\hline
$\mu_{H_{5}}(H')$& $1$ & $1$ & $2$ & $4$ & $4$\\
\hline
$\lambda_{H_{5}}(H')$& $1$ & $-1$ & $-2$ &$3$& $-1$\\
\multicolumn{6}{c}{}\\
$H'$& $H_{10}$ & & & &\\
\hline
&
\begin{tikzpicture}[scale=\scaleFactor, baseline=\scaleFactor*(2cm-.2\baselineskip)]

			\coordinate (L1)  at (0, 3);
			\coordinate (L2)  at (0, 1);
			\coordinate (R1)  at (3, 4);
			\coordinate (R2)  at (3, 2);
			\coordinate (R3)  at (3, 0);
			
			\fill (L1) circle[radius=3pt];
			
			\addvmargin{3mm}
\end{tikzpicture}
&

&
&
&
\\
\hline
$\mu_{H_{5}}(H')$& $5$ & & & &\\
\hline
$\lambda_{H_{5}}(H')$& $0$ &  & & &
\end{tabularx}
\caption{Decomposition of $H_5$}
\label{tab:H5decomp}
\end{table}

\begin{table}
\centering
\def\scaleFactor{.6}
\def\arraystretch{1.25}
\begin{tabularx}{\textwidth}{@{}c|*5{>{\centering\arraybackslash}X}@{}}
$H'$& $H_{4}$ & $H_{7}$ & $H_{8}$ &$H_{9}$& $H_{10}$\\
\hline 
&
\begin{tikzpicture}[scale=\scaleFactor, baseline=\scaleFactor*(2cm-.2\baselineskip)]

			\coordinate (L1)  at (0, 3);
			\coordinate (L2)  at (0, 1);
			\coordinate (R1)  at (3, 4);
			\coordinate (R2)  at (3, 2);
			\coordinate (R3)  at (3, 0);
			
			\fill (L1) circle[radius=3pt];
			\fill (L2) circle[radius=3pt];
			\fill (R2) circle[radius=3pt];
			\fill (R3) circle[radius=3pt];
	
			\draw (L1) -- (R2) -- (L2) -- (R3);
			\draw (L1) -- (R3);
			\addvmargin{3mm}
\end{tikzpicture}
&
\begin{tikzpicture}[scale=\scaleFactor, baseline=\scaleFactor*(2cm-.2\baselineskip)]

			\coordinate (L1)  at (0, 3);
			\coordinate (L2)  at (0, 1);
			\coordinate (R1)  at (3, 4);
			\coordinate (R2)  at (3, 2);
			\coordinate (R3)  at (3, 0);
			
			\fill (L1) circle[radius=3pt];
			\fill (L2) circle[radius=3pt];
			\fill (R2) circle[radius=3pt];
			\fill (R3) circle[radius=3pt];
	
			\draw (R2) -- (L1) -- (R3) -- (L2);
			\addvmargin{3mm}
\end{tikzpicture}
&
\begin{tikzpicture}[scale=\scaleFactor, baseline=\scaleFactor*(2cm-.2\baselineskip)]

			\coordinate (L1)  at (0, 3);
			\coordinate (L2)  at (0, 1);
			\coordinate (R1)  at (3, 4);
			\coordinate (R2)  at (3, 2);
			\coordinate (R3)  at (3, 0);
			
			\fill (L1) circle[radius=3pt];
			\fill (R1) circle[radius=3pt];
			\fill (R2) circle[radius=3pt];
	
			\draw (R1) -- (L1) -- (R2);
			\addvmargin{3mm}
\end{tikzpicture}
&
\begin{tikzpicture}[scale=\scaleFactor, baseline=\scaleFactor*(2cm-.2\baselineskip)]

			\coordinate (L1)  at (0, 3);
			\coordinate (L2)  at (0, 1);
			\coordinate (R1)  at (3, 4);
			\coordinate (R2)  at (3, 2);
			\coordinate (R3)  at (3, 0);
			
			\fill (L1) circle[radius=3pt];
			\fill (R1) circle[radius=3pt];
	
			\draw (R1) -- (L1);
			\addvmargin{3mm}
\end{tikzpicture}
&
\begin{tikzpicture}[scale=\scaleFactor, baseline=\scaleFactor*(2cm-.2\baselineskip)]

			\coordinate (L1)  at (0, 3);
			\coordinate (L2)  at (0, 1);
			\coordinate (R1)  at (3, 4);
			\coordinate (R2)  at (3, 2);
			\coordinate (R3)  at (3, 0);
			
			\fill (L1) circle[radius=3pt];
			
			\addvmargin{3mm}
\end{tikzpicture}
\\
\hline
$\mu_{H_{4}}(H')$& $1$ & $4$ & $4$ & $4$ & $4$\\
\hline
$\lambda_{H_{4}}(H')$& $1$ & $-4$ & $4$ & $0$ & $0$
\end{tabularx}
\caption{Decomposition of $H_4$}
\label{tab:H4decomp}
\end{table}

\begin{table}
\centering
\def\scaleFactor{.6}
\def\arraystretch{1.25}
\begin{tabularx}{\textwidth}{@{}c|*5{>{\centering\arraybackslash}X}@{}}
$H'$& $H_{3}$ & $H_{7}$ & $H_{8}$ &$H_{9}$& $H_{10}$\\
\hline 
&
\begin{tikzpicture}[scale=\scaleFactor, baseline=\scaleFactor*(2cm-.2\baselineskip)]

			\coordinate (L1)  at (0, 3);
			\coordinate (L2)  at (0, 1);
			\coordinate (R1)  at (3, 4);
			\coordinate (R2)  at (3, 2);
			\coordinate (R3)  at (3, 0);
			
			\fill (L1) circle[radius=3pt];
			\fill (L2) circle[radius=3pt];
			\fill (R1) circle[radius=3pt];
			\fill (R2) circle[radius=3pt];
			\fill (R3) circle[radius=3pt];
	
			\draw (R1) -- (L1) -- (R2) -- (L2) -- (R3);
			\addvmargin{3mm}
\end{tikzpicture}
&
\begin{tikzpicture}[scale=\scaleFactor, baseline=\scaleFactor*(2cm-.2\baselineskip)]

			\coordinate (L1)  at (0, 3);
			\coordinate (L2)  at (0, 1);
			\coordinate (R1)  at (3, 4);
			\coordinate (R2)  at (3, 2);
			\coordinate (R3)  at (3, 0);
			
			\fill (L1) circle[radius=3pt];
			\fill (L2) circle[radius=3pt];
			\fill (R2) circle[radius=3pt];
			\fill (R3) circle[radius=3pt];
	
			\draw (R2) -- (L1) -- (R3) -- (L2);
			\addvmargin{3mm}
\end{tikzpicture}
&
\begin{tikzpicture}[scale=\scaleFactor, baseline=\scaleFactor*(2cm-.2\baselineskip)]

			\coordinate (L1)  at (0, 3);
			\coordinate (L2)  at (0, 1);
			\coordinate (R1)  at (3, 4);
			\coordinate (R2)  at (3, 2);
			\coordinate (R3)  at (3, 0);
			
			\fill (L1) circle[radius=3pt];
			\fill (R1) circle[radius=3pt];
			\fill (R2) circle[radius=3pt];
	
			\draw (R1) -- (L1) -- (R2);
			\addvmargin{3mm}
\end{tikzpicture}
&
\begin{tikzpicture}[scale=\scaleFactor, baseline=\scaleFactor*(2cm-.2\baselineskip)]

			\coordinate (L1)  at (0, 3);
			\coordinate (L2)  at (0, 1);
			\coordinate (R1)  at (3, 4);
			\coordinate (R2)  at (3, 2);
			\coordinate (R3)  at (3, 0);
			
			\fill (L1) circle[radius=3pt];
			\fill (R1) circle[radius=3pt];
	
			\draw (R1) -- (L1);
			\addvmargin{3mm}
\end{tikzpicture}
&
\begin{tikzpicture}[scale=\scaleFactor, baseline=\scaleFactor*(2cm-.2\baselineskip)]

			\coordinate (L1)  at (0, 3);
			\coordinate (L2)  at (0, 1);
			\coordinate (R1)  at (3, 4);
			\coordinate (R2)  at (3, 2);
			\coordinate (R3)  at (3, 0);
			
			\fill (L1) circle[radius=3pt];
			
			\addvmargin{3mm}
\end{tikzpicture}
\\
\hline
$\mu_{H_{3}}(H')$& $1$ & $2$ & $3$ & $4$ & $5$\\
\hline
$\lambda_{H_{3}}(H')$& $1$ & $-2$ & $1$ & $0$ & $0$
\end{tabularx}
\caption{Decomposition of $H_3$}
\label{tab:H3decomp}
\end{table}

\begin{table}
\centering
\def\scaleFactor{.6}
\def\arraystretch{1.25}
\begin{tabularx}{\textwidth}{@{}c|*5{>{\centering\arraybackslash}X}@{}}
$H'$& $H_{2}$ & $H_{3}$ & $H_{4}$ &$H_{5}$& $H_{6}$\\
\hline  
&
\begin{tikzpicture}[scale=\scaleFactor, baseline=\scaleFactor*(2cm-.2\baselineskip)]

			\coordinate (L1)  at (0, 3);
			\coordinate (L2)  at (0, 1);
			\coordinate (R1)  at (3, 4);
			\coordinate (R2)  at (3, 2);
			\coordinate (R3)  at (3, 0);
			
			\fill (L1) circle[radius=3pt];
			\fill (L2) circle[radius=3pt];
			\fill (R1) circle[radius=3pt];
			\fill (R2) circle[radius=3pt];
			\fill (R3) circle[radius=3pt];
	
			\draw (R1) -- (L1) -- (R2) -- (L2) -- (R3);
			\draw (L1) -- (R3);
			\addvmargin{3mm}
\end{tikzpicture}
&
\begin{tikzpicture}[scale=\scaleFactor, baseline=\scaleFactor*(2cm-.2\baselineskip)]

			\coordinate (L1)  at (0, 3);
			\coordinate (L2)  at (0, 1);
			\coordinate (R1)  at (3, 4);
			\coordinate (R2)  at (3, 2);
			\coordinate (R3)  at (3, 0);
			
			\fill (L1) circle[radius=3pt];
			\fill (L2) circle[radius=3pt];
			\fill (R1) circle[radius=3pt];
			\fill (R2) circle[radius=3pt];
			\fill (R3) circle[radius=3pt];
	
			\draw (R1) -- (L1) -- (R2) -- (L2) -- (R3);
			\addvmargin{3mm}
\end{tikzpicture}
&
\begin{tikzpicture}[scale=\scaleFactor, baseline=\scaleFactor*(2cm-.2\baselineskip)]

			\coordinate (L1)  at (0, 3);
			\coordinate (L2)  at (0, 1);
			\coordinate (R1)  at (3, 4);
			\coordinate (R2)  at (3, 2);
			\coordinate (R3)  at (3, 0);
			
			\fill (L1) circle[radius=3pt];
			\fill (L2) circle[radius=3pt];
			\fill (R2) circle[radius=3pt];
			\fill (R3) circle[radius=3pt];
	
			\draw (L1) -- (R2) -- (L2) -- (R3);
			\draw (L1) -- (R3);
			\addvmargin{3mm}
\end{tikzpicture}
&
\begin{tikzpicture}[scale=\scaleFactor, baseline=\scaleFactor*(2cm-.2\baselineskip)]

			\coordinate (L1)  at (0, 3);
			\coordinate (L2)  at (0, 1);
			\coordinate (R1)  at (3, 4);
			\coordinate (R2)  at (3, 2);
			\coordinate (R3)  at (3, 0);
			
			\fill (L1) circle[radius=3pt];
			\fill (L2) circle[radius=3pt];
			\fill (R1) circle[radius=3pt];
			\fill (R2) circle[radius=3pt];
			\fill (R3) circle[radius=3pt];
	
			\draw (R1) -- (L1) -- (R2);
			\draw (L2) -- (R3);
			\draw (L1) -- (R3);
			\addvmargin{3mm}
\end{tikzpicture}
&
\begin{tikzpicture}[scale=\scaleFactor, baseline=\scaleFactor*(2cm-.2\baselineskip)]

			\coordinate (L1)  at (0, 3);
			\coordinate (L2)  at (0, 1);
			\coordinate (R1)  at (3, 4);
			\coordinate (R2)  at (3, 2);
			\coordinate (R3)  at (3, 0);
			
			\fill (L1) circle[radius=3pt];
			\fill (R1) circle[radius=3pt];
			\fill (R2) circle[radius=3pt];
			\fill (R3) circle[radius=3pt];
	
			\draw (R1) -- (L1) -- (R2);
			\draw (L1) -- (R3);
			\addvmargin{3mm}
\end{tikzpicture}
\\
\hline
$\mu_{H_{2}}(H')$& $1$ & $2$ & $1$ & $2$ & $1$\\
\hline
$\lambda_{H_{2}}(H')$& $1$ & $-2$ & $-1$ & $-2$ & $1$\\
\multicolumn{6}{c}{}\\
$H'$& $H_{7}$ & $H_{8}$ & $H_{9}$ &$H_{10}$& \\
\hline 
&
\begin{tikzpicture}[scale=\scaleFactor, baseline=\scaleFactor*(2cm-.2\baselineskip)]

			\coordinate (L1)  at (0, 3);
			\coordinate (L2)  at (0, 1);
			\coordinate (R1)  at (3, 4);
			\coordinate (R2)  at (3, 2);
			\coordinate (R3)  at (3, 0);
			
			\fill (L1) circle[radius=3pt];
			\fill (L2) circle[radius=3pt];
			\fill (R2) circle[radius=3pt];
			\fill (R3) circle[radius=3pt];
	
			\draw (R2) -- (L1) -- (R3) -- (L2);
			\addvmargin{3mm}
\end{tikzpicture}
&
\begin{tikzpicture}[scale=\scaleFactor, baseline=\scaleFactor*(2cm-.2\baselineskip)]

			\coordinate (L1)  at (0, 3);
			\coordinate (L2)  at (0, 1);
			\coordinate (R1)  at (3, 4);
			\coordinate (R2)  at (3, 2);
			\coordinate (R3)  at (3, 0);
			
			\fill (L1) circle[radius=3pt];
			\fill (R1) circle[radius=3pt];
			\fill (R2) circle[radius=3pt];
	
			\draw (R1) -- (L1) -- (R2);
			\addvmargin{3mm}
\end{tikzpicture}
&
\begin{tikzpicture}[scale=\scaleFactor, baseline=\scaleFactor*(2cm-.2\baselineskip)]

			\coordinate (L1)  at (0, 3);
			\coordinate (L2)  at (0, 1);
			\coordinate (R1)  at (3, 4);
			\coordinate (R2)  at (3, 2);
			\coordinate (R3)  at (3, 0);
			
			\fill (L1) circle[radius=3pt];
			\fill (R1) circle[radius=3pt];
	
			\draw (R1) -- (L1);
			\addvmargin{3mm}
\end{tikzpicture}
&
\begin{tikzpicture}[scale=\scaleFactor, baseline=\scaleFactor*(2cm-.2\baselineskip)]

			\coordinate (L1)  at (0, 3);
			\coordinate (L2)  at (0, 1);
			\coordinate (R1)  at (3, 4);
			\coordinate (R2)  at (3, 2);
			\coordinate (R3)  at (3, 0);
			
			\fill (L1) circle[radius=3pt];
			
			\addvmargin{3mm}
\end{tikzpicture}
&
\\
\hline
$\mu_{H_{2}}(H')$& $6$ & $6$ & $5$ & $5$ &\\
\hline
$\lambda_{H_{2}}(H')$& $6$ & $-3$ & $0$ & $0$ &
\end{tabularx}
\caption{Decomposition of $H_2$}
\label{tab:H2decomp}
\end{table}

\begin{table}[h]
\centering
\def\scaleFactor{.6}
\def\arraystretch{1.25}
\begin{tabularx}{\textwidth}{@{}c|*5{>{\centering\arraybackslash}X}@{}}
$H'$& $H_{1}$ & $H_{2}$ & $H_{3}$ &$H_{4}$& $H_{5}$\\
\hline  
& 
\begin{tikzpicture}[scale=\scaleFactor, baseline=\scaleFactor*(2cm-.2\baselineskip)]

			\coordinate (L1)  at (0, 3);
			\coordinate (L2)  at (0, 1);
			\coordinate (R1)  at (3, 4);
			\coordinate (R2)  at (3, 2);
			\coordinate (R3)  at (3, 0);
			
			\fill (L1) circle[radius=3pt];
			\fill (L2) circle[radius=3pt];
			\fill (R1) circle[radius=3pt];
			\fill (R2) circle[radius=3pt];
			\fill (R3) circle[radius=3pt];
	
			\draw (R1) -- (L1) -- (R2) -- (L2) -- (R3);
			\draw (L1) -- (R3);
			\draw (L2) -- (R1);
			\addvmargin{3mm}
\end{tikzpicture}
&
\begin{tikzpicture}[scale=\scaleFactor, baseline=\scaleFactor*(2cm-.2\baselineskip)]

			\coordinate (L1)  at (0, 3);
			\coordinate (L2)  at (0, 1);
			\coordinate (R1)  at (3, 4);
			\coordinate (R2)  at (3, 2);
			\coordinate (R3)  at (3, 0);
			
			\fill (L1) circle[radius=3pt];
			\fill (L2) circle[radius=3pt];
			\fill (R1) circle[radius=3pt];
			\fill (R2) circle[radius=3pt];
			\fill (R3) circle[radius=3pt];
	
			\draw (R1) -- (L1) -- (R2) -- (L2) -- (R3);
			\draw (L1) -- (R3);
			\addvmargin{3mm}
\end{tikzpicture}
&
\begin{tikzpicture}[scale=\scaleFactor, baseline=\scaleFactor*(2cm-.2\baselineskip)]

			\coordinate (L1)  at (0, 3);
			\coordinate (L2)  at (0, 1);
			\coordinate (R1)  at (3, 4);
			\coordinate (R2)  at (3, 2);
			\coordinate (R3)  at (3, 0);
			
			\fill (L1) circle[radius=3pt];
			\fill (L2) circle[radius=3pt];
			\fill (R1) circle[radius=3pt];
			\fill (R2) circle[radius=3pt];
			\fill (R3) circle[radius=3pt];
	
			\draw (R1) -- (L1) -- (R2) -- (L2) -- (R3);
			\addvmargin{3mm}
\end{tikzpicture}
&
\begin{tikzpicture}[scale=\scaleFactor, baseline=\scaleFactor*(2cm-.2\baselineskip)]

			\coordinate (L1)  at (0, 3);
			\coordinate (L2)  at (0, 1);
			\coordinate (R1)  at (3, 4);
			\coordinate (R2)  at (3, 2);
			\coordinate (R3)  at (3, 0);
			
			\fill (L1) circle[radius=3pt];
			\fill (L2) circle[radius=3pt];
			\fill (R2) circle[radius=3pt];
			\fill (R3) circle[radius=3pt];
	
			\draw (L1) -- (R2) -- (L2) -- (R3);
			\draw (L1) -- (R3);
			\addvmargin{3mm}
\end{tikzpicture}
&
\begin{tikzpicture}[scale=\scaleFactor, baseline=\scaleFactor*(2cm-.2\baselineskip)]

			\coordinate (L1)  at (0, 3);
			\coordinate (L2)  at (0, 1);
			\coordinate (R1)  at (3, 4);
			\coordinate (R2)  at (3, 2);
			\coordinate (R3)  at (3, 0);
			
			\fill (L1) circle[radius=3pt];
			\fill (L2) circle[radius=3pt];
			\fill (R1) circle[radius=3pt];
			\fill (R2) circle[radius=3pt];
			\fill (R3) circle[radius=3pt];
	
			\draw (R1) -- (L1) -- (R2);
			\draw (L2) -- (R3);
			\draw (L1) -- (R3);
			\addvmargin{3mm}
\end{tikzpicture}
\\
\hline
$\mu_{H_{1}}(H')$& $1$ & $6$ & $6$ & $3$ & $6$\\
\hline
$\lambda_{H_{1}}(H')$& $1$ & $-6$ & $6$ & $3$ & $6$\\
\multicolumn{6}{c}{}\\
$H'$& $H_{6}$ & $H_{7}$ & $H_{8}$ &$H_{9}$& $H_{10}$\\
\hline  
&
\begin{tikzpicture}[scale=\scaleFactor, baseline=\scaleFactor*(2cm-.2\baselineskip)]

			\coordinate (L1)  at (0, 3);
			\coordinate (L2)  at (0, 1);
			\coordinate (R1)  at (3, 4);
			\coordinate (R2)  at (3, 2);
			\coordinate (R3)  at (3, 0);
			
			\fill (L1) circle[radius=3pt];
			\fill (R1) circle[radius=3pt];
			\fill (R2) circle[radius=3pt];
			\fill (R3) circle[radius=3pt];
	
			\draw (R1) -- (L1) -- (R2);
			\draw (L1) -- (R3);
			\addvmargin{3mm}
\end{tikzpicture}
&
\begin{tikzpicture}[scale=\scaleFactor, baseline=\scaleFactor*(2cm-.2\baselineskip)]

			\coordinate (L1)  at (0, 3);
			\coordinate (L2)  at (0, 1);
			\coordinate (R1)  at (3, 4);
			\coordinate (R2)  at (3, 2);
			\coordinate (R3)  at (3, 0);
			
			\fill (L1) circle[radius=3pt];
			\fill (L2) circle[radius=3pt];
			\fill (R2) circle[radius=3pt];
			\fill (R3) circle[radius=3pt];
	
			\draw (R2) -- (L1) -- (R3) -- (L2);
			\addvmargin{3mm}
\end{tikzpicture}
&
\begin{tikzpicture}[scale=\scaleFactor, baseline=\scaleFactor*(2cm-.2\baselineskip)]

			\coordinate (L1)  at (0, 3);
			\coordinate (L2)  at (0, 1);
			\coordinate (R1)  at (3, 4);
			\coordinate (R2)  at (3, 2);
			\coordinate (R3)  at (3, 0);
			
			\fill (L1) circle[radius=3pt];
			\fill (R1) circle[radius=3pt];
			\fill (R2) circle[radius=3pt];
	
			\draw (R1) -- (L1) -- (R2);
			\addvmargin{3mm}
\end{tikzpicture}
&
\begin{tikzpicture}[scale=\scaleFactor, baseline=\scaleFactor*(2cm-.2\baselineskip)]

			\coordinate (L1)  at (0, 3);
			\coordinate (L2)  at (0, 1);
			\coordinate (R1)  at (3, 4);
			\coordinate (R2)  at (3, 2);
			\coordinate (R3)  at (3, 0);
			
			\fill (L1) circle[radius=3pt];
			\fill (R1) circle[radius=3pt];
	
			\draw (R1) -- (L1);
			\addvmargin{3mm}
\end{tikzpicture}
&
\begin{tikzpicture}[scale=\scaleFactor, baseline=\scaleFactor*(2cm-.2\baselineskip)]

			\coordinate (L1)  at (0, 3);
			\coordinate (L2)  at (0, 1);
			\coordinate (R1)  at (3, 4);
			\coordinate (R2)  at (3, 2);
			\coordinate (R3)  at (3, 0);
			
			\fill (L1) circle[radius=3pt];
			
			\addvmargin{3mm}
\end{tikzpicture}
\\
\hline
$\mu_{H_{1}}(H')$& $2$ & $12$ & $9$ & $6$ & $5$\\
\hline
$\lambda_{H_{1}}(H')$& $-2$ & $-12$ & $3$ & $0$ & $0$
\end{tabularx}
\caption{Decomposition of $H_1=K_{2,3}$}
\label{tab:K23decomp}
\end{table}

\end{document}